\renewcommand{\@seccntformat}[1]{}
\newtheorem*{postulate}{Postulate}
\newtheorem{theorem}{Theorem}
\newtheorem{lemma}{Lemma}
\begin{document}

\title{The hard problem and the measurement problem:\\a no-go theorem and potential consequences}
\author{Igor Salom}
\affiliation{Institute of Physics, Belgrade \\ University, Pregrevica 118, Zemun, Serbia }

\begin{abstract}
The ``measurement problem'' of quantum mechanics, and the ``hard problem'' of cognitive science are the most profound open problems of the two research fields, and certainly among the deepest of all unsettled conundrums in contemporary science in general. Occasionally, scientists from both fields have suggested some sort of interconnectedness of the two problems. Here we revisit the main motives behind such expectations and try to put them on more formal grounds. We argue not only that such a relation exists, but that it also bears strong implications both for the interpretations of quantum mechanics and for our understanding of consciousness. The paper consists of three parts. In the first part, we formulate a ``no-go-theorem'' stating that a brain, functioning solely on the principles of classical physics, cannot have any greater ability to induce subjective experience than a process of writing (printing) a certain sequence of digits. The goal is to show, with an attempt to mathematical rigor, why the physicalist standpoint based on classical physics is not likely to ever explain the phenomenon of consciousness -- justifying the tendency to look beyond the physics of the 19th century. In the second part, we aim to establish a clear relation, with a sort of correspondence mapping, between attitudes towards the hard problem and interpretations of quantum mechanics. Then we discuss these connections in the light of the no-go theorem, pointing out that the existence of subjective experience might differentiate between otherwise experimentally indistinguishable interpretations. Finally, the third part is an attempt to illustrate how quantum mechanics could take us closer to the solution of the hard problem and break the constraints set by the no-go theorem.
\end{abstract}

\maketitle

\section{Introduction}

Some critics rightfully argue that majority of attempts to relate the quantum measurement problem with the hard problem of consciousness follow a pretty shallow reasoning pattern: both problems are difficult and open, so thus they must be somehow related. Indeed, we are not aware of many serious scientific papers (clearly set apart from pseudoscience) which discuss the precise connection of the two problems. And yet, the existence of this connection is nevertheless beyond any dispute, at least formally. As this might sound to some as a (slightly at least) controversial statement, we will right away mention an example that clearly proves the assertion. In Everett's seminal paper \cite{EverettThesis, EverettThesis2} (doctorate thesis, to be more precise) where he introduces the ``relative state formalism'' -- the basis of now quite popular many world's interpretation -- he explicitly writes: ``As models for observers we can, if we wish, consider automatically functioning machines, possessing sensory apparata and coupled to recording devices capable of registering past sensory data and machine configurations". Therefore, by equating observer with an automaton, he decisively takes a very definite side in the ongoing debate in the contemporary cognitive science and also in AI research (presuming such automaton would possess ``strong AI"\footnote{Term coined by J. Searle for a machine with mind, i.e.\ consciousness.}), seemingly not even noticing that he is introducing a highly nontrivial statement. This cannot be easily brushed aside as he then completely relies on this assumption to infer, as he writes a bit later, that the wavefunction collapse thus appears merely as a sort of illusion in the \emph{subjective experience} of the observer.

We will return to this aspect of the many-worlds interpretation (MWI) when dealing more comprehensively with the relation between the hard problem and the interpretations of quantum mechanics (QM). Of course, it is not only the many-worlds interpretation that takes sides in the cognitive science debate, nor it is only the hard-line physicalist views that enjoy the support of the physicists. Quantum mechanics has been notorious for flirting with ideas on consciousness that were not generally considered as mainstream in the western school of thought. Since the advent of quantum mechanics, scientists were perplexed by the new role that the act of observation had seemingly gained in physics. Basic (Copenhagen) formulation of QM carried an inherent, though maybe unspoken, subjective character that was hard to avoid in the measurement postulate, palpable in the vagueness of (or lack of) the definitions of the ``measurement apparatus'' and the ``moment of measurement". The formalism allowed the Heisenberg cut -- that was dividing the objective realm of superpositions governed by deterministic Schrodinger equation and the subjective perception of a single well-defined stochastic measurement outcome -- to be positioned fluidly and arbitrarily along the von Neuman's chain. This prompted some of the founding fathers of QM to assume crucial interconnection between the phenomenon of subjective experience and the basic laws of nature: from Heisenberg's ``the concept of the objective reality of the elementary particles has thus evaporated...'' \cite{Heisenberg}, via Planck's ``I regard consciousness as fundamental, I regard matter as derivative from consciousness'' \cite{Planck}, to Schr\"odinger's ``mind has erected the objective outside world...out of its own stuff” (from his book of an appropriate title ``Mind and Matter'' \cite{Schrodinger}) and all the way to the most radical Wigner's stance that ``consciousness'' \emph{objectively} causes reduction of the wavepacket \cite{Wigner}.

The dilemma of whether the wavefunction collapse is objective or subjective (if existent at all) still lingers, being a divergence point of many interpretations of QM. The views that hold the collapse to be truly objective, in some experimentally measurable way, actually should not be seen as interpretations of QM, but as scientific proposals which predict new physics (we prefer to denote this type of ideas as ``mechanical collapse'' models). Namely, in whichever way the objective collapse is there envisioned, such a collapse event can be in principle enclosed in an isolated box and the following noted: the Schr\"odinger equation dictates that the evolution within the box must be linear and unitary, while the proposed objective wave-packet reduction demands the opposite. The expected violation of unitarity is, in principle, a measurable effect and thus requires both experimental confirmation (so far in all cases lacking) and a mathematical modification of the QM formalism that would replace and extend Schrodinger's equation in a consistent way. Very diverse ideas belong to this same category of objective (mechanical) collapse, some of them proposing certain relations to ``consciousness'' and some not at all: from purely physicalist GRW collapse models \cite{GRW}, over Penrose-Hameroff orchestrated objective reduction hypothesis \cite{OrchOR}, to Wigner's expectation that ``consciousness'' should somehow cause this objective collapse \cite{Wigner}. While all suffer from the same lack of a slightest experimental indication in their favor, those attempting to relate collapse with mental phenomena in an experimentally confirmable way are further hampered by difficulties to construct a consistent underlying mathematical theory (most evident in the Wigner's case: whose consciousness is entitled to cause collapse, would a mouse, or an amoeba, or single neuron suffice, and how to write down an analogon of the Schr\"odinger equation which would break down unitarity only for complex enough systems representing conscious observers?)

There is a fairly common misconception that relating QM with subjective, i.e.\ observer-dependent elements must lead in the direction of the previously mentioned ``consciousness causes objective collapse'' variations. On the contrary, interpretations that firmly stick to the existing QM formalism (predicting no new physics) are the most interesting from this aspect. The original Copenhagen interpretation, strictly adhering to both the postulate of unitary evolution and the measurement postulate, can be seen as a representative: the two postulates are not contradicting as we will discuss in more detail, but can be reconciled if we take the collapse to occur subjectively, whereas the perpetual unitary evolution is applicable from the ``third-person view". Apart from the explicitly ``subjectivist'' reading of Copenhagen interpretation, but similar to it, in the group of interpretations that fully acknowledge both postulates at the expense of introducing observer-dependent elements (such as subject/observer/system relative collapse) are also e.g.\ Wheeler's ``participatory universe'' concept \cite{Wheeler1, Wheeler2}, QBism \cite{QBism}, relational interpretation \cite{Rovelli} and Brukner's variant of the latter \cite{Brukner, Brukner2}. Some of these views advocate a relatively radical denial of the existence of objective (non-observer-relative) facts in general (Brukner's ``no facts of the world'' doctrine), but this is not a necessity \cite{ToTheRescue}. While all these views see collapse to some extent as subjective only, the collapse nevertheless does happen, and a subject exists always in a single well-defined state of reality (from his reference point at least). This is to be contrasted with the many-world interpretations (or, somewhat more generally, Everett's relative state interpretation) where the subject himself keeps being in a superposition and subjective perception of a single reality is a sort of illusion itself (the result of a certain brain state being correlated with only the corresponding state of the environment).

Therefore, from the speculations of the fathers of QM to the present-day debates on the enduring issues, attempts to understand quantum mechanics have often led out of the boundaries of the purely objective realm, touching upon the problems of cognition and subjective perception.

More recently, in a last few decades, we have witnessed a push from the opposite direction: researchers dealing with the ``hard problem of consciousness'' are increasingly looking in direction of quantum mechanics, feeling that some extension of the classical physics paradigm (either into quantum mechanics or into something else) is a prerequisite for further progress in their field. Namely, in spite of the advances in the understanding of brain functions and of the neural correlates of consciousness -- and maybe even more so due to these advances -- many scientists and philosophers find that there is an ``explanatory gap'' between understanding of how the brain functions and why does that functioning produce any subjective, first-person experience. Most notably, this problem with a long philosophical history has been, some 20 years ago, rephrased by David Chalmers as the ``hard problem of consciousness": even if we understand all brain functions and explain in detail how the physical processes inside it produce behavior, there remains a puzzle of ``why should physical processing give rise to a rich inner life at all?'' \cite{Chalmers95}. Why are not we ``philosophical zombies", biological mechanisms performing all human functions, but devoid of any subjective experience? (Throughout the text we use notions ``subjective experience'' and ``consciousness'' interchangeably -- possible subtle differences are highly definition-dependent and are of no interest here.)

In attempts to solve this biggest problem of cognitive science (as proponents of the ``hard problem'' see it), many researchers in the area have sought possible answers in physics, e.g.\ \cite{Eccles, Lockwood, Chalmers96, Hoffman, CognitiveToQuantumExample1}. Already in his seminal paper \cite{Chalmers95} and the follow up \cite{Chalmers97}, Chalmers discusses what QM potentially has to offer to elucidate or at least ameliorate the ``hard problem", rightfully pointing out that not all interpretations of QM are born equal in this regard. This remark leads to the following question: if explaining the phenomenon of subjective experience really requires a departure from classical physics, can the ability to potentially explain this phenomenon be used as a further probe into different interpretations of QM -- and tell apart even those which yield absolutely identical objective-experimental predictions? In other words, can the ``subjective-experimental'' fact that we possess the first-person experience succeed where the ``objective-experimental'' facts have failed: to differentiate between mathematically identical interpretations of QM?

It should be noted that some of the researchers working on the hard problem are seeking extensions of classical physics even farther than QM. Ideas of panpsychism, that attribute to matter another fundamental property often called proto-consciousness, are gaining popularity of recent \cite{Koch, Tononi, PanpsychismOverview}. As dubious from the perspective of physics as they are, these research directions provide further incentive to seek arguments that QM, properly understood, is alone sufficient to account for the existence of subjective experience.

However, not everybody from the cognitive side of the fence agrees that any departure from cold minded rational materialism of classical physics is needed at all. Often, as prominently exemplified by Daniel Dennett's stance, there is already disagreement on whether at all there is any ``hard problem'' that awaits for explanation \cite{Dennett, TypeA2, TypeA3}. For, if neuroscience can entirely explain why we behave in certain ways, why we report things as ``I see red", ``I subjectively feel that I exist", or ``I am puzzled by the hard problem of consciousness'' and if we can explain all that without need to invoke any ``consciousness'' thing, then it must be wrong to artificially postulate existence of such ``consciousness". Moreover, if we do not need any consciousness to explain any objective phenomenon or experiment, then acknowledging its existence would be tantamount to acknowledging as real something that objectively does not exist -- and no wonder that such a mistake would lead to a ``hard problem". To proponents of such views, that something which we subjectively experience and which, as it seems to us, instigates us to utter ``I feel that I am conscious'' is only some sort of illusion. In addition, there are also somewhat different purely materialist positions that acknowledge the phenomenon of subjective experience as real and yet hold that this first-person perspective is only a byproduct of the functioning of the material brain, a sort of ``emergent phenomenon'' \cite{TypeB1, TypeB2, TypeB3, TypeB4}. This emergence is then usually seen as a nontrivial (sometimes even irreducible) consequence of the extreme complexity and peculiar structure of the brain organ. Often, but not always, it is implied that a similar ``subjective experience'' phenomenon should accompany information processing of the comparable scale and type, even if performed in a different medium (e.g.\ in a computer, by proponents of the ``strong AI'' idea).

Proponents of these strictly physicalist views, by a rule, maintain that classical physics seems to be perfectly sufficient to explain the phenomenology of brain processes and thus also our behavior and what we call ``consciousness". Classical physics here of course also includes chemistry of the processes in brain, but our emphasis on ``classical'' implies that neither explicit quantum effects (e.g.\ interference, entanglement over distance), nor, more importantly, underlying quantum principles (indeterminism, indefiniteness of system properties manifest through superpositions) should play any role in explaining of the subjective experience. The good old XIX century mechanistic and deterministic picture of reality should suffice since the direct chain of causes and effects can be followed through a brain and result in complete explanation of its outputs given the corresponding sensory inputs and memory content. Not only that neuroscience supports the view that the inner workings of the brain can be fully explained in a deterministic causal manner (without resorting to QM), but also progress in AI research indicates that deterministic classical computers should in principle be able to replicate computational powers of the brain. So, if the paradigm of classical physics is indeed sufficient to explain all objective brain functions, is it really necessary to extend this paradigm in order to encompass also this murky objectively-non-definable phenomenon of subjective experience? Is not it more rational to expect that the so-far-eluding subjective phenomenon will be also somehow explained within the basic mechanically-deterministic framework? And even if we reach the opposite conclusion and decide as necessary to extend the paradigm on which physics rests, how could that help?

Of course, essential is the main question of whether subjective experience can be explained solely based on the ontology of classical physics -- as long as this direction seems fairly plausible, going further into discussions of possible paradigm extensions (in cognitive science) is hard to justify. However, arguments are abundant both pro and contra, though a reader going through discourses on the subject can sometimes get an impression that proponents of opposing views either speak in different languages or really have different personal levels of subjective experience -- so often reasons that seem too obvious to one side are quite unintelligible to the other. It is not surprising that, on this extremely abstract, by-definition-non-objective topic where even the central concept of ``consciousness'' eludes strict definition, a good part of the arguments reduces either to hand waving or to pointless analogies. Several inspiring thought experiments/mental abstractions have become very influential (e.g.\ Chinese room argument \cite{ChineseRoom}, Mary the color-blind scientist \cite{ColorblindMary}, philosophical zombies \cite{PhilosophicalZombies}), but in the end they at most served as good intuition pumps, not leading to any convergence of the opposing opinions.

We believe that some progress can be made by bringing the discussion to a more formal level. In spite that the topic here obviously does not belong to mathematics, any related analysis certainly employs elements of logic, which motivated us to formulate our statements here in the form of ``Theorems'' (regardless of the mathematical jargon, the text does not contain formulas and is fairly easy to follow). In the next section, we formulate and prove our main ``no-go'' theorem.  The theorem is the final step in a succession of a number of previously defined theorems, each derived from the predecessors. This no-go theorem claims that a human brain, functioning on principles of classical physics, cannot have any greater ability to induce subjective experience than a process of writing (printing) a certain sequence of digits (printing can be done in an arbitrary way and on the arbitrary surface).\footnote{Formaly precise statement is postponed until the next section.}

This no-go theorem per se does not directly say anything about whether the subjective experience can or can not be explained based on processes governed by classical physics. We do not think that such formal conclusion can be ever given, simply because there is no generally accepted definition of consciousness (and there cannot be, as some find it to be a mere illusion, while others find it so real to pose the hard problem). Instead, we try to avoid the trap of defining the consciousness by confining our reasoning only to establishing qualitative ``equivalence classes'' of subjective experiences. Surprisingly, as we show below, this can be done even in the absence of a precise definition. In this way, by building equivalences step-by-step, we finally conclude that a human brain, active for a certain period, must produce the same or similar type of subjective experience as the process of writing of a certain huge number.

Once we establish this equivalence, we will introduce the notion of mapping between the evolution of the system (functioning of the brain or writing of the digits) and the content of the subjective experience (i.e.\ the scene and qualia that are being experienced). If we assume that this mapping exists and is nontrivial -- in other words, that some subjective experience (with a clear meaning) emerges as the consequence of the dynamics of the physical system -- then, due to the established equivalence, we will face a grave problem to explain where, when and why exactly this one mapping occurs and is somehow experienced while writing (or chanting) the numbers. Alternatively, to avoid logical inconsistencies and openly magical connotation, we may assume a Dennett-like position and claim that there is no such mapping at all and that the dynamics of the system (i.e.\ behavior) is all there is. In other words, that the equivalence stated in the no-go theorem is of form 0=0, i.e.\ that neither brain nor writing of a number can induce any experience and that whole consciousness thing is an illusion. But the translation from a human brain to a number being written helps to clarify, as we argue below, that the zero in this case, i.e.\ the absence of emergence of anything, must be a true nothing and cannot be even an illusion. Namely, we will clarify that denying any (nontrivial) mapping implies that there is no any meaning of the experience, nothing to specify the particular (illusionary or not) experienced content or the qualia, and we can no more speak even of an illusion, simply because it cannot be said ``what is that illusion of". (After that, it is up to the reader to perform a subjective experiment of checking if possesses or not any sort of inner experience, even an illusion of, and to compare the results with the hypothesis.)

Of course, one can also try to logically dispute the theorem. Contrary to the theorem, one can try to argue that the premise of classical physics can be maintained while still insisting that activity of a human brain produces subjective experience whereas writing of any number does not give rise to any, let alone similar experience. In this case, it must be precisely explained and elaborated which logical step in the no-go derivation was unwarranted, and why. By splitting the main statement into a number of steps and by assuming a math-like approach to discussion we hope to keep counterarguments and pondering about the theorem to the point, forcing any analysis to pinpoint the exact disputed step and detail (including the obvious need for a clarification of why is that point relevant and acknowledgment of the logical consequences of taking a different turn at that point).\footnote{It is not likely than anyone would even consider hand-waving or an analogy as a counterargument to a theorem.} If reasonable, such a counterargument could shed substantial light on what is essential for the emergence of the consciousness -- history of science is full of examples where finding holes in no-go theorems has provided new directions for progress. (Interestingly, some preliminary discussions about the theorem have shown that even people who disagree about the final conclusion of the no-go theorem are looking for logical ways out in very different directions).

Finally, the logical option that we find most plausible is that the basic premise of the sufficiency of classical physics is simply wrong. The rest of the paper is devoted to the implications of such a conclusion. Pursuing the hope that, unlike classical physics, QM could be sufficient, in the second part we investigate which interpretations of QM are likely to sustain the test of the no-go theorem. The underlying idea is to extend the corpus of the empirical data a good theory should take into account in a way that it includes not only external, third-person perspective experiments, but also the most elementary first-person empirical data -- i.e.\ the fact that we possess some sort of subjective experience. While it would require full resolution of the hard problem to truly establish which interpretations positively can account for the consciousness in a consistent way, in that section we will concentrate on a more modest problem: which interpretations have potential merely to defy the conclusions of the no-go theorem, when the latter is generalized to QM. (We will also restrict our analysis only to a few more common interpretations of QM.)

Important in this regard will be the recognition that the ``measurement problem'' is essentially a direct rewriting of the ``hard problem'' in the language of QM. Namely, those cognitive-science positions that maintain that the third-person perspective is sufficient for description and understanding of the entire universe, including understanding of the (illusion of) subjective perspective, naturally translate into no-collapse (many-worlds) interpretation(s), where objective unitary evolution of the complete wavefunction is taken to provide the full picture of reality. On the other hand, cognitive-science views sceptical of the prospect that the phenomenon of subjective experience can be derived from the third-person explanation of the world, naturally map into QM interpretations that hold linear unitary evolution governed by Schr\"odinger's equation as insufficient to explain our subjective perception of the single measurement outcome (i.e.\ those that find the collapse postulate also necessary). Investigation of this connection will become the central issue of Part II. However, once this mapping is understood, it is not surprising that interpretations of QM inherit, at least partially, the same consequences of the no-go theorem as the corresponding ``interpretations'' of the hard problem.

Finally, in the Part III, we will try to address the more difficult question: whether and how the introduction of quantum mechanic can truly change the conclusions of the no-go theorem and at least leave some room for a plausible explanation of the subjective experience. The considerations in this section will be more speculative, aimed to demonstrate that QM, with a correct reading, has this potential. At this point, it is only worthwhile to stress that neither ``consciousness causes objective collapse'' (Wigner-like) approaches nor invoking explicit quantum phenomena in the high-temperature noise dominated brain (along Penrose-Hameroff lines) is necessary for this. It is the mere turning upside-down of the ontology of classical physics, and recognition of the subjective experience as a primary non-derivable entity while preserving the laws of QM unchanged that may provide the ground to truly understand both the measurement and the hard problem. Nevertheless, as we will see, even after such paradigm-shift the essence of the no-go theorem still largely constrains the set of available hypotheses.

\newpage
\section{Part I: the no-go theorem}

In this section, we explore in detail and with logical rigor the assumption that the concepts of classical physics are sufficient to explain the phenomenon of subjective experience. This assumption is commonly taken for granted in the neuroscientific brain research, and thus we label our starting position as the ``neuroscience postulate'' (NP):

\begin{postulate}
Neuroscience postulate (NP). Classical physics, based on ideas of mechanical-determinism and ontology of physicalism, is sufficient to explain the functioning of the brain and the emergence of consciousness.
\end{postulate}

Of course, the classical chemistry is included above as a part of classical physics (the latter is understood in broad terms). By ontology of physicalism, we imply the paradigm that only matter (including fields) and material things are considered to be real. In addition, we assume by the postulate that matter follows certain causally closed deterministic dynamics, the details of which are unimportant. In particular, the postulate excludes dualistic and idealistic ontologies, as well as the essential traits of quantum mechanics: the matter properties are here always well defined and observer-independent. However, we will also take into consideration physicalist views on consciousness which incorporate (explicitly or in-between the lines) mapping from physical states of the brain to the ``emerged'' conscious states -- in spite that such views might be sometimes classified as property-dualistic.

Note that we deliberately do not attempt to strictly define the word ``consciousness". It can be either, following ideas of D. Dennett, that ``illusion'' which accompanies certain brain responses and behavior, or whatever complex phenomenon that somehow emerges during the functioning of the brain in what is commonly seen as a conscious state. Or it could be any other (more or less common) understanding of consciousness which complies with the rest of the postulate presumptions (but it cannot be anything dualistic in common sense, like qualitative states of non-material spirit or mind). In this sense, any time we mention ``consciousness'' in the context of this analysis, one is free to insert words ``an illusion of'' beforehand.

Now we proceed to the formal analysis of a certain chain of logical consequences of the Postulate, by formulating a sequence of Theorems which end in the already announced ``no-go'' theorem. Here, in the main text, we will describe only the outlines of the ``proofs"\footnote{In spite that the subject here is not of mathematical nature, and that the weight of the logical arguments given in favor of some statement here cannot be compared with the strength of the ``proof'' in the mathematically strict sense, we nevertheless prefer and dare to use this word, albeit with quotation marks.}, whereas more detailed arguments are carefully considered in the appendix. (Reader is encouraged to skip the detailed ``proofs'' in the appendix of those theorems that (s)he personally finds self-evident -- in this way we hope to facilitate perusal of the otherwise maybe overly lengthy paper).

The first question we have to deal with is whether this ability to generate subjective experience (or an illusion of it), which we assign to a human brain, can be granted to anything else apart from the brain. In particular, whether we could consider as conscious another being, behaving in all (relevant) respects indistinguishable from a human (e.g.\ demonstrating emotions, reporting self-awareness, claiming to experience qualia, even being capable to autonomously reach the ``hard problem'' and express its puzzlement over it \cite{ConsciousnessTest}), but whose physiology of the cognitive organ (brain or some equivalent) is possibly different? Apart from being of philosophical significance, this question has important ethical, as well as potential legal (in the wake of the AI era) implications.

We hold that the answer to such a question must be positive, at least under the presumption of the NP. Namely, we must first note that we have no objective means to infer the existence of anyone's subjective experience apart from our own since, already by its name, we generally here denote a subjective phenomenon. Even assuming consciousness of other human beings formally requires a ``leap of faith", but one that in general must be made to avoid the blind alley of solipsism (the so called ``problem of other minds'' \cite{OtherMinds}). Besides, solipsism is certainly not an option allowed by the Neuroscience postulate.

Within the paradigm of physicalism (and thus also when deprived of the solipsism option), it is very hard to motivate why the attribute of being conscious should be given to another human, but not to another agent autonomously behaving in a qualitatively indistinguishable way. For example, we may imagine a perfect human-replica android, e.g.\ like the popular Data from Star Trek science fiction series and an identically looking human, e.g.\ the actor Brent Spiner, next to each other (unlike in the original Star Trek plot, we shall assume that this Data is also programmed to show emotions just like a human). We have absolutely no means to verify if Brent Spiner possesses any inner experience, but he claims to be conscious, behaves so in every way, and we agree that he must be conscious. We have no more and no less chance to objectively confirm or deny that Data has subjective experience, and he behaves in every way like the human Spiner, reporting his conscious/emotional states and reacting to stimuli in a human-like fashion. Attributing consciousness to one while not to the other, in spite of admitting that there is no objective way in which such an assertion could be tested, is to actually acknowledge the real existence of objectively not measurable (yet well defined) properties. While this would be in principle logically acceptable e.g.\ in a dualistic setup (where, theoretically, one could even arbitrarily assign that certain agents have soul/spirit attached to them while some other are philosophical zombies), such assignment is incompatible with physicalist presumption implied by NP. (Formally, one could misuse our freedom in definition of consciousness to define it as ``possession of human-like nervous system'' but such a definition would completely miss the common meaning of the word and fail to grasp the essence of the ``subjective experience'' problem.) Therefore, we formulate the following theorem:

\begin{theorem}
\label{Theorem1}
(Behavioral theorem). Under the premise of NP, two agents behaving effectively in the same way when it comes to reacting to external stimuli, reporting about internal conscious/emotional states, having the ability to autonomously introspect and recognize its subjective experience and engage into related discussions -- must possess the same qualitative level of subjective experience.
\end{theorem}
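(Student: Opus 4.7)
The plan is to argue by contradiction. Suppose two agents $A$ and $B$ meet the behavioral criteria listed (reactions to stimuli, self-reports, autonomous introspection, engagement with the hard problem) in effectively indistinguishable ways, yet possess qualitatively different levels of subjective experience. I would then show that maintaining this difference forces one to violate NP.

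First I would invoke the problem of other minds, already flagged in the text: subjective experience is by definition not directly accessible from the outside, and every attribution of consciousness to anyone other than oneself already rests on a behavioral/functional inference. Since NP rules out solipsism, this inference must be accepted as legitimate at least once -- paradigmatically, from oneself to other humans. The next step is to note that, under NP, the \emph{only} grounds on which this inference can be principled rather than arbitrary are grounds formulated in terms of classical physical facts: either (i) behavioral/functional equivalence, or (ii) some specific physical-substrate feature that is claimed to do the consciousness-generating work.

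I would then close both escape routes. Route (i) immediately gives the theorem, since by hypothesis $A$ and $B$ are behaviorally/functionally equivalent. Route (ii) must be handled with more care: any substrate feature, within the classical causally closed dynamics postulated by NP, can only matter through the objective behavior it produces; if $A$ and $B$ produce effectively the same behavior (including introspective reports and discussion of qualia), then any consciousness-relevant difference attributed to the substrate itself is by construction objectively undetectable in the output and thus amounts to a non-material, non-measurable yet real property -- exactly the kind of dualistic residue that NP excludes. The ``misuse'' route of defining consciousness as, e.g., ``possessing a human nervous system'' would then be dismissed as changing the subject: it would leave the phenomenon that motivated the hard problem (the \emph{subjective} side) wholly untouched, so the theorem still holds for that phenomenon under any definition that respects its usual meaning and the latitude granted in the discussion of the NP.

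The main obstacle I expect is calibrating the phrase ``effectively in the same way.'' If read too strictly, it collapses into microphysical identity and the theorem is vacuous; if read too loosely, a critic can always point to some behavioral discrepancy and deny the premise. I would therefore explicitly fix the reading at the level of input--output and verbal/introspective behavior over all scrutable interactions -- strong enough to make the android-versus-human scenario a genuine instance, weak enough that the conclusion has nontrivial content. A secondary difficulty is guarding against the objection that even granting physicalism, different physical realizations might correspond to different ``emergent'' experiences; the reply, as sketched above, is that under NP such an emergent property must supervene on objective physical dynamics, and once those dynamics agree at the behavioral level any leftover difference is precisely the objectively-undetectable surplus that NP forbids.
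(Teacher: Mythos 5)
Your proposal is correct and follows essentially the same route as the paper's own proof: the key move in both is that, under NP's causally closed physicalism, any consciousness-relevant difference between behaviorally indistinguishable agents would have to be an objectively unmeasurable yet real property (a dualistic residue NP excludes), with the definitional escape of equating consciousness with a substrate feature dismissed as changing the subject. The paper's appendix merely adds supplementary material you omit -- an argument that non-behavioral criteria are ``cognitively unstable'' even outside NP, plus ethical illustrations -- but the core NP-conditional argument is the one you give.
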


By saying ``the same qualitative level'' we leave the freedom that the two agents may experience qualia in (somewhat) different ways. These details are not essential here. Simply stated, the crucial conclusion of the theorem is that of the two such agents either both have subjective experience or none has. Note also that the theorem does not state the converse: that the two agents with different behaviors cannot have the same subjective experience (this need not be so and anyhow is not relevant for our further reasoning). Another remark is that behavior should be here and throughout the text taken in a broader sense, which includes not only obviously visible behaviour, but also responses to stimuli in general: in a hypothetical case of a patient with impaired motoric abilities who could process sensory data but would require NMR scan of the brain to somehow read out responses (e.g.\ yes or no answers to our verbal questions), these measurable brain responses would also then fit into our definition of behavior.

While we have provided, above, some basic arguments for the Theorem 1, in the Appendix we give a more elaborate ``proof", arguing also that its conclusion is almost unavoidable even in a general setting (i.e.\ without relying on the Neuroscience postulate and, in particular, for a certain class of property dualistic hypotheses). Besides, we also note that, to our knowledge, there are hardly any cognitive scientists or philosophers that adhere to the neuroscience postulate and yet deny the conclusion of the Theorem 1.\footnote{A likely exception is J. Searle, though even he seems to occasionally imply necessity of certain behavioral differences, as, for example, in his review \cite{SearleOnRebellion} of N.\ Bostrom's book ``Superintelligence: Paths, Dangers, Strategies''. There he concludes that there is no risk of AI rebellion since AI is unable in principle to ``engage in motivated behavior'' -- by which he implicitly but obviously admits unavoidable behavioral differences.}

The statement of the Theorem 1, combined with the mechanical-deterministic assumption for the brain functioning, has an immediate consequence. Namely, causally closed mechanism of the brain organ can be in principle simulated on a powerful enough classical computer to arbitrary precision. This means that given the current brain state and the sensory inputs, brain responses can be, in principle at least, calculated on a standard type of computer. In turn, this means that a computer can, in theory, predict the same behavior that would be the result of the brain functioning. Given the appropriate artificial body, and fed with sensory perceptions that the body receives, a powerful classical computer (based on silicone or any computationally equivalent technology) can thus assume the role of the brain and control an agent that is behaviorally indistinguishable from a human, in the context of Theorem 1. This leads to:

\begin{theorem}
Under the NP premise, a sufficiently advanced android controlled by a classical type of computer may possess the same level of subjective experience as a human.
\end{theorem}

In the Appendix, we give a more formal ``proof'' of the theorem, and discuss some generalizations (e.g.\ to include also stochastic processes in the brain).

We ``derive'' also the following two theorems that will help us in the later stages:

\begin{theorem}
(On temporal correlation): Subjective experiences (or illusions of) that accompany dynamical evolution of a certain system (e.g.\ brain) are synchronized with that evolution, and do not depend on evolution of the system either before or after the considered period of the dynamical evolution (apart from due to the contents of memory).
\end{theorem}

In other words, if we consider a brain activity within a given 5 minutes period, there will be (assuming the brain is conscious) the corresponding subjective experience unfolding during that same time, whose content and quality do not depend either on the past or on the future of that brain, apart from, possibly, due to recollection of previous events. For example, if the physical brain in that period receives and processes stimuli that correspond to pleasure (or pain), it is during that same time, i.e.\ simultaneously, that the pleasure (or pain) is subjectively experienced (or illusions of). Also, smashing the brain with a hammer after these 5 minutes or not, will not influence the prior experience. While being a simple corollary of the Neuroscience postulate and the behavioral theorem, and something that is generally taken for granted, it still deserves to be mentioned. Since subjective experience is not objectively measurable, without the physicalist NP premise this synchronization is an additional assumption (which, as argued in the paper \cite{ToTheRescue}, is not necessarily warranted in a more general, e.g.\ QM setting).

\begin{theorem}
(Repetition theorem): If a physical system that gives rise to consciousness undergoes identical dynamical evolution more than once (fed by the identical stimuli and starting from the same initial state each time), each time the same subjective experience arises. Alternatively, if another system identical in every detail to the first one, undergoes the same dynamics, both evolutions must be accompanied by identical subjective experiences.
\end{theorem}

This seems to be unavoidable: either as a consequence of the previous theorem or noting that if the consciousness is a consequence of system dynamics, the fact that a period of identical dynamics in the same system state has already occurred before cannot preclude or influence the emergence of the same experience again. Same goes if we consider a different system, but identical with the first in all relevant properties. Nevertheless, in the Appendix, we will also address some non-standard views on this line of reasoning.

Now we proceed to establish further equivalence relations between different systems, proceeding in small steps.

We take as a starting point an android satisfying presumptions of Theorem 1, named Data (as we already discussed, unlike the Star Track Data, this one is indistinguishable from the humans also when it comes to expressing emotions). By Theorem 1, Data belongs to the same consciousness class as humans. From Theorem 2 it follows that Data's CPU can be, in principle, based on some standard, e.g.\ silicone, computing. This CPU receives inputs from Data's sensors and body parts, does the calculation and feeds back the results to various actuators in Data's body (speech synthesizer, artificial muscles, etc).

\begin{lemma}
The physical location of the CPU controlling the behavior is inessential, as long as its function is not diminished.
\end{lemma}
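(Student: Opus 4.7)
The plan is to reduce the lemma immediately to the Behavioral Theorem (Theorem 1). First I would introduce an auxiliary android, call it Data$'$, constructed so that every mechanical and sensory component of its body is identical to Data's, but its CPU has been physically relocated -- say, mounted in a cabinet in the next room -- and is coupled to the body's sensors and actuators through wires (or a wireless link) engineered to transmit exactly the same signals with negligible latency and negligible added noise. The clause ``as long as its function is not diminished'' is precisely the hypothesis that this relocation preserves the signal-level input/output relation of the processor; any configuration failing that hypothesis lies outside the scope of the lemma and needs no treatment.

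Next I would observe that under this construction Data$'$ is, by design, behaviorally indistinguishable from Data in the sense required by Theorem 1: it reacts to the same external stimuli in the same way, issues the same verbal and emotional reports, introspects and engages in consciousness-related discussion identically. Theorem 1 then forces Data and Data$'$ into the same qualitative consciousness class, which is exactly the content of the lemma. Theorem 3 (temporal correlation) may be invoked in passing to rule out the worry that a small propagation delay along the cables could retroactively matter: the relevant subjective experience, if any, is synchronous with the CPU's dynamical activity and is unaffected by where in space that activity happens to unfold, provided the functional coupling to the body is preserved.

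The main obstacle I anticipate is a holistic or embodied-cognition objection: one might insist that spatial proximity of the processor to the body (or some gestalt of the physical arrangement) is itself a consciousness-relevant physical fact, so two functionally equivalent systems could nevertheless differ in subjective experience. I would rebut this exactly as in the motivation for Theorem 1: under NP, any such location-dependent ingredient of consciousness that leaves no behavioral trace would be an objectively non-measurable yet well-defined property, which is incompatible with the physicalist ontology assumed in the Neuroscience Postulate. Hence, within NP, location cannot be a hidden extra variable of consciousness beyond what the behavior already fixes.

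I expect the formal work to be essentially the application of Theorem 1 to the Data/Data$'$ pair; the genuinely delicate step, and the one to argue carefully in the Appendix, is making explicit that ``function not diminished'' is strong enough to guarantee full behavioral equivalence (so that Theorem 1 applies), while simultaneously being weak enough to remain a natural and non-circular premise -- i.e.\ not secretly smuggling in the conclusion by declaring that ``function'' already includes whatever generates consciousness.
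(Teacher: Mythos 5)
Your proposal is correct and takes essentially the same route as the paper: the paper's own argument is simply that the CPU can sit in the android's head or in a nearby computer linked by a WiFi-equivalent, and since this leaves behavior unchanged, Theorem 1 guarantees the same level of sentience. Your additional scaffolding (the Data/Data$'$ construction, the appeal to Theorem 3, and the rebuttal of the embodiment objection) elaborates but does not alter that one-line reduction.
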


This hardly requires special proving: it might be in the android's head, but also it could be localized in some computer nearby which is (e.g.\ by some WiFi equivalent) connected to to the rest of the android. According to Theorem 1, this should not affect its level of sentience, as it would not influence its behavior.

Next, we take a slightly bigger step:

\begin{theorem}
(Simulation theorem): Simulated android (or androids), existing only in a virtual reality realm simulated on a classical computer, can be, in principle, programmed to be as conscious as humans.
\end{theorem}

The outline of the proof is as follows. Just as one android can be remotely controlled from an external CPU and be as conscious as a human, so can be two or more androids controlled from the same computer (or a powerful enough computer cluster). Furthermore, sensory input arriving at the computer that governs androids' behavior is certainly digital from some point on, and thus, in principle, can be simulated, i.e.\ replaced by computed input that would correspond to a computer-generated virtual surrounding. Also, the behavioral feedback that android bodies provide to the environment (by their motions and sounds they produce) can be taken into account when computing the evolution of the virtual surroundings. It can be also imagined that the androids, until some initial moment $t_0$, possess actual physical remotely controlled bodies and receive environment information from real sensors in their bodies, but that after that initial moment real sensory input is replaced by a simulated one (virtual surrounding is generated to closely mimic the real surrounding at $t_0$). Even if the androids would be able to tell the difference after some time (e.g.\ by noting some artifacts of the simulation, due to its finite precision), this switch of the source of sensory input has no way to immediately influence their level of consciousness (at least under the Neuroscience postulate). A somewhat more detailed argument can be found in the Appendix.

Therefore, we must conclude that under NP, also android(s) entirely simulated in a virtual reality can have qualitatively the same level of consciousness as human beings. In spite that there are no more any physical entities around, apart of some information being processed inside the computer, the subjective experiences of the simulated androids would be no less real (or no lesser kind of illusion) than our subjective experiences in flesh and blood. To visualize events in the simulated realm, we could attach a display to the computer and have it render some parts of the virtual reality on it. In this way we can observe the android(s) ``inside'' behaving as conscious beings, and still reporting their subjective states. Of course, presence or absence of the display cannot influence subjective experiences of the androids (according to NP our observation is certainly inessential).

We note that a conclusion that simulated beings in virtual reality can be conscious is nothing new -- moreover, it is part and parcel of all variations of the ``simulation hypothesis'' \cite{SimulationHypothesis} and alike. But we had to go through this reasoning carefully, to show that under the presumption of Neuroscience postulate these conclusions basically turn out to be a logical inevitability, rather than any arbitrary belief.

The ``hypothesis'' commonly conjectured in this context is that we are actually living in one such simulation. But, unlike most of the proponents of that idea, we are next to consider problematic aspects of such and similar hypotheses. Namely, as we did the job of isolating the agents from the outside world, deeper problems already begin to get more palpable, at least a bit.

For the sake of concreteness, let us assume that inside such simulation there is the android Data, existing and living through his subjective experiences. Let him be at first content, quite tranquil, pondering about something pleasant while calmly wandering through the simulated environment. All of a sudden, while passing through a door in the virtual reality, the door unexpectedly gets shut (e.g.\ due to wind) and his finger gets severely slammed. Momentarily, the feeling of content is replaced by the sudden shock and shortly replaced again by the feeling of excruciating pain in the finger. We remind that this Data was programmed to have emotions and feel and react to pain in the same way as a human -- in particular, his cognitive units might work by simulating the functioning of a human brain which is exposed to the same stimuli. If required for having experience of pain, we may simulate also the reaction of the entire human nervous system when the finger is exposed to extreme mechanical pressure. In this way, we ensure identical behavioral reactions, which, according to Theorem 1 and under NP assumption means that also some corresponding subjective experiences must be (qualitatively) there. (After all, if for any reason someone finds relating an android and pain too stretched, the same argument could be made by considering any other type of subjective experience/quale, the choice of pain makes the conclusions only slightly more intuitive.)

Therefore, there inside the computer, must ``somewhere be'' that feeling of content, of shock and that familiar feeling of the pain in the finger. In a similar sense as when any human experiences these feelings -- if the latter is an illusion, then so is the former. Even if the qualia are not experienced identically, there still must be a couple of distinct subjective experiences (or illusions of), emerging in a succession. Besides, there ``inside'' is someone that has experienced the sudden closing of the door, who saw them close, who has subjectively perceived that. And yet, there inside the computer, we objectively find nothing but sequences of binary zeros and ones, evolving and changing according to certain rules. How comes that this familiar feeling of pain can appear in the processing of all those zeroes and ones within that computer? And, even leaving the feelings (and qualia in general) aside, do these zeros and ones really have interpretation per se, do they really have that unique meaning of a human-like android Data that has just hurt his finger? Where is objectively that door inside? When there is no longer our interpretation of the bits in the computer as this virtual environment with Data (which disappears when we remove the monitor), does this information inside still, without us, external observers, really and objectively has this same meaning, this and only this interpretation? When the monitor is on, it is us who endow with the meaning the endless streams of zeroes and ones in the computer registers: the conversion of the stream of data into the pixels on the monitor is tailored to translate the information to be readable to us, and even there, that pattern of differently colored pixels on the screen, does it have any meaning per se? Does it really have the distinctive meaning of the Data feeling pain, and the door being shut, even without us who correlate colors of the pixels on the monitor to form the picture of an android in our heads? One should take care not to unintentionally infiltrate the syntax of the bits with the meaning that we ourselves provide to it, if we are to investigate the possibility of ``self-emergent'' meaning. And furthermore, even if we accept that these zeros and ones have the unique meaning of Data android in his virtual world, is that enough that these zeros and ones become alive and feel the hurting finger? A book has meaning (at least the one that we give to the written symbols), but we do not usually take that the characters in the plot really experience the narrated events, neither as the book is written, nor when it is read.

Intuitively, we get a glimpse of the problems, but our intention is not to rely on intuition. Instead, we proceed in the manner as formal as possible, to make the problems more obvious and harder to deny.

Indeed, at this point, it is not at all yet clear if there is an actual problem, at least if there is an unsolvable one. The gap that some believe to exist between the binary digits and the subjective experiences might be just a consequence of wrong intuition and over-simplification. Unlike the static symbols in a book, we here have a dynamical computation process of immense complexity, ongoing on an advanced supercomputer of huge capacity and extreme speed. There is a widespread opinion that the combination of such complexity and delicate structure can bring about the first-person experience as well.

Thus, we proceed further in establishing the chain of equivalences. We start by noting that the supercomputer performing the computation need not be fast at all, only of a sufficient memory capacity:

\begin{lemma}
The computer running a closed simulation that contains conscious agents can be arbitrarily slow.
\end{lemma}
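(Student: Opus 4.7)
The plan is to chain the Simulation theorem with the Temporal correlation theorem and a closure argument, exploiting the fact that a closed simulation has no external reference against which a ``speed of computation'' could even be defined. First I would fix a concrete setup: by the Simulation theorem, take a closed virtual environment containing an android satisfying the premises of Theorem 1, and assume the host machine is a standard synchronous digital computer driven by a clock. The object to be varied is the rate of that clock, i.e.\ the wall-clock duration between successive global machine states; the sequence of machine states itself is held fixed.

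Next I would argue that slowing the clock (for example, by replacing the crystal, interleaving arbitrary idle cycles, or in the extreme case stepping the CPU manually once per day) leaves the sequence of bit-configurations traversed by the computation absolutely unchanged, and hence leaves the sequence of states of the simulated agent unchanged. The agent has no sensor whose input is calibrated to the host's wall clock; every quantity it can ever access, including its own ``internal clock,'' is a function computed from bits inside the simulation. By the Behavioral theorem applied \emph{inside} the simulated world, the agent's reactions, introspective reports, and autonomous discussions of its own subjective states are identical in both the fast and the slow run. By the Temporal correlation theorem, any subjective experience that accompanies the run is indexed to the dynamical steps, not to the external time coordinate over which those steps are stretched. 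Hence if the fast simulation is conscious (as guaranteed by the Simulation theorem), so is the slow one.

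To make the step to \emph{arbitrary} slowness airtight, I would push the argument by continuity, or rather by a cut-free dichotomy: fix any candidate threshold slowdown factor, and observe that the Repetition theorem forbids the consciousness from toggling on or off in response to anything that is not a genuine change in the dynamical evolution of the system. Since clock rate is not such a change (the evolution, as a sequence of states, is literally the same), no finite slowdown can make a difference, and iterating this gives arbitrary slowdown.

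The main obstacle I anticipate is the objection that consciousness might require ``real-time integration'' and that sufficiently slow computation degenerates into inert symbol shuffling. Answering it is the delicate part: I would point out that such a position must identify some physically privileged external rate to which the computation must be referred, but in a causally closed simulation no external clock is coupled to the computation's internal causal structure, so invoking one smuggles in a non-physical extra parameter and violates NP. A closely related worry is whether, as the slowdown grows, the computer fails to remain ``closed'' in practice (thermal drift, component failure); I would handle this by idealizing the hardware as a mathematically faithful executor of its state-transition function, which is exactly what NP-style reasoning already grants elsewhere in the paper.
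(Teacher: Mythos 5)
Your proposal is correct and rests on the same essential observation as the paper's own proof: a closed simulation's agents can only register the passage of time relative to events inside the simulation, so stretching the wall-clock spacing of an unchanged sequence of machine states is undetectable to them and, by the behavioral criterion, cannot alter their level of subjective experience. The paper argues this informally (via a ``pause/resume'' intuition), whereas you formalize it with explicit appeals to Theorems 1, 3 and 4, but the route is the same.
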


This is an important direct consequence of the fact that the entire simulation is completely isolated from the outer world, resulting in that agents within can perceive the passage of time only relative to other events in the virtual reality. Overall slowing down of the pace of simulation (as seen from the outside world) cannot be detected from the inside.

No longer requiring calculation speed, we can choose computer architecture much more freely (according to Theorem 2, the technical realization was not essential). We decide to run the simulation on one from the variety of universal Turing machines, with a binary alphabet \cite{TMgeneral}. It has been almost a century of how we know that any computer calculation, even those usually associated with the most powerful computers imaginable, can be performed just as well also on a very long tape of check-boxes which can be either ticked or unticked by a ``head'' that moves over the tape one step at the time, following a given set of instructions (program). In computational theory, the tape is usually of infinite length, but we are concerned with a well defined finite problem: running the simulation of the android(s) and virtual environment for a finite virtual time (e.g.\ one hour). As this is doable on a super-powerful but finite-resources computer cluster, it is also doable using a finite length tape.

As Turing machines are abstract concepts, we will also pick a concrete realization. We will imagine a strictly mechanical wooden Turing head\footnote{Much like the ingenious craft described in \cite{WoodenTuringMachine}.} that reads and writes ``X'' and ``O'' symbols in the sand (e.g.\ on a huge stretch of beach or in some desert). For example, the head can be wind propelled (though this is completely irrelevant). The symbols are equally spaced, imprinted deep enough so that machine can also read them by sensitively ``touching'' to check the depth of the sand at a few points. With a universal Turing machine the program of the simulation is also embedded in the sand symbols (much like the compiled source code in a conventional computer is stored in the same physical memory as the data) and in this way we maximally reduce the complexity of the head itself, while effectively all relevant information is encoded in the sand. In the beginning, a row of ``X'' and ``O'' symbols that represents the simulation program together with the simulation initial state (state of the android and the virtual environment at some given moment) is imprinted in the sand, after which the wooden head is positioned and activated.

As this setup is computationally equivalent to the supercomputer performing the simulation (apart from the processing speed, which is irrelevant according to Lemma 2), the same level of subjective experiences must emerge now as when running the simulation on silicone architecture. This conclusion is summarized in the following theorem (some more formal technical details of the lemma and theorem ``proofs'' are presented in the Appendix):

\begin{theorem}
A system comprising a pattern of X/O symbols written in sand and a wooden Turing head mechanism performing certain automatic operations (of ``reading'' the symbol beneath, flattening out the sand, inscribing a new symbol and moving left/right one step) according to the instructions embedded also in the sand, can contain conscious agents living through human-like subjective experiences (i.e.\ agents belonging to the same equivalence class of consciousness level as humans).
\end{theorem}

The reasons behind this switching to sand-symbols Turing machine are twofold. Firstly, the word ``computer'' (especially ``supercomputer") is nowadays quite emotionally charged: as ever more powerful computers/smartphones enter all spheres of our life, we are left with an intuitive impression of their omnipotence, at least in principle (especially omnipotence of hypothetical futuristic computers). It is hard to keep in mind that, in regard of the ability to perform calculations, they are in principle no different from the first computers ever made -- the only advances were quantitative, in speed and capacity. Restating the problem in the form of sand symbols and a wooden head we see as a sobering way to counter these prejudices, especially in the case like this one where speed does not matter. This is also the reason why we opted for a purely mechanical wooden head -- any involvement of electrical devices may unintentionally lead to mystifications.\footnote{Alternatively, instead of a wooden head, we could imagine a middle-aged fat warehouse worker, unshaven, with a cigar in the corner of his mouth, strolling down a huge beach and unenthusiastically carving by his finger X and O symbols in the sand, according to a table of instructions that he carries along. Or, in a tribute to Douglas Adams, we might entrust the same job to two well trained white mice, though it does sound less realistic.}

The second reason is far more important. Though we have now stripped away illusion that it could be the advanced architecture/structure of the computer, combined with its power expressed in billions of operations in second that might lead to the emergence of consciousness, there still remains the mystification that computational process itself can somehow bring about the emergence of the subjective experience. However, once we have gone from relatively abstract electronic circuits to simple motions of some amounts of sand, we can now closely follow the computational process in every mechanical detail. In this way, we set the stage for the next step aimed to show that relying upon ``computational process'' to induce consciousness is no less illusory (at least within the realm of classical physics).

A few comments beforehand. If the equivalence of the two ``computers'' (the silicone one and the sand one) is too counterintuitive to grasp it, we can again resort to a variant of a ``computer display'' aid. The simulation program can be written so as to reserve one million symbols (at some preselected positions) for a ``screen". The algorithm will be such that, when we arrange that million of X and O symbols in a 1000 x 1000 matrix, we get a rendered picture, a ``monochromatic'' snapshot of the current state of the simulation, e.g.\ as a camera focused on Data would see. It is not quite millions of colors, but the resolution is fine enough to glimpse on how Data fares ``inside". The algorithm can even ``refresh'' these million symbols each tenth of a second of the simulation time, so, if we are patient enough, we may follow Data calmly roaming around, responding to stimuli in the virtual world and behaving in every respect as a sentient being (the corresponding real-time refresh rate might well be one frame in a thousand years, but that does not take away from the argument). Just as before, a few virtual minutes into the simulation, after a quite while of writing/rewriting of the sand symbols, the Data's finger will get smashed.

And the same intuitive problems arise again, yet even more pronounced. Simulated Data must feel a lot of pain, just as would a human in the same situation, and that familiar feeling of pain has to originate somewhere there, on that beach. He will also have a stream of thoughts, and that inner dialogue must be there too -- thoughts having a very precise meaning, about the pain and the slammed door -- thoughts that have nothing to do with the sand or the X/O symbols. Are the scene of Data's finger caught in the closing door and the qualia of his feelings, really embedded objectively there in the endless X/O sand symbols, irrespective of our interpretation?

But first and foremost: what is, on that huge stretch of beach, with no one else there but the mechanical wooden head and the immense number of X/O symbols, feeling the pain? Is it the symbols? Could hardly be the head: it has only a few internal states and is comparatively very simple (as we discuss in more detail in the Appendix, some universal Turing machines can have as little as two internal states, and some other only mere 22 instructions overall). However, a determined physicalist relying on the Neuroscience postulate probably will not flinch: the consciousness and the pain are emergent (or illusionary) properties of the system, so it is the system comprised of symbols in the sand and the wooden head, combined as a whole, that feels the pain (or has that illusion) during the dynamical process of the computation. Whatever that exactly means. But it is far from clear what this should mean, especially in the context of classical physics: ``system as a whole'' is an abstract and at best an emergent property, just as the ``pain'' or ``feeling of red'' must be under the NP. Real should be only matter and its motions (interactions). There is nothing that ``holistically'' binds the wood and the sand together in a system, essentially no physical interaction apart from that reading and imprinting (i.e.\ apart of some sand motions caused by the head), as no kind of miraculous ``entanglement'' can occur in the world of classical physics.

This is where we turn to the final step of our no-go theorem. We intend to exploit the fact that Turing machine computation, being a mathematical idealization, is not something which exists per se in the real world (especially not in the world of classical physics). It is merely our invented name for a real physical process -- in this particular case, the process of relocation of certain amounts of sand. Therefore, it is pointless to explore how the running of a Turing machine can cause consciousness. The goal is to try to explain, if possible, how exactly this moving of some sand here and there can result in a subjective experience of serenity, pain or color. What is essential in these motions that should have such power?

To investigate this, we will first concentrate on the dynamic of the wood-sand computation process. Indeed, so far, symbols are frequently changing during the computation: some are being erased and replaced with others -- and that gives an impression that something is ``alive'' here. But this ``dynamics'' turns out to be less essential than it seems. To show this, we will slightly modify our Turing machine. Instead of replacing the symbols by new ones in the same row, the machine head will now write each new state of the ``Turing tape'' one row below the previous (we deal with the technical details of this modification in the Appendix). While certainly not being optimal with respect to the computation speed and spatial (i.e.\ sand) resources, in this way we attain the following: nothing is any more deleted, no symbol is ever replaced and the symbols in the sand no longer represent only the current state of the Turing machine, but also we now have, in the rows above, the entire log of the calculation up to that point. (In this operation mode the head merely copies basically all of the symbols into the next row, apart possibly from the one that needs to be modified -- instead of repeating that symbol, the new value is written.) We can embed in the sand also the current state of the head so that each line contains all relevant information. Obviously, nothing is significantly changed by this modification that could diminish the ability of the entire system to compute and to elicit subjective experiences. We thus let the simulation run in this way for, e.g., an hour of the virtual time -- long enough that Data has suffered the smashed finger and went through all discussed transitions of his subjective experiences.

A number of new conceptual problems now arise from the fact that there, in the sand, in this ``entire log'' setup we have not only the state of X/O symbols that encode the present state of Data, but also all the previous states. What is Data then experiencing at any given moment? If we assume that the simulation has just progressed up to the moment when Data smashed his finger, he must feel that shock, and pain. But a few million rows above, there is another line, encoding the same Data who feels content and joy. If we discard somewhat ridiculous possibility that each row in the sand gives rise to one Data that lingers frozen in existence experiencing that instant (it is easy to identify many inconsistencies if one is to follow this line), then what is Data actually feeling (or having that illusion of feeling) at the moment, as it should be just one thing? One is tempted to say that it obviously must be the last row, as, after all, the position of the ``head'' is there and the process of calculation is taking place there, while the rows above are just the system log.

But, the situation is actually much less clear. What we know ``for sure", is that if we flatten out all sand symbols but the first row, and position the head again at its initial place, then the Repetition theorem guarantees that the same simulation, with the same subjective experiences of the agents inside, must happen once more, identical in all details. But what if we forget to erase the old symbols? We just reset the head, that is, put it on the initial position and reset its internal state (so-called ``m-configuration") to the first instruction, but we do not explicitly erase the ``memory". The head is now moving the same way as the first time, mechanically carving symbols in the sand in the row below -- only that this time it does not change anything in the row below, all same symbols are already there. There seem to be essentially two options.

The first one is that the subjective experiences will reappear again as the head is moved to its initial state and its voyage restarted. But what is so special with the few winded pieces of wood that we call the head, to produce this effect? If we make another identical head and put it at any other of the rows, then that another head must produce the same effect. In such a case, there must coexist subjective experiences of Data who is at the moment experiencing joy and content (emerging due to the head moving along the first row), and of Data who is going through the stress of hurting the finger (due to the motion of the second head, many rows below). Of course, we can add more heads, but this is not a problem by itself. The tricky part is to understand what in that head would have this power, since it is doing nothing?! The head is only traversing the rows, rotating its gears and changing its internal state. The internal state is just the ordinal number of the current instruction, indicated by the orientation of some gearwheel, and the overall number of instructions isn't necessarily that huge -- maybe a couple of dozen. It is not truly calculating even its ``m-configuration", i.e.\ it is not generating truly new information about its internal state as it progresses -- this information about the ``m-configuration'' is already written in the sand. Head is not moving any sand, though it is reading it by sensitive touching, a sort of as a blind man is reading Braille alphabet. The clicking of the wooden parts, directly correlated with the sand symbols but ending nowhere and affecting nothing, does not seem likely to produce consciousness. Otherwise, it is very difficult to motivate why then, a blind man, or a seeing one, would not elicit the same effect of bringing the subjective experiences of Datas into existence, also by tenderly touching rows of X/O in the sand? And if that sort of Braille reading of symbols carved in the sand would bring a consciousness of someone into the being, the proper name for that would be simply -- magic. Besides, if we have established that it is the ``reading'' of the symbols by the head that matters, we might now replace the ``reading'' mechanism of the head by some optical instrument instead of the mechanical ``depth detector". The same reasoning would lead to the conclusion that even optically reading (without disturbing) the sand symbols would induce the subjective experience of the encoded agents. But, since observation plays no role in classical physics, this would further mean that simply having the symbols written somewhere makes Data alive (while it might be magically necessary to point to and follow the symbols by a finger, just as the head sort of points to symbols in its motion).

The other option is that having the head move through the lines but without any effect on the sand would not cause any conscious existence of the simulated agents. Not until the head finally reaches the last row again and starts carving symbols in the sand anew, which then causes lives and experiences of the Datas to resume. In this view, writing of the symbols by the head must be essential -- that is, the ``motion'' of the sand, instead of clicking of the head's gears and switching of its internal states. Thus, in both the previous and in this case, writing of all symbols in the correct order would cause the emergence of the subjective experiences.

And, in principle, though seemingly weird, there is still the possibility that the sand motions per se are not enough, even if they completely and properly mimic motions during the computational process -- they maybe need to be produced by the the wooden mechanism ``clicking'' in a proper way, for the consciousness to emerge.

To explore this already pretty stretched assumption, and to complete our our analysis of the previous options, we consider another run of the identical computation. A few rows below the result of the previous run, we start a fresh simulation: we again manually initialize the first row in an identical manner as before and then let the head do its deterministic computation anew. Of course, it will again produce exactly the same output in the sand as before (the new symbol rows positioned slightly below the last row of the first run). Furthermore, as the result of the Repetition theorem (Theorem 4), the same subjective experiences must arise again in this run, just as they did in the first run. Besides, according to Theorem 3, the abrupt termination of the simulation cannot influence the subjective experiences that must have emerged simultaneously with the computation.

Finally, we again consider running the second round of the simulation, but this time we would be using a slightly different wooden head: instead of a Turing head that calculates the next sand row based on the previous row, now we envisage a simpler head -- one that generates the next row by merely copying the corresponding row of the previous simulation run (it goes ``up'' a few billion rows, reads the symbol, returns back and imprints it at the proper position). Our claim is that the ``copying head'' must produce the same type of subjective experiences as the ``computing head". In slightly more general terms the statement can be formulated as:

\begin{theorem}
(The no-go theorem). The Neuroscience postulate implies that a human brain cannot have any greater ability to induce subjective experience than a process of writing a certain sequence of digits (writing can be done in an arbitrary way and on the arbitrary surface).
\end{theorem}

Since this final step in our chain of theorems probably goes against some widespread expectations, we devote to it a lengthy and detailed ``proof'' in the Appendix. The essence of the argument is in the following. First, we note that the objective effects of the operation of both heads (computing and copying) are identical, at least as far as the motion of the sand is concerned. To analyze this in detail, and to account for a possible difference in produced experiences induced by different functioning of the head mechanism, we consider a series of ``hybrid'' heads, that incrementally bridge the technical gap between the computing and the copying head. We discuss each of this incremental steps and argue that there is no any objective element which could account for any possible qualitative difference in emergent subjective experiences (i.e.\ there is nothing that could explain why the operation of one head would produce the consciousness and of the other not). Note that we, of course, do not claim \emph{computational} equivalence, in the mathematical sense, between a calculating Turing machine and one doing a copying process -- such assertion would be obviously false. Our conclusions here stem from the fact that mathematical idealizations such as ``Turing machine'' or ``computing'' do not exist as such in the universe of classical physics, and that all there is are motions of some sand and some wood -- and we compare the potential of these motions to elicit subjective experience, in these two cases. In the Appendix, we also offer a few additional variations of the argument.

The remainder of the logical steps towards the conclusion of Theorem 7 is fairly obvious. Once we have ascertained that the copying of the log of the previous run has no lesser potential to produce the consciousness than ``computing'' of the sand-symbols again, it is also immediate that: i) copying the proper X/O sequence on any surface by any means would necessarily elicit the same subjective experiences (the entire chain of theorems could be repeated by using the new surface and a different head); ii) by putting the conclusions of all the theorems together, we find that a human brain, under the NP premise, has no greater ability to produce consciousness than the copying of the pattern, which, in turn, can be interpreted as a huge number (rows of symbols O and X can be concatenated and replaced by digits; besides, a Turing machine with a larger alphabet could also have been used, resulting in more than two digits).

Taking a sober look at the last dozen of paragraphs (plus taking into account the detailed ``proof'' in the Appendix), it might seem strange that we had to devote this much space to an analysis of the hypothetical side-effects of a piece of wood moving some sand. Naively, one could expect that everybody will agree that a moving piece of wood can cause some moving of the sand and that anything beyond that can be only weird imagination. Why waste time? Surprisingly, there are too many people who are ready to take very seriously the prospect of this piece of wood generating consciousness, vivid perceptions, and feelings of some agents simply by moving this sand. To address this belief equally seriously, we had to go through all these details and demonstrate that, if the belief is correct, then also writing of a specific huge number must have the same strange power.

We now turn to the discussion of no-go theorem implications, first for the Neuroscience postulate and in the context of cognitive science, and later on also in the context of physics.

\subsection{Discussion of the no-go theorem and of its cognitive science implications}

In very general terms, there are a few positions that one can assume with respect to the result of the no-go theorem.

The first possibility is to dispute the theorem conclusion. To that end, one (or more) of the logical steps (i.e.\ theorems 1 through 7) leading to the final result must be invalidated.

The conclusion of the Behavioral theorem (Theorem 1) -- that agents which behave indistinguishable in principle must possess a similar level of consciousness -- is, in general, a source of much debate. However, it must be kept in mind that in Theorem 1 this inference is reached under the premise of Neuroscience postulate. While we believe (and argue in the Appendix) that there is a strong case, even in general, for the plausibility of the behavioristic conclusion, disputing it in the context of NP would certainly be very hard. Any such attempt must carefully address the arguments in the Appendix.

To refute the Theorem 2, one needs to deny that the brain can be simulated even in principle, despite that Neuroscience postulate is assumed. It is not clear how this could be supported for a system of the size of a brain, whose relevant part of dynamics is governed by classical physics (Neuroscience postulate). Even putting aside the option to simulate physical dynamics of the brain, one must also argue that it is in principle impossible of to develop sufficiently advanced general AI based on classical computing -- an assertion which seems unsupported by the current progress in that area.

We have no idea how Theorems 3 and 4 could be disputed in any truly relevant manner (within the classical physics paradigm).

Theorem 5 is peculiar, as we believe that at this very link the logical chain breaks in a crucial way when we switch from classical physics to quantum mechanics (as we shall discuss in the third part of the paper). However, as long as we stick to principles of classical physics, insisting on determinism and definiteness of matter properties irrespectively of the observation, we see no possible argument why the simulated input from the sensors could be relevantly different from the actual one.

Contesting Theorem 6 would require an explanation of why two mathematically equivalent realizations of a computer would differ in the ability to produce subjective experience, giving a special privilege to electronic (silicone) architecture over some others. We have no idea how this stance could be plausibly supported.

Finally, we went at great length in the Appendix to argue that there cannot be any relevant difference between the computing and the copying head, differing only in very subtle correlations of head and sand movement. It must not be lost from sight that we were not proving computational equivalence of a Turing machine and a copying machine -- these two are obviously computationally inequivalent, but that is completely irrelevant for our discussion. Instead, we only showed that motions of the sand caused by one particular design of a certain wooden machine have no more potential to give rise to subjective experiences than the same motions of the sand caused by a slightly modified design of the same wooden machine. To challenge Theorem 7, one should propose what of these delicate differences in motions of the sand and wood is exactly responsible for the emergence of subjective experience and how.

None of these seems to us very promising. While it cannot be excluded that we have overlooked some additional possibilities (and finding loopholes in the theorem might give important insights into the phenomenon of consciousness), we should now concentrate on options that seriously acknowledge the result of the no-go theorem.

The first such option is to accept that both the human brain and the process of printing of the appropriate huge number do elicit subjective experiences, and of the same type.

Given the huge number (that is the simulation log), it is far from obvious whose subjective experiences and with what content such number represents. The hypothesis that printing of the number causes the experiences to emerge and to be subjectively lived through, immediately implies two things:

i) There is a well-defined mapping between the numbers and the contents of subjective experiences\footnote{One could argue that the existence of such mapping already presumes sort of property dualism. While avoiding to get lost in classification issues, it is our position that the no-go theorem also poses a problem to such philosophical views, if they are dualistic to matter governed by classical physics.} (i.e.\ the meaning of the number is objectively there, encoded in the number);

ii) The writing of the number causes these particular experiences to become subjectively experienced by encoded agents, and that should happen merely as the consequence of this writing process.

One of the problems evident in the second of the two implications is its strong magical connotation. It points that mere writing ``brings to life'' the subjects weirdly (magically) encoded in the number, out of nothing. Indeed, it is hard to avoid analogy that such a number is then some huge equivalent of ``abracadabra'' which summons spirits into existence. Certain numbers would then encode and thus bring to life subjects going through great pains and agonies, and it would be highly immoral and unethical to write any such number down. On the other hand, writing some other numbers would bestow immense pleasures to agents that would spring into existence, as the digits are laid out on surface -- and writing such numbers as many times as possible would become a sort of a moral imperative.

If one is nevertheless willing to further pursue such hypothesis, there are even more serious problems awaiting.

First, what sorts of writing (i.e.\ copying) result in the emergence of subjective experiences? Is reading of the number aloud sufficient? It should be, since it is effectively ``imprinting'' the number in sound waves. In principle, we could devise a (technically demanding) Turing machine that operates on sound waves instead of in the sand and then repeat the proof. The magical analogies with chanting of the correct spell to summon the spirits would be even stronger. Actually, if we follow this route, we must acknowledge that it is (repeating of) the information that causes consciousness to emerge. And allowing information per se to have a certain effect, not via any physical mechanism, is usually considered as a sort of ``magical thinking".

But the problem actually goes even deeper and is related to vague definitions of information and of the copying process in classical physics, which are necessarily related to some (arbitrary) level of coarse-graining. For example, if we write the number in some reflective ink, turn off the lights, then take a flashlight and illuminate a digit by digit in the correct order, it is reasonable that the subjective experiences should arise again -- by this process, we have copied the number into a sequence of light signals. Alternatively but to the same effect, we could, instead of pointing the flashlight from digit to digit, briefly turn on an omnidirectional light source positioned to the left of the first digit. The geometry itself, combined with finite speed of light, will ensure that the digits become illuminated in the correct order and thus this light will accomplish the same effect as the flashlight. If so, then a flickering light will induce numerous experiences of the encoded agents -- each time the light goes off and on again, they will relive their hour of simulated time over and over again. But there is even no need for the light to flicker. For example, if it only changes the color every now and then, the new round of subjective experiences should frequently emerge since the photons (or light waves) of new color are copying the information, and these should have nothing do with the photons of the previous color. Proceeding in this way, only slight variations of the light frequency should be sufficient to each time produce a new instance of agents emerging together with their subjective experiences. (It is possible to repeat the same reasoning with reflected sound instead of reflected light -- if one objects that due to the theory of relativity a Turing machine operating on reflected light signals might not be feasible even in principle.) On the other hand, a question of ``how many times has Data gone through the experience of slamming his finger'' should have a well-defined answer (we are dealing with classical physics, not many-worlds theory). But the answer seems to depend upon arbitrary coarse-graining for the light color -- how small variation is sufficient to make another instance of agents and their experiences emerge.

These issues are further exacerbated if we use some more common configurations of sand (or soil) instead of X/O symbols: for example, a small heap of sand could serve instead of ``X'' and shallow hole/depression instead of ``O". And now we consider billions of billions of solid bodies throughout our Universe, with myriads of heap/hole patterns on them, illuminated from various directions and in an unsteady manner. When some of these appropriately align and happen to produce the correctly evolving pattern, will some Data, or someone with experience of one of us, pop in and out of existence for a split of a subjective second? Or, why take into account only heap and hole patterns -- anything that can carry information should suffice, our choice of particular coding must anyhow be irrelevant. On the other hand, some researchers already take seriously the Boltzmann's brain problem\footnote{Estimate which shows that it is statistically far more probable (under certain reasonable assumptions) for randomly distributed matter to form, by mere chance, an isolated brain with an illusion of the external reality that we see than it is the probability that such external reality indeed exists.} in its original form \cite{BoltzmannBrain}, and here we have its analogon on steroids, making it almost certain, under these premises, that our subjective experiences are actually elicited by a flickering light illuminating a seemingly random pattern in a universe close to a thermodynamical equilibrium, rather than being products of biologically evolved creatures.

Problems of this type arise because the information has no meaning per se in the classical physics and instead becomes meaningful only in the context and through some process for which this information is pertinent, as these naturally dictate what is the relevant coarse-graining level. Ambiguities in defining the information are usually no reason for concern in classical physics paradigm: information per se does not affect anything, so we are free to define it as we like (suited to context). Normally, within this paradigm, ``information'' is only a helpful emergent property itself, not something that has inherent ontological meaning (same as ``Turing machine'' or ``computation"). Attempts to attribute independent meaning to information separated from any context, and, moreover, to attribute nontrivial effects (such as causing the emergence of consciousness) to the information alone and by itself, likely results in logical inconsistencies as the ones just considered.

And, even if we put aside the problems of coarse-graining and of the definition of information, and take for granted that a well-defined string of bits (which is a mathematical abstract) can exist in the world of classical physics, yet another layer of potential difficulties arise. It is the old conundrum of whether syntax alone can produce semantics (meaning), just on its own, per se. Can a sequence of bits possess a well defined unique meaning, irrespective of the external interpretation and observers? Does this huge number inherently carry this one and only one meaning of the Data's finger getting hurt, five minutes into the simulation? To probe this question in our framework, we can do the following short thought experiment. We may imagine that no androids ever existed, and no fingers, and no pain, and that even this four-dimensional universe has never existed. That the only thing that had ever existed was a one dimensional (time)line, enough to support a string of bits (represented in whichever way) randomly appearing. Sooner or later, the proper binary number (i.e.\ the proper X/O sequence) will appear -- is it really plausible that, at that moment, the concepts of the androids, colors, pain, and of the four-dimensional space-time will all of sudden appear and spring into existence as the specific bit pattern appears? Or is it more likely and plausible that the sequence of bits in that one-dimensional world will remain what it is -- just a sequence of bits?

We could easily extend the listing of logical complications and inconsistencies tied to any attempt to assign objective mapping from the numbers to the subjective experiences, and especially to the claim that copying of a certain number causes the corresponding experiences to be actually lived through. However, taking that these few arguments were sufficient to render the whole idea absurd, we now we turn to the option that settles all these issues in the logically simplest way.

\subsection{Neither is conscious}

Indeed, by far the easiest and the most parsimonious way to avoid the previous logical mess is to deny any ``emergence'' at all. When the digits are printed, they are printed and that is all that happens -- it could not be more obvious and simple. The motion of the wooden head causes motion of the sand, and that is all to it. It trivially solves all the riddles posed by ``emergence'' hypothesis: there is neither need to explain what exactly emerges, nor what motions of sand and wood make it emerge, nor when it emerges, nor how many times it emerges. Anyhow, it is very unlikely that a consistent set of answers can be given to these questions, even in principle. In this way, there is no need to invoke any magic of agents with subjective experiences popping out of thin air, only because someone has uttered the correct chant.

But, the price to be paid is that, by the equivalence established through the no-go theorem, nothing can emerge also as the byproduct of the brain functioning. The brain is still perfectly capable to produce behavior -- i.e.\ basically the motions of our muscles. External stimuli, which are in the essence also some types of motions (either of photons, or of air in the sound, or of some other molecules/hard bodies in case of other senses), after some processing turn into our responses which are again some motions (either of our bodies or of our vocal apparatuses). No mystery there. Motions can and will cause other motions, but nothing apart from that. In particular, since we deny that printed symbols can evoke subjective experiences, by the logical inference we must deny the same magic to the brain this time.

And here this hypothesis, in spite of being logically consistent and simple, runs into a grave problem -- it cannot sustain the test of an elementary subjective experiment of introspection. There is hardly anything of which we can be as sure as we are confident of having some subjective experience -- be it an illusion or not, it is there. This subjective and yet undeniable observational fact cannot be accounted by the hypothesis that motion causes other motions and nothing else (and that HP holds).

While this conflict with observational data might be slightly obscured when we analyze brain functioning, the established equivalence (in regard of the subjective experiences) between the brain and the symbols in the sand makes this problem far more obvious. There is clearly no ``pain", at least not just like that, in the rows of ``X'' and ``O'' symbols imprinted in the sand (with or without a wooden machine). There is also no ``red'' in the yellowish sand and brown wood. These concepts must somehow emerge, which requires the mapping of the symbol configurations to the experiences -- the option that we have just previously discarded due to too many ``magical'' inconsistencies -- or otherwise these notions simply would have never existed in the world of sand symbols. Sure, there might be a pattern of ``X'' and ``O'' that would correspond to virtual Data virtually reporting ``I feel pain and see red'' in the virtual world. But these would be only some additional symbols on top of the rest of markings in the sand, unless there is some ``internal'' vantage point of Data that somehow emerges and is led into the illusion that he is feeling the pain and seeing the red (which is emergence of subjective experience and the mapping again). Otherwise, red and pain would have just never existed in such a world, and neither would the meaning of the symbols. But we know too well that pain and red, whatever these are, do exist in reality, and that they are not literally certain configurations of sand symbols.

Irrefutability of the fact that we do possess some sort of subjective experience has led some proponents of the physicalist views to seek the refuge in the word ``illusion'' (most notably D. Dennet, promoting, for example, the concept of the ``user illusion'' \cite{Dennett}). Indeed, no matter how logically consistent and appealing by the simplicity that would be, no one would seriously consider a hypothesis that clearly and openly states that there is no any subjective experience, not even an illusion of. It would be in too obvious conflict with results of personal observation (i.e.\ introspection). Importance of our no-go theorem in this context is that it helps us get rid of the illusion that the word ``illusion'' can miraculously save the day.

Namely, the proponents of the ``illusion'' solution always had the problem to explain ``whose illusion it is?", i.e.\ a problem with the subject of illusion. But we are here pointing to another problem, maybe even more severe: there is also a problem of ``illusion of what?", that is, with the object of illusion. Namely, if the sand symbols have the illusion of ``pain in the finger", then nonetheless there must be a mapping of the symbols into this subjective experience and this mapping has to be somehow and at some moment ``activated", i.e.\ the mapped illusion must be at some moment experienced by the agent having the illusion. The mapping tells us it is the illusion of the finger pain, and not of an ice cream flavor. The ``illusion of pain'' is not made of sand, and thus it must emerge and occur in the sand as the symbols are written. And since the mapping is still there, all the magical problems already discussed arise nonetheless. Fact that it is ``the mapping of sand configuration into the content of illusion'' instead of ``into the content of subjective experience'' is a mere renaming of the problem and does not diminish it a bit. But if we decide to consistently apply the clear-cut solution that there is no mapping at all, then we also entirely lose the object of the illusion, and nothing remains, apart from the dead carvings in the sand. And all of us know too well the pain of a smashed finger certainly is not ``nothing".

An additional argument why it would be logically consistent to replace ``illusion'' with ``nothing'' is the following. The essence of the ``illusion-like'' view of the consciousness, is something like this: we can in principle explain everything we objectively see, measure, observe, including also the talk about consciousness -- solely on cause and effect grounds, without need to invoke any ``consciousness'' thing. Therefore it is wrong to artificially introduce any ``consciousness'' concept in the explanation of the world because that would automatically presuppose the existence of non-objective facts. And we do not need the latter for the explanation of any experiment. But the trouble is that the same logic applies also for the concept ``illusion of consciousness". By the same criteria, we no more need the ``illusion of consciousness'' than the ``consciousness'' itself, so then we should introduce neither of these, and stick to ``nothing".

Besides, why is there an urge to replace the plain word ``nothing'' by the word ``illusion'' and make the explanation more complicated? The only possible motive can be: in order to fit some experimental data. But what experimental data -- there is no objective experiment that points to the necessity of introducing any ``illusion'' into the explanation. Insisting on ``illusion'' means acknowledging a new class of experiments whose outcomes do not follow from and cannot be inferred from any combination of objective experiments. Admitting the existence of such independent ``subjective experiments'' is equivalent to admitting the existence of phenomena which are not explained even if we successfully account for all objective events -- in other words, it here means accepting the ``hard problem". Thus, the opponents of the ``hard problem'' should pay the utmost attention not to involve the word ``illusion'' into their explanations and to use the fair-and-square word ``nothing'' instead.

But, hypothetical beings that would experience absolutely ``nothing", not even any sort of ``illusion", are easy to distinguish from humans via experiment of introspection. The former beings are commonly denoted as ``philosophical zombies". The world inhabited by such beings would be very much like a movie projected on a cinema screen, where everybody leaves an impression of a sentient being, but it is only an animated picture, ``no one inside", not even having an illusion.\footnote{It is unclear in what sense such a world could be said to exist -- if it is not experienced by anyone -- apart as a pure abstract mathematical possibility.} And we are quite certain at least of the illusion.

In any case, the ``illusion'' thing must either be ``something", in which case it induces mapping between the symbols and that something, or it is plain and true ``nothing", which is then easily discarded due to conflict with observations. The certain appeal of the word ``illusion'' lies in its ostensible ability to exist in a limbo between ``something'' and ``nothing". But the source of this false impression is easy to track -- ``illusion'' manages not to really exist in spite of not being ``nothing'' only because it is commonly defined as someone's belief (existing contrary to the evidence), or as (a distortion of) our perception of reality. In other words, it exists subjectively, while not objectively, i.e.\ as a difference of subjective and objective. Therefore, strictly speaking, if we use it as a word different from both ``nothing'' and ``something'' then we implicitly recognize the true existence of the ``subjective realm'' different from objective phenomena in the first place. But if our intention from the outset was to avoid this, then we must pick: illusion is either something or nothing -- and we have just demonstrated why neither is acceptable.

To sum up, none of the options that maintain the Neuroscience postulate seems plausible. This actually should not surprise us a lot. The classical physics is basically all about motions (and interactions that influence these motions). It was never clear how exactly motions, connected in a cause and effect chain, could eventually result in a phenomenon such as subjective experience, which is so much qualitatively different from the low-level physical motions in the brain that supposedly cause it. Conventional wisdom is that enormous complexity and delicate structure of the brain organ somehow turn these low-level motions and interactions into our subjective experiences such as pain or red. This always required a leap of faith, as there was never any real explanation of how this qualitative change can occur. What we did by the series of theorems was to put this hypothesis under formal and systematic scrutiny, we followed logical inferences from start to end and finally concluded, based on quite strong arguments, that this faith was never rationally substantiated.

Therefore, we turn to the remaining option: that the premise under which the no-go theorem was derived -- namely, the Neuroscience postulate -- must be wrong. We deal with this possibility and its implications in the rest of the paper.

\newpage
\section{Part II: Neuroscience postullate is wrong}

The negation of the Neuroscience postulate, at its face value, simply implies that the paradigm of classical physics is insufficient to explain the phenomenon of consciousness. Should this truly surprise us, more than a century since we have discovered that all the basic tenets of the classical world picture were deeply wrong? From where comes such confidence, expressed via the prevailing credo of the Neuroscience postulate, that the deepest and yet elementary truths about our Universe must be irrelevant for the explanation of such an elusive, perplexing and profound phenomenon as the consciousness is?

A part of the answer is certainly irrational, reflecting our prejudices about how the Universe should be, the lack of intuitiveness of quantum mechanics and the sharp fragmentation of science in which researchers from one field are often quite unfamiliar with even the basics concepts from the others. However, there are also strong rational arguments that we have already mentioned in the introduction: i) hot, noise-dominated environment of the brain seems unsupportive of manifestly quantum phenomena such as interference and entanglement over distance; and ii) deterministic, cause and effect reasoning of classical physics, as well as artificial intelligence progress based on classical computing, look promising to explain the power of the brain to compute and produce behavior -- without need to invoke quantum mechanics or anything beyond.

Could the mere paradigm shift from the ontology and logic of the classical physics to that of the quantum mechanics, even without invoking any not-so-plausible hypothesis about explicit quantum effects in the brain, close the gap from the explanation of the human behavior to the explanation of consciousness? We believe it can, but this obviously depends upon what this ontology of the quantum mechanics really is. In other words, it depends on the interpretation of quantum mechanics.

We will now discuss a few of the more popular interpretations of QM, in the light of the problem to understand consciousness and with respect to their potential to account for the existence of this phenomenon (for which, as we have argued above, the classical physics seems incapable of). We concentrate only on ``pure interpretations", i.e.\ ones that are mathematically equivalent to the textbook formulation of QM and thus indistinguishable in principle by the standard experiments (we leave to the experimental test the hypotheses which predict new physics).

First of all, we note that the interpretations which tend to recover as much as possible of the classical worldview are the least promising in this regard. De Broglie-Bohm pilot-wave interpretation is paying a hefty price (of renouncing covariance of special relativity, of difficulties to produce quantum field formulation and of presupposing, against the reasoning of the Occam's razor, unnecessary and unobservable properties such as particle positions), only to regain some intuitive features of the classical physics such as determinism and definiteness of reality at the level of particle trajectories. Yet, these are the same features that disqualify the classical physics from the list of frameworks capable to account for the existence of the subjective experience. In turn, due to these very properties that pilot-wave interpretation strives to reestablish (at the cost of running into inherent mathematical obstacles), it falls in the same category of world-views that likely cannot explain the existence of consciousness. Since the essential difference between the de Broglie-Bohm interpretation and the classical physics is in the much more complicated and nonlocal dynamics (where the unitarily evolving wavefunction governs the motions of particles with hypothetical sharply defined positions), and the no-go theorem does not seem to crucially depend on the details of dynamics or on the locality features, it is unclear how replacing classical physics with this interpretation can evade the conclusions of the theorem. Unless a relevant argument of this type is found, the basic motivation for formulating the pilot-wave interpretation (i.e.\ resurrection of the main classical features) can ironically be, at the same time, a serious argument which disqualifies it.

Next we are going to consider two classes of interpretations that are particularly interesting due to their profound relation to the hard problem of consciousness and which require to be analyzed jointly and comparatively, in more detail.

The first class encompasses interpretations that incorporate, on equal footing and without exceptions, both of the standard postulates: i) postulate of unitary state evolution governed by Schr\"odinger equation; and ii) the wavefunction collapse postulate (i.e.\ the measurement update rule) which includes the Born rule for the probability of measurement outcomes. Such interpretations do not predict any new physics (i.e.\ do not assume any specific collapse dynamics) and stick to what is commonly known as the ``textbook formulation". Note that in all such interpretations the collapse event must be in one or the other way observer or ``system of reference'' dependent -- it cannot be objective in the fullest sense, that is, experimentally verifiable by an arbitrary observer. Namely, if the collapse would occur in an entirely objective, observer-independent way, then we might enclose the collapse event (together with the entire measurement setup) in a big enough insulating box and note that, due to the collapse, the evolution in the box is not linear and unitary -- in contradiction with the first postulate and requiring some extension of the formalism to describe the dynamics of the collapse. (This also holds for Wigner or Penrose like ``consciousness causes collapse'' or vice-versa hypotheses -- they all require new physics to explain details of the objective collapse, and we do not consider such proposals here.) The standard interpretations belonging to the first class are the Copenhagen interpretation (with an appropriate reading), Rovelli's relational interpretation, Brukner's variation of the relational interpretation, QBism and Wheeler's ideas of participatory universe. These interpretations deal in different ways with the scenarios where the two postulates seemingly come into contradiction (e.g.\ of the Wigner's friend type \cite{Wigner}).

The second class on which we focus covers the interpretations that hold that the collapse postulate is unnecessary and can be removed, without ascribing, as a replacement, objective reality to anything apart from the wavefunction (unlike, e.g.\ the De Broglie-Bohm or modal interpretations which discard the collapse postulate but introduce some new elements of reality; the same is true for dualistic ad hoc ontology extensions of the Many-minds type). Historically the first representative of this class was Everett's relative state formulation. As this view of quantum mechanics required further interpreting, it gave birth to many variations of this basic idea, of which to this day the best known is De Witt's many-worlds interpretation.\footnote{Some related ideas, such as the consistent history approach, can also be, to some extent, seen as members of this class.} The underlying idea of these formulations is that the objective, third-person perspective -- represented by the global wavefunction -- must be sufficient to account for everything that is going on in the universe, while the subjective, first-person accounts have to be derivable from this objective picture. In particular, this also includes the collapse: it is in principle not possible to pinpoint why, where or when it happens so it may exist only as a sort of not truly real, subjective illusion of the involved agents, one which in principle can be explained on the basis of properties of the global wave-function.

Already from this overview it seems natural to regard the views from the second class as straightforward quantum mechanical generalizations of the hardline cognitive science positions which deny existence of the hard problem in the sense that they maintain that subjective experience is merely a sort of illusion and can be (as everything else), in principle, entirely deduced from the objective properties of matter (while not attributing ontological reality to subjective phenomena). On the other hand, the interpretations from the first class insist on the necessity of the collapse postulate in order to account for the actual measurement outcomes, obtained by the experimenter and from the viewpoint of the experimenter. No matter the collapse is observer-dependent and thus in some sense subjective (as it is not possible to objectively say when or if it occurs), it is still necessary for the complete description of reality. Thus, in these views, the third-person perspective is not sufficient per se and must be supplemented by taking into account the internal perspective of the system. The latter is done by postulating an entire additional basic law of nature -- the collapse postulate. Hence, the interpretations from the first class can be seen as QM equivalents of the philosophical position that the external, third-person perspective is insufficient to fully account for all aspects of reality, i.e.\ for the subjective experiences (by a straightforward extrapolation these cognitive science positions then hold that there is indeed a hard problem of consciousness -- i.e.\ something that remains even if we entirely understand the dynamics of the brain from the external viewpoint).

Furthermore, denying the essence of the hard problem, while advocating a QM interpretation from the first class would basically be a logically problematic stance. Namely, if one assumes the position that the evolution of the physical brain state (as seen from an objective, external perspective) causes, per se and as a necessary byproduct, the emergence of the corresponding subjective experience (or the illusion of that), then by the most straightforward generalization it could be concluded that a superposition of brain states should correspond to superposition of subjective identities each having the appropriate subjective experience. Consequently, the collapse postulate would not be necessary even to account for the subjective perception of the single measurement outcome (from the experimentalist's perspective), and it would become completely superfluous.

In more detail: presume that we embrace the position that brain, while being merely a lump of matter (albeit a vastly complex one), directly and by itself produces both the objective behavioral responses to the external stimuli and the (illusion of) the subjective experience, while the latter is essentially reflecting and accompanying the former. According to the standard physicalist, no-hard-problem position, a conscious agent should remain conscious even if we enclose her in an informationally sealed box. Besides, we can later confirm (e.g.\ by watching a video record of the box interior, after the box was opened) that the agent inside the box behaved as a conscious being, reporting his subjective experiences in a usual manner, supporting the assertion. So far, the overall state of the agent and the entire system in the box is effectively classical, there is no any (at least not significant) superposition of macroscopic states. In any case, no collapse can play any role in such setup: according to interpretations of both classes, no collapse objectively occurs anywhere in the box (otherwise there would be in principle objectively measurable breakdown of some interference) and what happens outside of the box anyhow cannot be of any relevance. For an external observer, the wavefunction of the system in the box is evolving unitarily, and yet such evolution produces consciousness of the agent inside. By the most direct line of reasoning, we can conclude: the particle system which we call ``brain'' produces consciousness as a side-effect of the unitary evolution of its wavefunction. If we now expose the agent in the box to a visual stimulus of seeing the pointer of a measurement device (of a well-defined location), by the previous premise this should cause the agent's subjective experience of perceiving the pointer position. Besides, in the absence of any superposition of the stimuli (of the pointer position), the system dynamics should closely mimic the classical brain evolution. Now, by the linearity of QM, a superposition of two different stimuli (caused this time by superposition of two pointer locations as a result of some quantum measurement) must produce a superposition of two, mutually orthogonal evolving brain states (entangled with different pointer locations). Since we have concluded that the evolving wavefunction of the brain is sufficient to elicit subjective experience, and that collapse plays no role in that, then each of the two orthogonal brain states should evoke the experience of a separate personal identity having a different subjective perception of the pointer position.\footnote{The fact that we now have a sum of two terms in the wavefunction cannot easily influence the conclusion: any wavefunction can be written as a sum of many terms.} In this way, we have explained the subjective experience of perceiving the single measurement outcome without invoking the collapse postulate, which makes the latter essentially redundant\footnote{Up to a presumption that the remaining problems of the many-worlds view are indeed satisfactorily addressed, as the MW proponents claim.}. 

In yet another words, physicalist presumption that brain ``secretes'' consciousness\footnote{Here we allude to the influential claim by 18th century French materialist Pierre Cabanis that ``brain secretes thought as the liver secretes bile".} leads most naturally to a conclusion that a superposition of two brain states, differing in some memorized information, corresponds to a superposition of two ``secreted'' conscious personal identities differing only in this memorized information. Each personal identity then must have a subjective perception of a well-defined measurement outcome, and introduction of the collapse postulate becomes difficult to justify. This is not to say that introducing the collapse postulate is logically forbidden, but only that it seems unreasonable in the light of the Occam's razor: any attempt to argue that the introduction of the collapse postulate accomplishes anything at all then must prove that some other aspect of the MWI is wrong (commonly it is the derivation of the Born rule that comes under scrutiny, but it is difficult to argue that this problem per se requires something as drastic and vague as is the measurement postulate; besides, we will soon return to the problem of probabilities and point out that it is not separate from the issue of subjective experience).

The above analysis agrees with the reasoning Everett himself used to justify rejecting of the collapse postulate. He heavily relied on the nontrivial assumption that the existence of a term in the global wavefunction that corresponds to a brain state is sufficient to give rise to a subjective experience associated with this brain state. However, in his works, Everett did not recognize this as a new postulate. Instead, he aimed to derive that the subjective experiences of an observer arise from the objective and deterministically evolving wavefunction, by resorting to another assumption \cite{EverettThesis2}: ``As models for observers we can, if we wish, consider automatically functioning machines, possessing sensory apparata and coupled to recording devices capable of registering past sensory data and machine configurations". Then he takes for granted that from the correlation of states of such automata with different experimental outcomes in the superposition one may directly infer: ``...the usual assertions of [the collapse of the state on measurement] appear to hold on a subjective level to each observer described by an element of the superposition.'' His conclusion thus exactly agrees with ours: the hard-line physicalist position on the hard-problem logically implies redundancy of the collapse postulate. Crucially, we also see that the entire relative state formalism explicitly (and openly stated) depends upon the belief that evolving states of these automata, alone and inevitably, cause the emergence of the subjective experiences.\footnote{This is not to say that many-world views do not admit non-physicalist ontological extensions of the Many-minds type, but these are, as in the Many-minds case, commonly related to conceptual difficulties which usually make the logical price to be paid higher than the gain obtained by removing the collapse postulate. More importantly, we have constrained our analysis to the standard and ontologically minimalistic version of the MWI, and, reading their papers, there is very little doubt that it is the one that both Everett and DeWitt had in mind.} In other words, Everett here implicitly presumes, without any further discussion, that there can be no true hard problem. In a similar fashion though in a conditional form DeWitt writes, in an attempt to explain subjective experience of the splitting of the worlds in his many-worlds interpretation: ``to the extent to which we can be regarded simply as automata and hence on a par with ordinary measuring apparatuses, the laws of quantum mechanics do not allow us to feel the splits'' \cite{DeWitt}. We again see that the very idea of removing the collapse postulate entirely relies on the assumption that subjective experiences arise, per se, as a byproduct of the functioning of the automata.

The alternative to MWI reasoning is to deny that the unitary evolution of a brain wavefunction, as seen from outside the box, is sufficient to account for the emergence of the subjective experience. A supporter of one of the first class interpretations can naturally assert that subjective experience can be accessed and taken into account only by considering the internal perspective from the reference point of the brain system. More specifically, this internal perspective must then be a quality that is not a consequence of the external properties (but merely correlates with them), and it should incorporate and invoke the collapse postulate when dealing with superpositions. But this is tantamount to acknowledging the hard problem: the objective, i.e.\ external viewpoint is insufficient to explain subjective experience -- starting with the subjective perception of the single measurement outcome (and as a remedy, we must introduce something qualitatively different, an entirely new physical postulate). In this way we are not violating Occam's principle -- we are indeed introducing another postulate, yet not in extravagance but from a necessity to account for the existing data (i.e.\ for subjective observation of any experimental outcome).

By the previous reasoning, we have pointed out to a more direct relation of the subjectiveness of the collapse postulate (its observer/reference frame dependence) and the subjectiveness of the experience. We provided arguments that a strict physicalist position on the emergence of the subjective experience implies also redundancy of the collapse postulate and thus logically almost necessitates many-world view. Consequently, any consistent opinion which tends to retain the collapse postulate should acknowledge the hard problem in some way.

In this light, whether the subjective experience is reducible to objective phenomena or not, to much extent naturally translates to the question of whether the collapse postulate is reducible to unitary evolution, or must be introduced as a separate postulate. Therefore, the answers to these two questions are tightly interrelated. As for the latter question, volumes have been written both in support and in refutation of Everett's claim that the collapse postulate is redundant.\footnote{Even by the proponents of the many-world ideas, it is nowadays commonly accepted that some additional postulate must be introduced as a replacement, at least to technically reproduce the Born's rule \cite{MWrequiresPostulate}. But these technical details, as well as problems related to the preferred basis, are of a secondary nature to our discussion.} Here we can afford only to briefly illustrate the main aspect of the problem which is relevant to us now and related to the notion of the measurement outcome probability.

For this, it is sufficient to consider a quantum equivalent of a coin toss, let that be a spin z measurement of a particle with spin $\frac 12$ oriented along the x-axis, and an agent, e.g.\ Alice, performing the measurement. Common wisdom, as well as the textbook formulation of quantum mechanics, tells us that after Alice carries out the measurement the observed outcome will present new information to her, that is, she will get to know something she did not know before. Namely, quantum mechanics, invoking the collapse postulate with Born rule, predicts that Alice will observe z spin projection to be either $+\frac 12$ or $-\frac 12$, both with probabilities 1/2. There is no way to predict the result in advance, and as Alice finds out the outcome she learns a piece of information she did not know a minute before (in certain sense a new bit of information, not existing before, has been created). And indeed, this account of the events is what actual experiments confirm, at least in the sense that the experimenters in such situations always report a single outcome obtained with probability one half.

But this is not how the many-worlds (or relative state) interpretation describes the events. In that view, the state of the system (Alice + measurement device + particle) before the measurement is:
\begin{eqnarray}&|{\rm Alice}_0\rangle|{\rm device}_0\rangle|X+\rangle =&\nonumber \\
& \frac 1 {\sqrt 2}|{\rm Alice}_0\rangle|{\rm device}_0\rangle(|Z+\rangle+|Z-\rangle),& \label{Alice1}\end{eqnarray}
where $|{\rm Alice}_0\rangle$ and $|{\rm device}_0\rangle$ correspond to initial states of Alice and the device, while $|X\pm\rangle$
and $|Z\pm\rangle$ correspond to spin projection $\pm \frac 12$ along the given axis. The measurement, in this view, now only induces interaction between the subsystems which results (after a period of unitary evolution) in the entangled state:
\begin{eqnarray}& \frac 1 {\sqrt 2}|{\rm Alice}_+\rangle|{\rm device}_+\rangle|Z+\rangle + &\nonumber \\
& \frac 1 {\sqrt 2}|{\rm Alice}_-\rangle|{\rm device}_-\rangle|Z-\rangle,& \label{Alice2}\end{eqnarray}
where $|{\rm device}_\pm\rangle$ represents the state of the measurement device indicating that the spin projection is $\pm \frac 12$, while $|{\rm Alice}_\pm\rangle$ represents the state of Alice and her brain after registering the corresponding device reading. No collapse has ever occurred in any de facto sense, not even in Alice's reference system alone, and anything that might \emph{look like} a collapse from the viewpoint of Alice is an illusion entirely derivable and explainable by considering the complete wavefunction. Alice persists in the superposition, occurring in both terms of (\ref{Alice2}).

Important point is that there is no element of chance in the above description: the evolution of the system is fully deterministic and easily predictable. After the measurement, we read from (\ref{Alice2}) that in the reality, represented in MWI by the overall system state, there exists both an agent Alice who has perceived projection $+\frac 12$ and an Alice who has perceived $-\frac 12$ outcome. Both are equally real, and the many-worlds view stipulates that both subjective experiences (of perceiving $+\frac 12$ and $-\frac 12$ outcomes, corresponding to two subsystem states $|{\rm Alice}_+\rangle$ and $|{\rm Alice}_-\rangle$ in the superposition) are equally real. This is, as we have already noted, based on unsubstantiated presumption (the fact often neglected) that unitary evolution of the brain state causes per se the emergence of subjective experience, something that is tacitly deduced from the, again unproven, assumption that functioning of an automaton gives rise to consciousness.

But, there is even an additional quantum-mechanical twist to such assumptions about the consciousness, that complicates matters further than in the context of classical physics: namely, the problem to explain where the element of chance enters the story. The account of Alice's measurement scenario given by the many-worlds interpretation does not include any element of chance. In this view, the final state (\ref{Alice2}) is entirely predictable, and there is nothing new that anyone learns from the experiment result, at least certainly not ``objectively new". Following Everett, advocates of many-world theories insist that probability enters as a subjective experience (illusion) of the agents involved. But even this is difficult to establish: the subjective experience of the Alice that recorded spin up, and of the Alice that recorded spin down are equally real, and none is preferential to the other. Yet, subjectively Alice feels that she got to know something new as she learned the measurement outcome, i.e.\ that now she has a piece of information she did not possess before -- therefore, that there still is an element of unpredictability. Alice has a subjective experience that she ended as only one of these Alices in the superposition, and if everything is deterministic and predictable, she would like to be able to calculate in advance as which of these Alices she will turn up. Even if that it is merely an illusion that she turned up as only one of these Alices, she nevertheless should be able to compute everything in a fully deterministic universe, including what the illusion will be. It is difficult to deny that there exists some randomness that defies our ability to predict, since that randomness is so obvious in the practice. If the theory cannot predict the content of the ``single outcome illusion", there still seems to remain an element of chance, not accounted for by the deterministic many-world hypothesis. It is, at least to say, philosophically unclear whether this can be indeed successfully explained by simply saying that there is another Alice in the second branch pondering over the opposite measurement outcome. In an attempt to solve this riddle, S. Saunders, one of the MW proponents, contends: ``...Alice does know everything there is to know: she knows (we might as well assume) the entire corpus of impersonal, scientific knowledge. But what that does not tell her is just which person she is or where she is located in the wave-function of the universe"\cite{MWchance}. But this reasoning seems to indicate that there is a sort of hidden variable, not accounted by the theory, which determines her ``location in the wave-function of the universe". Consequently, such lines of reasoning explain why many-world theory has motivated such a departure from physicalism as is the many-minds variant \cite{ManyMinds, Squires}, and how comes that arguments about a formally realist and deterministic interpretation of a physical theory necessitate discussions about subjective experience, problems of identity (whether minds exist only instantaneously with no continuous identity extending over time), whether account of mind states supplies a natural definition of a future self, whether it is possible for agents to formulate a rational welfare strategy (and whose welfare?) and so on \cite{MWproblems1, MWproblems2}.

In the end, more than 60 years after the first formulation of Everett's ideas, all these discussions have not so far led to any consensus and have left us in the dark whether the very foundations of this class of interpretations are consistent and in agreement with observations or not. In our opinion, however, all these arguments fail to identify the main problem, which is: critically relying on the belief that functioning of an automaton can, per se, induce subjective experiences. Another way to recognize this problem is to note that there is a very straightforward and easy solution to all above logical conundrums posed by the many-word view: all of these foundational problems plaguing formulation of MW theories immediatelly disperse if we assume that there is simply no subjective experience, i.e.\ that we are purely philosophical zombies. All events then happen only objectively: there are objectively two terms in the global wavefunction corresponding to complex particle system which form a certain rigid body, a biomechanical automaton that we denote as Alice; in one term this configuration of particles has been correlated with the spin-up outcome, whereas the configuration in the other term ended up entangled with the spin-down outcome. There is absolutely no mystery of how the initial system containing (in a factor space) a single subsystem state $|{\rm Alice}_0\rangle$ describing particular configuration of particles that form Alice, unitarily evolves into two distinct terms, each containing (mutually orthogonal) subsystem vectors $|{\rm Alice}_+\rangle$ and $|{\rm Alice}_-\rangle$ which correspond to slightly differently arranged configurations of atoms (both atom configurations we, by a convention, call Alice).

Nothing in that case ever occurs subjectively: Alice never feels anything and never experiences anything, the configuration of the atoms which constitute Alice is only capable of ``reporting'' stuff. Reporting is an automated process by which the wavefunction of the particle subsystem that we call Alice, via unitary evolution correlates with the wavefunction of the surrounding air (again a many-particle subsystem) in the manner that can be interpreted as the air carrying vibrations of Alice's voice (and the voice is again correlated with the quantum state of her brain subsystem). Therefore, the subsystem of particles representing Alice can, in both terms of the superposition, further evolve into a state in which these particle configurations are ``reporting'' the corresponding spin outcomes (i.e.\ getting accordingly correlated with air molecules). But none of the Alices is experiencing anything. Under these assumptions there is no longer any problem of ``which of the Alices she would subjectively feel to have become", that is ``which of the outcomes she would subjectively perceive". The answer, in all cases, would be simply -- none. Asking about the subjective experience of philosophical zombies would be plainly an ill-posed question.

Furthermore, since any randomness in these interpretations is only a subjective illusion, removing the phenomenon of subjective experience also removes all the problems related to probability. There is no more even need to derive the Born rule -- some terms in the superposition simply have larger associated multipliers than the others, and that is all to it. On the level of philosophical zombies, i.e.\ of automata without any internal perspective, it is difficult even formulate the notion of probability: we can imagine a zombie tossing an unfair quantum coin, e.g.\ with head to tail probabilities of 25 versus 75 percents, but it would only lead to two equally real superposition terms with coefficients of magnitudes $\sqrt{\frac 14}$ and $\sqrt{\frac 34}$ -- asking what was the probability of a given outcome makes no sense, since we are not allowed to ask what outcome the zombie (or automaton) has actually subjectively perceived. We may have the zombie repeat the tossing many times and record the outcomes, but that would not change much -- unitary evolution would deterministically lead to a superposition of terms corresponding to each of the possible outcome sequences, in general with different multipliers. The zombie in each of the branches can \emph{report} a conclusion based on his outcome sequence: from uttering ``the ratio of heads to tails was approximately 1:3'' in superposition terms with larger multipliers, to exclaiming ``there are only head outcomes in this universe'' in the term with a tiny coefficient.  However, this has nothing yet to do with probability or chance. Another postulate can be introduced (and usually is) to somehow relate the ``number'' (or measure) of identical worlds existing within each branch with the corresponding coefficient in the superposition. This is standardly done to explain ``why are we more likely to find ourselves'' in a branch with a higher multiplier. But this question already implies ``us'' and our subjective experience, and such a step in neither necessary nor easy to motivate in the philosophical zombie scenario. In such a scenario, there is actually no need to bother with the derivation of Born's rule since all \emph{objective and impersonal} experimental data can be explained without it. Therefore, we arrived at an important recognition that problems in MWI related to probability and the Born rule are again the same problems of subjective experience and of the unwarranted assumption that functioning of automata by itself \emph{produces} consciousness (i.e.\ of the negation of the hard problem).

It should be noted that also all ``preferred basis'' problems plaguing (to some degree) MW interpretations immediately go away once we replace agents with subjective experience by philosophical zombies, even without resorting to decoherence: as the global wavefunction is real and it anyhow does not induce subjective experience, it is as real in any of the bases and there is no need to look for a preferred one (singled out by decoherence or not).

Thus, remarkably, replacing agents with philosophical zombies in the many-world scenario solves all its problems. The wavefunction only evolves and changes its form. This certainly produces ever new and more entangled forms of the global wavefunction, but of course, does not produce any experience -- actually a very natural and logical assumption. There is no need to introduce additional postulates, not even to derive Born rule.

But it is that ``illusion'' of subjective perception of a single outcome which we experience in practice and which introduces the notion of probability and demands the Born rule: the rule is necessary in order to explain outcomes of \emph {subjective} experiments, to account for ``what I've perceived", and is not needed for explanation of \emph {impersonal, objective} experiments. Curiously, or maybe expectedly, what then introduces the problems in MWI is actually insisting that these wavefunctions of zombie-agents (i.e.\ of some particle configurations) should be miraculously endowed with the subjective perception, i.e.\ insisting that it must be possible also to assume the first-person perspective and have experimental predictions from this personal vantage point. We saw that there is one clear-cut logical solution of how to consistently establish the relative state (or many-worlds) formulation -- i.e.\ the one with philosophical zombies. However, the proponents of these QM interpretations would like to assume that there is yet another solution, which incorporates subjective experiences (but without postulating consciousness as some fundamental entity as in Many-minds interpretation). However, for this latter part, they do not provide any arguments (just take it for granted), whereas this very part turns out to be quite tricky, as we have pointed out and as is manifest through the ongoing debate on MWI foundations.

As a side note, we see that the hard problem itself in the quantum context is deeper and more palpable than the same problem in the classical case -- the difference is that in QM switching from the objective perspective into the first-person account must, unlike in the classical physics, introduce also breakdown of the determinism of continuous evolution and result in appearance of subjective discontinuity and chance. As we have pointed out above, without the notion of subjective experience in MW interpretations it is difficult to introduce notions of probability and randomness. On the other hand, it is this discontinuity and randomness that we observe in the experiments, since all experiments can be only performed and analyzed subjectively in the end. This aspect of the hard problem was not present in the classical physics: opponents of the hard problem could far more easily negate the very existence of the subjective experience since there was not this qualitative difference between determinism and chance which subjective experience was supposed to bridge. Is much harder here than it was in the classical paradigm to reconcile the internal, personal viewpoint with the ``impersonal'' objective one where nothing is left to chance and everything is continuous. For all these reasons, it is overall more difficult to ignore the hard problem in the quantum case, and this reflects in the philosophical obstacles to discard the collapse postulate and in the history of the subjective elements of interpretations of the quantum mechanics since its advent.

Once we have established this essential connection between the hard and the measurement problem, we may also recognize that some other related concepts from cognitive science have their counterparts in the QM. For example, the ``other minds problem'' (the impossibility to prove whether other agents possess subjective experiences, already discussed in Part I) translates into impossibility to objectively prove if and when the collapse occurs from the vantage point of someone else, who is informationally isolated from us (i.e.\ in spite that Alice will later report perceiving a single spin projection measurement outcome, we expect all objective experiments to confirm her to exist in the superposition state (\ref{Alice2}) ). Also, the Searle's problem of how syntax could ever produce semantics (i.e.\ how ``meaning'' can arise from manipulations with symbols or from complex but per se meaningless motion of some matter), in QM context can be traced to the puzzle of how is it possible that a bunch of particles, having superposed, smeared properties and undergoing unitary evolution can form an observer from whose vantage point the collapse occurs and turns probabilistic potentialities into actualized reality. But we will not further elaborate on these connections here.

To sum up our previous conclusions: the hard physicalist line on the subjective perception implies that the collapse postulate is superfluous, but it also goes the other way round -- MWI explicitly requires (and thus implies) presumption that subjective experience is a necessary byproduct of functioning of an automaton, therefore, one position does not go without the other; accepting of the collapse postulate demands also recognition of the hard problem in some form; due to the gap between the subjectively perceived randomness and the determinism of the objective (from an external viewpoint) wavefunction evolution, the hard problem is more difficult to ignore in QM; illustration of the previous conclusion is the fact that philosophical zombies automatically solve all the riddles posed by MW formulation, while it is the introduction of the subjective experience hypothesis that lies in the core of the foundational controversies of the MW theories; the established relation between the hard and the measurement problems allows us also to translate some other concepts, such as ``other minds problem'' and ``syntax is not semantics'' from cognitive science into QM.

Of these conclusions, it is particularly indisputable (not only because it is explicitly stated by both the Everett and DeWitt) that QM interpretations of the second class, which seek to discard the collapse postulate, rely heavily on the assumption that functioning of an automaton can, per se, elicit subjective experiences. Even without our no-go theorem, which we did not yet take into account in the above analysis, the last hypothesis opens many conceptual problems in MW formulations. But the derived no-go theorem makes the case for MW theories much worse, since the theorem's most probable implication is that this vital tenet of MW views is entirely irrational and hard to logically sustain. It is true, however, that the no-go theorem was derived within the paradigm of the classical physics, whereas, strictly speaking, Everett's class of formulations relies on the idea that it is the wavefunction corresponding to the automaton which should be capable to produce subjective experience. But, unless a strong argument could be constructed as to why would a wavefunction of an automaton have more potential to produce subjective experience than the automaton itself, the conclusion of the no-go theorem provides a serious counterargument for this class of QM interpretations. In fact, proponents of the De Broglie-Bohm interpretation find the opposite more reasonable: in the pilot-wave formulation the objective unitary evolution of a brain wavefunction alone does not give rise to subjective experience -- it is only the term of the superposition occupied by hypothetical ``real'' particles which corresponds to actual experiences.\footnote{While the many-world supporters believe that any term in superposition containing an active brain state generates subjective experiences, the pilot-wave adherents hold that all problems related to such assumption are somehow solved by a claim that filling the wavefunction branch by real particles is what brings the subjective experience into existence. Both groups imply that these, mutually contradicting, positions go without saying and it is particularly puzzling that none of them feels prompted to provide any arguments in their favor. It seems that each group erroneously takes for granted that neuro and cognitive science have already solved these issues with a conclusion that somehow supports exactly their hypothesis.} In the end, not only that the class of MW interpretations fails to offer any novel potential for the explanation of the existence of the subjective experience (a quality that must be on the list of attributes of a plausible interpretation), but it turns out that the no-go theorem likely invalidates a pivotal presumption on which these interpretations hinge.

Finally, what about the first class of interpretations, containing views which incorporate the collapse postulate -- do these interpretations provide a solution to the hard problem? That would be quite peculiar, since not only that almost none of the proponents of these interpretations advocates these views as solutions to the problem of subjective experience, but most of them are seemingly totally unaware of the entire context of the cognitive hard problem.

Nevertheless, we first need to notice that the line of reasoning of the no-go theorem definitely can be no longer applied within the paradigm of any of these interpretations. It is not just that the dynamics of the classical physics is replaced by the unitary evolution of the wavefunction given by the Schr\"odinger's equation -- as we have already argued, this difference alone could hardly bring us closer to any resolution.

The first potentially relevant difference is the introduction of indeterminism. However, this step alone also does not seem to be sufficient to allow for the existence of consciousness. If we simply supplement mechanistic materialism of classical physics with an element of randomness, then the resulting non-deterministic dynamics of the brain could be accounted for, in the theorem(s), by its simulation on a non-deterministic computer, and later represented by a nondeterministic Turing machine. It is known that a non-deterministic Turing machine is equivalent to a deterministic Turing machine endowed with an additional tape with predefined random numbers (each time an evaluation should depend upon a chance, a random input is obtained from this auxiliary tape). In this way, it seems to be possible to repeat the same proof of the no-go theorem also for a non-deterministic equivalent of the classical paradigm. The point is that such a stochastic mechanicism is difficult to differentiate from deterministic mechanicism with (nonlocal) hidden variables: every element of chance can be, in principle, explained by (nonlocal) interaction with some degrees of freedom inaccessible to us. It is very hard to motivate why such dynamics would be qualitatively different from the one discussed in the no-go theorem (even if the source of this randomness would be external to the physical universe).

Nonetheless, an important aspect of abandoning the determinism lies in the informational content of the universe. Namely, common for all causally-deterministic models is that the existing information encoded in the state of the universe never changes, and is already contained in the initial conditions.\footnote{No new information (in Kolmogorov's sense) can appear if the universe is in every future moment uniquely determined by its present state. However, strictly speaking, for the information to be unchanging it is also necessary that past can be uniquely reconstructed from the present, which is a requirement that was satisfied by all major deterministic models in physics so far. Without this, the amount of information could reduce as time progresses. Some other subtleties which may arise here in the context of General relativity we find of no relevance.} In certain (informational) sense, in any deterministic universe, nothing really new ever happens. In every new moment, we have the same old content merely represented (rewritten) using slightly different coding rules than before. This was very much evident in our deterministic-brain simulator in the form the sand-symbols Turing machine (in Part I): all the rows of calculation progress were already contained in the first row with the initial data, and this entire simulated-brain evolution was no more that a sort of ``unzipping'' of the information packed in the first line, producing essentially nothing new (this is exactly why the information content, in the Kolmogorov's sense, by definition remains the same during computation). Formally quantum-mechanically, it is even less than that -- time evolution can be accounted for by mere unitary change (a sort of rotation) of the Hilbert space basis vectors. Any appearance of new information, as already discussed in the context of class two interpretations, can be at most just a local illusion of an agent which perceives only her branch of the global wavefunction (and, if so, it must be explained how and why this illusion arises).

Contrary to that, the collapse postulate axiomatically introduces the emergence of new information, as the result of a measurement, whenever the probability of the outcome was less than one. Which of the possible outcomes will actualize -- that piece of information which we obtain upon the measurement -- was not previously present at all in the universe. This emergence is seen by Wheeler (and often repeated by A. Zeilinger) as ``an elementary act of creation'' \cite{ActOfCreation}. Still, without more serious modifications of the ontology, the new information might be simply coming from some pre-supplied source, e.g.\ like from that additional random Turing tape, and whether the ``location'' of this information source (tape) would be within our universe or external to it, could hardly influence our earlier conclusions. Therefore, to change the no-go theorem's verdict about consciousness and to perhaps grant some real significance to indeterminism itself, we need a much more momentous change than the mere introduction of randomness.

And the room for a true and relevant paradigm shift lies in the fact that the interpretations of the first class effectively (if not always explicitly) give up the idea that the first-person account of events is logically derivable from the third-person perspective, i.e.\ from the evolution of the wave function alone. An entirely new postulate is introduced to explain the internal perspective, relative to the observer/system. When and whether the collapse occurs is only relative to the observer (i.e.\ subjective), and even the mere existence of the generally agreeable facts (i.e.\ of the ``facts of the world'' in Brukner's formulation\footnote{It is there disputed that ``one can jointly assign truth values to the statements about observed outcomes (`facts') by different observers'' \cite{Brukner2}.}) is thus brought under dispute. In general, for the views of the first class, it is then also the wavefunction which becomes relative to the observer, not having an objective ontological reality which is otherwise ascribed to it by the interpretations from the second class. It is the actualized quantum events, as seen and as existing from the relative perspectives of observers/systems, which are granted ontological existence, while the wavefunction turns more into a bookkeeping aid to predict probabilities of future events, given the previous ones. It should be also stressed that even with the use of the collapse postulate an observer can only infer things about his exterior: any attempt even from him to closely and mathematically follow own internal state and rigorously deduce how his own subjective experiences arise, seems to be fundamentally hampered by the impossibility of complete self-measurement (the later impossibility is derived in the context of relational interpretations \cite{NoSelfMeasurement}). These differences are now sufficient for the complete breakdown of the no-go theorem derivation. We can no longer speak at all about the ``emergence'' of the internal, subjective observer/system perspective from the objective, third-person perspective system dynamics. This internal perspective is here taken to be fundamental and non-derivable, seemingly both from the external and from the internal viewpoint.

However, it should be immediately made clear that interpretations of the second class do not necessarily impose any anthropocentric element, that the word ``subjective'' might resonate with. One must either accept that a human can be also seen as a complex system of atoms, or must seek some extensions of physics (and we have many times expressed our belief that the latter is not necessary). It is thus a reasonable position that the right to this fundamental, non-derivable internal perspective thus should be granted in some way also to other systems apart from humans (or animals), notwithstanding the fact that the complexity of the human system (and of the brain in particular) introduces new qualitative features to this internal perspective. Relational interpretation is a good example: it is there mostly understood that every (quantum) system is entitled to a referential role and the collapse (i.e.\ quantum events) can happen from its frame of reference. The exact word ``subjective'' hardly ever appears in the context of this interpretation -- the role of this word is usually taken by the notion of observer relative events/facts, where observer can be in principle any system and the relativity is more understood in a similar way as in Einstein's theory of relativity.\footnote{We actually do not think that there should be any fundamental, low-level difference between these two concepts. Below we even discuss a view that subjectivity of the collapse event, i.e.\ the fact that it may occur for one observer but not for the other, can be replaced by temporal relativity of when the subjective experience arises.} For example, while the length of a rod is observer (inertial frame) dependent in the special relativity, this is not (at least not commonly) seen as an indication that the rod or its length is in any way subjective or not objectively real. Though this is in part a matter of terminology, there are some important differences between the relativity of the rod length and the observer-dependence of the collapse, which lend more justification to the use of the word ``subjective'' in the latter than in the former case. In the former case the description of the reality will be to some extent different for different observers -- however, each of these descriptions is sufficient to account for all elements of reality: each observer can reconstruct how the reality looks from the aspect of another one since each of the perspectives in principle contain the same mutually-nonconflicting information.\footnote{We will not here discuss special cases in General relativity, such as observers falling into a black hole and alike, but it can be argued that their case is also different from the relativity of collapse.} Was it the case that the rod did not exist at all in one referential frame, while existing in the other, that would certainly raise more doubts about the rod's objective existence. But that exactly seems to be the case with the collapse: in configurations of the Wigner's friend type \cite{Wigner}, when an observer is performing a measurement of a quantum system while being sealed in a box, the collapse she observes does not occur at all (i.e.\ does not exist) from the outside-the-box perspective. Also: while in relativity the perspective and the properties of reality depend on kinematic (geometric) properties of the referential (coordinate) system, in the case of the collapse we can only vaguely relate it to the ``internal'' perspective of a physical system. Combined together, these differences intuitively give more grounds for the use of the word ``subjective'' in the case of collapse, though, in the end, it certainly comes down to the precise definition of this word.

In any case, in all interpretations of the first class lurks a possibility to slightly intuitively extend this notion of the fundamental non-deducible and irreducible observer/quantum system right to internal perspective in order to recognize that the same phenomenon might be essential for the subjective perspective and the subjective experience in common sense. And in this possibility lies the potential to not merely avoid the conclusion of the no-go theorem but also to somehow account for the existence of consciousness. Therefore, in our opinion, the least what can be said for the interpretations from the first class is that they leave a room for some future explanation of consciousness, unlike, it seems, the rest of here considered QM interpretations. In the modern terminology, we could dub the first class as hard-problem ``solution ready'' interpretations, in the sense that the formalism of physics is potentially ready to handle, i.e.\ incorporate an explanation of the existence of subjective experience in a sort of natural, seamless way.

This main idea can be pursued further, towards some more concrete hypotheses on how consciousness can be understood. That is the subject of the next section.

\newpage
\section{Part III: Getting closer to the hard-problem solution}

The main motive in this section is to demonstrate, by providing a sketch of some logically consistent solutions, the potential of the first class interpretations to take us closer to the resolution of the hard problem. The ideas presented in this section are of a more speculative nature and should be rather seen only as possible options on how to proceed from here.

As we have already announced, an obvious path to follow is to recognize the existence of this non-derivable internal perspective, inherent to the collapse postulate, as the essence also of the phenomenon of subjective experience in the every-day meaning of the word. Therefore, the subjective experience, i.e.\ the consciousness, in its core has to be then seen as a fundamental, non-derivable concept, too.

Of course, the inner perspective of the type we were referring to so far cannot be alone sufficient to manifest high-level consciousness of human type (with its standard traits such as the internal monologue, sense of self, phenomenal experiences, emotions...). While considering the functioning of the brain as seen from the external perspective seems to be insufficient to explain or infer the existence of the subjective experience of the brain-owner (without having to postulate the internal perspective of systems), brain certainly does play a huge role in shaping the form of these experiences. The postulated existence of the ``internal perspective'' appears here only a necessary, but certainly not sufficient, prerequisite for the existence of human-like subjective experience.\footnote{Some researchers suggest that this internal perspective, which could be understood as a sort of ``pure'' consciousness stripped off all personal human features, should be called differently to avoid confusion. E.g. C. Stoica suggests the word ``sentience'' in this context, while the word ``proto-consciousness'' might be suitable but is sometimes used in a relation to panpsychism and thus with a somewhat different meaning.}

This relation of internal and external perspectives in the context of subjective experience obviously begs for further explanation. We must consider it first on the low, quantum mechanical level and clarify how is it possible that the wavefunction collapse can occur internally, but does not happen from the external viewpoint. Next, we must discuss the same relation on a higher, cognitive level by addressing how and why agent's subjective experience of the world and of her own actions coincide with the external description of that agent, when seen as a complex particle system. The latter problem is particularly nontrivial if we want to renounce the epiphenomenal views on the consciousness and allow for the existence of free will -- which is philosophically certainly more preferable position if can be logically reconciled with the laws of physics. After that, we will discuss in which way the expounded framework might potentially lead to an explanation of consciousness capable to pass the test of the no-go theorem.

\subsection{Reconciling different perspectives -- quantum mechanical level}

When we consider human-like agents as physical systems (i.e.\ as seen externally), they possess many internal physical degrees of freedom (located in agent's memory) which correlate with the external degrees of freedom (in the environment). If represented by a wave function this correlation is modeled as in (\ref{Alice2}), where $|{\rm Alice}_\pm\rangle$ corresponds to the state of the agent's internal memory degrees of freedom which are entangled with the external degrees of the measurement device $|\rm device\pm\rangle$. (In general, there can be more terms in superposition, with varying multipliers, or just a single term.) We also note that this type of entanglement between the agent and the environment is not only initiated via agent senses, but also the other way around: via behavioral responses and via different forms of involuntary leaking of information from the agent, her internal states leave a certain imprint on the environment. However, once we have adopted views of the first class, we should not see the superposition and entanglement of (\ref{Alice2}) as any real entity -- it is just a part of mathematical procedure helpful to calculate probabilities of future measurement outcomes.

This was the external description and, in addition, we must also incorporate the postulated non-derivable internal perspective. There seems to exist a direct correlation, a sort of mapping between the physical degrees of freedom localized in Alice's brain (as seen externally) and Alice's internal subjective experience.  We may say that Alice is in some sense \emph{aware} of her brain state (or of a part of it).We will return shortly to the question of how the existence of such a correlation could be explained. Via this correlation of subjective experiences and brain states, where the latter are further correlated with the environment, Alice forms a subjective internal representation of the external world. As these subjective experiences (constituting the internal representation) are by definition well defined, real and do never exist in any sort of ``superposition of qualia'', by the virtue of this chain of correlations also certain external degrees of freedom (such as the device pointer position and the z projection of particle spin) become well defined and real from Alice's perspective. This she recognizes as the wavefunction collapse.

It is worthwhile to stress a few points. First, it is wrong to apply classical intuition about the time ordering of the cause and effect when it comes to collapse: in the spirit of Wheelers delayed-choice effect \cite{DelayedChoice}, there is nothing contradictory in the fact that Allice can, from her vantage point, actualize events in her arbitrarily remote past by experiencing them in the present. Secondly, the collapse here is not objective, in the sense that for another agent, e.g.\ Bob, neither Alice nor the device has to be in a well defined classical state -- they become well defined and actualized for him only as these states get entangled with his internal degrees of freedom and he subjectively experiences them (this is in clear contrast with the objective collapse in the Penrose-Hameroff orchestrated objective reduction \cite{OrchOR}, or in the Wigner's consciousness causes collapse hypotheses \cite{Wigner}).

Due to this latter reason, perspectives of different agents on the very occurrence of the collapse can be directly conflicting and reconciling them is a nontrivial problem -- and one that potentially leads to profound implications. The problem most seriously surfaces in the form of the Wigner's friend type of paradoxes \cite{Wigner}, which, according to some authors, jeopardize the very foundations of QM and question its consistency \cite{RennerNature}. We share the majority view that there are many ways to consistently deal with this ostensible contradiction (without altering the mathematical formalism of QM), as we have discussed in detail in \cite{ToTheRescue}. Here we will only summarize some main points from that paper which deals with the Wigner's friend paradox, to the extent that will be needed later in our analysis.

Alice, in the initial state described by (\ref{Alice1}), can have the role of the Wigner's friend if we assume that she is enclosed in an ideally insulating box. Let us denote that initial moment corresponding to state (\ref{Alice1}) as $t_0$, while at some future moment $t_1$ Alice performs the spin measurement and learns its z-axis projection. Bob, who here plays the role of the Wigner in the original setup \cite{Wigner}, is outside of the box and from his vantage point, after $t_1$, the system in the box (including Alice) must be in the entangled state given by (\ref{Alice2}) -- we do not consider objective collapse models, so there cannot be other options. At moment $t_2$ Bob performs in-practice-impossibly-delicate and yet in-principle-feasible quantum interference measurement on the entire box content (which includes measurement of Alice's brain quantum state), by which he experimentally proves that Alice is indeed in the superposition state given by (\ref{Alice2}). Note that since the state (\ref{Alice2}) is an eigenstate of his measurement, the measurement actually does not alter at all this prior state of the system in the box. On the other hand, had the collapse occurred \emph{objectively} at $t_1$, then from his and every other perspective at time $t_1$ state in the box would have turned either into $|{\rm Alice}_+\rangle|{\rm device}_+\rangle|Z+\rangle$ or into $|{\rm Alice}_-\rangle|{\rm device}_-\rangle|Z-\rangle$, and he would have mere 50 percent chances to measure the superposition state (\ref{Alice2}). Finally, in \cite{ToTheRescue} we argued that of utmost importance for the understanding of this scenario is also the moment $t_3$ when Alice eventually leaves the box and tells Bob which outcome she perceived.

The views of different interpretations about this scenario largely diverge. The tricky part is how to address the question ``what did Alice experience in between $t_1$ and $t_2$?'' (at $t_2$ Bob must perform the measurement on Alice's brain, so it is beyond the point to discuss what that part of the experience might be like, or whether Alice has to be sedated for this hypothetical operation). Namely, both classes of interpretations agree what the description of the events and the outcome of the experiment must be from Bob's point of view: quantum evolution of the system in the box is unitary (for him), state in the box is after $t_1$ described by (\ref{Alice2}), which he experimentally confirms with certainty at $t_2$ and, finally, at $t_3$ he is told by Alice that at $t_1$ she either measured spin up or spin down (with equal probabilities). However, the outcome that she finally reports (together with the content of her memory regarding the experiment that she had carried out at $t_1$) was somehow decided at $t_3$, since at $t_2$ she was confirmed to be in the superposition (\ref{Alice2}), and the interference measurement must have erased any trace in the brain of any particular outcomes occurring at $t_1$, if these existed. The only thing that is under dispute is what she has subjectively experienced after $t_1$.\footnote{Already at this point we find the Wigner's friend type of paradoxes to be very peculiar in this context, as such paradoxes actually completely depend on the existence of subjective experience: if one refrains from questioning how the events look from the first-person viewpoint (of the Wigner's friend), the problem entirely disappears.}

According to the MWI, there are simply two Alices in two worlds (branches), each experiencing one of the spin outcomes at $t_1$. These branches recombine in a predictive way during the interference experiment at $t_2$ and finally split again once for all at $t_3$. Of course, in this view, no collapse occurs, apart from as subjective illusions of the ``two'' Alices.

The interpretations of the first class at this point do not concur even among themselves. The QBist view might be seen as a sort of agnostic in this regard.\footnote{C.\ A.\ Fuchs in \cite{QBism} explains: ``It is just that quantum theory provides a calculus for gambling on each agent’s own experiences -- it doesn’t give anything else than that. It certainly doesn’t give one agent the ability to conceptually pierce the other agent’s personal experience.''} Relational and Brukner's interpretation \cite{Rovelli, Brukner, Brukner2} would imply that at $t_1$ Alice performs the measurement and thus the collapse occurs, but only from her perspective. She reads out a single well-defined outcome and, unlike in MWI, there is no other Alice in another branch obtaining the other outcome. Accordingly, she also experiences the perception of a single outcome, in spite that no outcome is ever singled out at $t_1$ from Bob's perspective. In Bob's reality neither ``up'' nor ``down'' in any sense happen at this point, none of these two possibilities is preferred or realized, no matter that this bit of ``up-or-down'' information was indeed created in Alice's world.\footnote{Note that whatever she perceived at $t_1$ has nothing to do with the outcome she finally reports at $t_3$ -- in the meanwhile she has undergone the interference measurement in the basis which is incompatible with well-defined outcome at $t_1$, and any outcome she reports at $t_3$ must have been ``decided'' anew somewhere in between $t_2$ and $t_3$. However, if the outcomes she perceived at $t_1$ and $t_3$ are not necessarily the same, then it is unclear what and when causes the possible switch in the perception since the interference measurement, as noted before, does not alter the Alice quantum state.} In the view of these interpretations, the paradox here is only superficial since ``there are no general facts of the world'' \cite{Brukner2} and talking about what had or had not ever happened makes sense only given the precise reference system. This further implies that there is a huge number (probably infinitely many) of different worlds corresponding to different (informational) reference systems, each with its own ``facts'' being real (which slightly reminds of the infinite number of realities in MWI).

Finally, in \cite{ToTheRescue} we have pointed to a specific reading of the standard textbook QM interpretation which allows us to retain the comfort of single reality and of generally valid facts of the world. Of course, some price has to be paid, since we have pointed out many times that the collapse in the interpretations of the first class is necessary to some extent subjective (i.e.\ observer relative). But instead of seeing as relative \emph{what} has become real (as according to the ``no objective facts of the world'' motto), the other option is to accept as relative only \emph{when} events become real. This should come at a lesser price since we are already used to the relative nature of time, thanks to Einstein.

The key element is to consistently apply the idea that the events become decided and in the full sense real only upon the collapse. In particular, this means that, from Bob's perspective, the entire Alice's spin measurement activity and her accompanying subjective experience can not be decided and real prior to the final moment ``$t_3$'' when he became correlated with Alice and the outcome. For anyone outside the box, including Bob, it makes no sense to ask what were Alice's experiences at the moment $t_1$: the answer can be neither that she perceived one of the outcomes (as in Rovelli/Brukner), nor both outcomes (as in MWI). None of these hypothetical noncorrelated and informationally disconnected events have become a part of Bob's reality until $t_3$.

To the extent that it makes sense at all to speak about Alice's first-person subjective experience from the third-person perspective of Bob, we may say that these experiences become real at $t_3$, together with the corresponding events that happened in the box and collapsed into reality just at that moment. This does not introduce any inconsistencies since these are experiences of Alice having seen the well-defined outcome when her wristwatch showed time $t_1$ -- so Alice's memory and personal recollections, though decided and actualized at $t_3$ (from Bob's perspective), are consistent with the time-line in which she performed the measurement at $t_1$. Rather than seeing this explanation as if Alice was, from Bob's perspective, in a sort of ``suspended animation"\footnote{The term used by Wigner himself in his paper \cite{Wigner}.} from $t_0$ until $t_3$, we may understand it by arguing that it is pointless to speak about time synchronization of events in total absence of any physical flow of information which could maintain this synchronization. In this case, there is no, by assumption, any flow of information outwards correlating events inside the box with the outside timeline -- such a flow would make events inside both real and truly positioned in time (and space) also from the outside perspective (discussion of the experimental setup in presence of partial informational flow, such as in Deutch variant, can be also found in \cite{ToTheRescue}).

From Alice's perspective, she subjectively perceived the spin measurement outcome at the moment when her watch showed time $t_1$, without experiencing any weird gaps, and the outcome she perceived is the same one that she later reports to Bob at $t_3$ (the same outcome that was, from Bob's perspective, undecided all the time until $t_3$). Of course, she experienced the measurement only once, and that outcome is the one she remembers from then on (which does not seem to be necessarily the case in the Brukner's view, where perceived outcome at $t_1$ and the one remembered after $t_3$ may differ). Nothing in her subjective experience was altered by the fact that she was enclosed in the box and that Bob performed the measurement on her brain (we remind that this measurement did not introduce any change in the wavefunction). She would have had exactly the same experience of the spin measurement with or without the box surrounding her. Neither was her subjective experience somehow ``objectively'' postponed until the moment $t_3$ -- such assertion would be impossible to prove by any measurement and thus would be unphysical.\footnote{This aspect can be compared to the situation of an observer falling into a black hole: in spite of the fact that the infalling observer will cross the horizon only after infinity of time as measured by a distant observer, it by no means implies that he is bound to ``wait'' or be ``suspended'' for infinity until he experiences the crossing.} For all these reasons nothing changes even if we enclose Bob, together with Alice, in yet another much larger box, or if we ask ourselves if there is an ``exterior'' observer keeping us in the box right now.

In this way, the laws of QM are preserved both from Alice's and from the Bob's viewpoint and the confusion can arise only if Alice mistakenly applies the perceived subjective collapse as something objective, which should then pertain also to Bob's measurements (and thus erroneously conclude that he should obtain superposition state (\ref{Alice2}) outcome with only 1/2 probability). Intuitively, this view perhaps goes best together with the Wheeler's idea of the participatory universe (Participatory Anthropic Principle): observers exist and through their mutual interaction they co-create a single shared reality, or, in Wheelers own words ``the observations of all the participators, past, present and future, join together to define what we call `reality' '' \cite{Wheeler2}. It is the logical consistency constraints that give rise to the seemingly objective and persistent properties of the external world -- the idea which Wheeler used to illustrate by likening reality to his variant of the ``twenty questions'' game: one player tries to guess the imagined object by posing (up to 20) yes-no questions, but in Wheeler's version, the players who answer are playing a trick on him -- there is actually no well-defined, decided object at the beginning, it is only the questions and the required logical consistency of otherwise random answers that gradually define and shape the object \cite{TwentyQuestions}. This Wheeler's view is inseparable from his ``it from bit'' doctrine, where information replaces concept of matter at the heart of reality: ``Otherwise put, every `it' -- every particle, every field of force, even the spacetime continuum itself -- derives its function, its meaning, \emph{its very existence entirely} -- even if in some contexts indirectly -- from the apparatus-elicited answers to yes-or-no questions, binary choices, bits. It-from-bit symbolizes the idea that every item of the physical world has at bottom -- a very deep bottom, in most instances -- an \emph{immaterial source and explanation}; that which we call reality arises in the last analysis from the posing of yes-no questions and the registering of equipment-evoked responses; in short, that all things physical are information-theoretic in origin and that \emph{this is a participatory universe.}"\cite{Wheeler1} Thus, the information the observers obtain from measurements -- which in this context means the information which is, in the end, subjectively perceived -- is the only true reality, unlike the objects themselves which only derive their existence and properties from that information. Since reality is here unique (as opposed both to MWI and ``no facts of the world'' views) and there is a presumed symmetry between the observers (i.e.\ participants) of which all observe the same laws of QM, it seems to us that the above understanding of the Wigner's friend scenario is the only one that fits the premises. Besides, it seems more natural in this framework to accept that, in the absence of any mutual interaction, subjective timelines of different participants simply are not commensurable at all (e.g.\ if there was never interaction between two sets of observer-participants, these would effectively exist in two unrelated universes, and it would be completely pointless to ever ask what the participant from one group experiences at a given time measured by the time-scale of the other universe).

As already noted, a more detailed analysis of Wigner's friend scenario along these lines can be found in \cite{ToTheRescue}. In this context, however, we just needed to underline this possibility to preserve observer-independence of the ``facts of the world'' while relativizing and making subjective only the ``when'' of the collapse -- as this option will be later important in the relation to the hard problem.

\subsection{Reconciling different perspectives -- cognitive level}

On the higher, cognitive level, once we have recognized the essence of consciousness to be fundamental and non-derivable from the external viewpoint, there is a problem of explaining the intimate connection between the internal subjective experiences and the externally visible behavior of the conscious agents. On one side, this external behavior is (at least it certainly seems to be) entirely explainable in terms of dynamics of matter (i.e.\ of the brain functioning). When seen as a complex particle system, as a ``quantum-automaton", probabilities for different responses of an agent to a given stimulus should be predictable by computation (i.e.\ we should be able, in principle, to simulate unitary evolution of the entire agent-system state and calculate probabilities of the various responses). The functioning of ``quantum automaton'' Alice, as seen from external, e.g.\ Bob's perspective, should obey laws of physics, in particular, it should not violate Born rule.\footnote{We will take this for granted, though, in principle, this is yet to be verified by the neuroscience. On the other hand, if the Born rule is violated in the brain, than we need new physics to explain why a lump of matter called brain defies a basic physical law.} And these explicable and predictable (at the level of probabilities!) external properties should somehow perfectly correlate with subjective properties, i.e.\ experiences. The root of this connection on the brain level is called ``neural correlates of consciousness'' (NCC). Historically first and natural hypothesis about NCC was that the physical dynamics of neurons not only correlates with, but directly and per se produces the subjective experiences. Yet, we have seen that this idea inevitably faces the explanatory gap which seems to be impossible to bridge (to explain how motions turn into qualia), and this problem we tried to formalize as much as possible via the no-go theorem. Alternatively, by taking the path tacitly suggested by QM interpretations of the first class, the neural correlates of consciousness should be taken to be exactly that -- merely correlates, expressing the compatibility of internal and external perspectives, in other words, as a sort of consistency constraints.

And in spite of the presence of perfect correlations, we stress once more that the very existence of the internal perspective does not logically follow from the external description of an agent, and cannot be derived from it. The external description provides a consistent, self-sufficient account of the external behavior of any considered agent, which is neither sufficient to predict detailed qualities of internal experiences (e.g.\ if subjective experience of ``red'' is the same among different agents), nor even to prove that these internal experiences exist at all (external description is entirely consistent also with the philosophical zombie hypothesis). Thus, left only to external observation, we can never positively and rigorously prove even that any other minds apart from our own exist (and we have to resort to various different philosophical arguments to discard the solipsist option \cite{OtherMinds}). As noted before, this is known as the ``other minds problem'' and is, as the denomination ``problem'' suggests, usually seen as an obstacle which requires some sort of solution (e.g.\ the hope that pursuing investigation of neural correlates of consciousness could someday prove that subjective experiences indeed arise as a consequence of neural functions).

Instead of seeing this perfect correlation as a problem that awaits for an explanation, we believe it should be  pronounced a basic principle, in other words, elevated to the level of ``other minds postulate'' pivotal for the existence of the Universe as we know it.\footnote{This type of twist can be seen as borrowed from Einstein's reasoning, where we try to mimic the way he turned the problem of how to explain the apparent constancy of the light speed in Michelson-Morley experiments into the basic axiom of its constancy.} The postulate might be formulated as: ``In a Universe inhabited by conscious beings, laws of physics must be such to allow complete consistency of internal (i.e.\ subjective) and external perspectives". In other words, it must be possible to explain externally visible behavior of agents solely on the grounds of dynamics of physical systems, and, at the same time, that behavior also has to make sense from the internal side, that is, it has to be subjectively experienced as stemming from conscious decisions. This ``other minds postulate'' necessitates the ``other minds problem'': the physical laws must allow different minds (observers) to fit into reality (or each others realities) so seamlessly, consistently and without violating physical dynamics, that it \emph{must} be impossible to ``objectively'' prove existence of the subjective experience.

At this point we must make an important distinction: imposing the ``other minds postulate'' does not introduce anything quite new if we take consciousness to be an epiphenomenon. In that case it is trivial to reconcile subjective experience with the external description of the agent, simply because then it is the external dynamics (in combination with chance or not) which alone drives agent's responses to stimuli, where, on the other side, the subjective experience merely follows the events as they unfold and what we perceive as agency is just an illusion.

On the other hand, the idea of epiphenomenal consciousness was never a natural one: it contradicts our subjective perception that we are influencing our behavior and thus it must be artificially introduced. It was a logical hypothesis in the settings where also the entire subjective experience was a sort of illusion itself, but once we were forced to give up the latter idea, this assumption about free will lost much of its plausibility. Assuming epiphenomenalism, if not by necessity, would be against the logic of Occam's razor. Indeed, the rationale behind epiphenomenalism was always the strong belief that the laws of physics simply leave no room for anything else. First it was the determinism of classical physics that ruled out any possibility of free will. Nowadays, it is still generally held that nothing has changed in this regard in spite of quantum mechanics. Indeed, if indeterminism of QM is understood superficially, as a mere introduction of chance into physical dynamics, then it is true that brute randomness no more reflects any ``free will'' than the unwavering necessity of determinism: either random or predetermined -- it is not free in any case.

It is therefore exceedingly important to notice that our current paradigm is very much different in this sense. The recognition of this internal perspective, which is fundamental and not something that can be derived from external properties, in combination with inherent indeterminism of physical laws used to predict the agent's external behavior, invalidates the above arguments and naturally allows for the free will. Namely, randomness observed in external behavior of an agent no longer implies that the same behavior had to be also purely random from the subjective perspective too. Actions of an agent can be free from her internal perspective (within the boundaries set by Born's rule), and thus indeed truly free, while from the external perspective this freedom would be mathematically interpreted as the intrinsic randomness of the quantum mechanics. It should be obvious that adhering to the probabilistic Born rule in principle still leaves immense freedom for the agency: there are myriads of fairly probable outcome sequences of tossing a coin thousand times, and each of these sequences reasonably corresponds to a fair coin. Similarly, at almost any moment there is a huge number of possible behavioral responses of an agent, where all of them would be externally indistinguishable from being random and in compliance with the Born rule -- just choosing among these leaves more than enough room for true agency and will.\footnote{Another way to see this is to presume the opposite: that obeying the statistical Bohr's rule leaves no freedom at all. However, that could only mean that Bohr's rule leads to fully predetermined outcomes, which is obviously false.} (Here we implied parting with the naive idea that brain responses are deterministic in the classical sense: not only that brain functioning seems to involve processes that amplify low-level quantum randomness \cite{RandomnessAmp}, but also external stimuli certainly have quantum origins -- something to which we will return below.) On the other hand, while the freedom may exist, it must be limited in certain sense: we can imagine a human exposed only to extremely and artificially controlled stimuli and hooked up to an advanced MRI machine capable of predicting probabilities of behavioral responses -- the free will would be clearly constrained as the sequences of responses are not expected to systematically violate predictions in statistical sense. For the same reason, our behavior and choices have a very large predictable component, in spite of the free will. However, in relation to the hypothetical MRI machine, it is also important to note the following: the ``amount'' of free will an agent possesses is not necessarily the same in the highly controlled conditions of the MRI setup and in common, real-life conditions. The QM taught us that the world is highly contextual, and that many things depend on whether something is being measured or not -- there is no reason to presume that the free will is an exception.

Hence, since this paradigm does not prohibit free will, it would be here very difficult to motivate the epiphenomenal stance. And this interplay of the randomness (from the external perspective) and free will (from the internal one), can be now seen as a nontrivial consequence of the ``other minds postulate''. Such postulate here automatically put constraints both on the internal and external ``dynamics": the external dynamics cannot be (fully) deterministic in order to leave a room for the subjective freedom of conscious decisions (i.e.\ free will), while the physical laws governing the external side of dynamics do also constrain this free will to some extent (e.g.\ within the bounds of satisfying probabilistic Born law). These consistency constraints which allow for the free will of the observers and, at the same time, maintain that internal and external perspectives are in full accord, are necessarily much less trivial than in the case of epiphenomenalism. It could be argued that something very much like the laws of quantum physics is necessary to accomplish this, and, in this sense, one could hypothetically seek to derive QM laws from the ``other minds postulate''.\footnote{It is here difficult to avoid the question of why there are then any physical laws at all? Without further elaborating, we just note that the existence of law(s) can be potentially explained as an unavoidable emergent effect, resulting from the fact that, otherwise unconstrained, wills of different agents would be, in general, mutually conflicting. Mathematically, this could be translated into a requirement that newly created information is minimal, from which it should be possible to derive statistical Born's rule.} On the other hand, universes with laws that cannot satisfy the ``other minds postulate'' could maybe exist, but could accommodate just one sentient perspective, like the solipsistic universe of a dream.

The ``other minds postulate'' also resonates well with Wheeler's participatory universe idea: it only extends the basic idea that the ``answers'' we obtain from measurements must be consistent, however not only when we observe particle properties and inanimate matter, but also when we observe other observers. And there is little doubt that Wheeler, when speaking about ``observers'', had in mind conscious agents in the fullest sense - i.e.\ those capable of exerting their free will (e.g.\ when participating by posing the questions to the universe). Epiphenomenalism actually makes no sense in his essentially idealist world view.

Therefore, we have concluded that postulating the non-derivable internal perspective has automatically opened a door for (re)introducing the free will in a way that is consistent with laws of physics. Alternatively, we can start from the ``other minds postulate'' alone (which already incorporates the presumption about the internal perspective) and conjecture that laws of QM are necessary to provide the consistency of internal and external perspectives, required by the postulate. Here we have implied that conscious beings in the postulate must also possess free-will. This is not only a natural attribute of consciousness, but we will soon argue that closely associating free-will with the consciousness is one of the rare options to defy the conclusions of the no-go theorem in a logically satisfactory way.

\subsection{Defying the no-go theorem}

So far we have discussed how the internal and external viewpoints may be reconciled within the interpretations of the first class, both on the lower level of quantum events (discussing the Wigner's friend type of scenarios) and on the higher level (the interplay which naturally allows for the free will, albeit of a limited type).

But the key question, to elucidate the announced potential of the first class interpretations to defy the conclusions of the no-go theorem and in this way to consistently alow for the existence of consciousness, is yet to be addressed. Clearly, by presupposing the internal perspective of systems as an irreducible and non-derivable concept, required for existence of subjective experience, this paradigm explicitly violates the Neuroscience postulate, thus avoiding the troubling conclusion of the no-go theorem. Nevertheless, it is very instructive to consider more precisely where this modification of the basic axioms breaks the logical chain of the theorem. This is also interesting from the perspective of implications for strong AI.

One option is to immediately dismiss the conclusion of the Behavioral theorem (Theorem 1). Our paradigm now encapsulates consciousness as an ontologically real, irreducible notion, so this becomes a tricky but logically valid possibility. In spite that two agents behave in qualitatively indistinguishable way and both similarly report conscious states, we are this time logically allowed to assign consciousness only to one of them, guided by any additional principle (e.g.\ possessing of a human-like brain or not, or judging by the value of Phi as according to Integrated Information Theory \cite{Tononi}) or even in an entirely ad hoc manner. In this way we can simply preclude the problems implied by the no-go theorem, breaking its logical chain by insisting that a computer simulating brain responses would not, unlike brain itself, cause emergence of any subjective experience, no matter that both the brain and the computer produce the same behavior (for example, this is exactly a conclusion of IIT approach \cite{IITUnconsciousAI}). In the context of first class interpretations, we may further claim that only agents (physical systems) to which we assign consciousness are then ones entitled to internal perspective, or at least that they are the only ones in which this internal perspective has given rise to consciousness in the standard sense.

However, as already discussed, abandoning the conclusion of the Behavioral theorem comes with a certain price. We may imagine a situation in which a being, that behaves and reasons like a human in all relevant respects, implores us to understand that it is genuinely conscious and that it feels the emotions and pain very much, and yet we are adamant in our decision that it is just a senseless automaton, simply because it does not fulfill some formal criterion (such as possessing high enough Phi number). Such a scenario would be very much reminiscent of a literal witch-hunt, even more so if the verdict that it ``does not possess consciousness'' would imply that this agent is expendable, for example, to serve in some hazardous conditions instead of humans. The problem is that such a verdict cannot be checked nor proved even in principle (due to the other minds problem, as discussed in more details in Appendix) and thus abandoning of the Behavioral theorem calls for ethical and legal decisions which are not only dubious, but also based on hypotheses which are of unscientific nature. And worse than that, in the Appendix we have argued that giving up the Behavioral theorem effectively means sacrificing entire scientific method and choosing over it hypotheses which are ``cognitively unstable'' \cite{BoltzmannBrain}.

While the option that Behavioral theorem is violated cannot be entirely logically discarded (similarly as, for example, the solipsist option), it would be far more satisfactory if we could have a consistent framework where the Behavioral theorem is preserved and yet the consciousness can be accounted for without absurd implications that the no-go theorem brings. However, this is not trivial, even with accepting consciousness (or internal perspective) as fundamental: once we acknowledge behavior as the good criterion of consciousness, we can always simulate brain on a Turing machine to reproduce the equivalent behavior and we are seemingly back on the track all the way down to copying of the sand-symbols and writing of a number that produces subjective experiences. Even the randomness of QM does not help much by itself. To begin with, it is unclear whether this randomness plays any significant role in the functioning of the brain. And even if it does, we need more than that, once we have accepted the behavioral criterion. First, we would have to insist that artificial intelligence, based on the classical computing, is in principle and ever incapable of reproducing the behavior that brain produces -- a claim that is not easy to substantiate at this moment. And secondly, we have the problem that brain, even as a quantum system, can be anyhow simulated by a Turing machine (at least in principle, with sufficient precision) and at the end the behavioral output to stimuli could be chosen randomly in accordance with the Born rule -- it is very difficult to motivate why making this choice based on a computer-generated pseudo-random number could not result in the same final behavior as by randomness of the collapse.

In spite of these problems, we will demonstrate that it is still possible to keep the conclusion of the Behavioral theorem and nevertheless leave a room for consciousness (of course, we do not claim that our proposal is the only such possibility, but is sufficient to demonstrate that the first class of QM interpretations has the potential to get us closer to the solution of the hard problem even while obeying the Behavioral theorem).

To this end, we first note that in all the views which emanate from the subjectivistic understanding of the collapse postulate, the wavepacket reduction which selects actual outcomes from the possible ones is the key trait of the presumed internal perspective. This process with non-deterministic outcomes is also vital from the perspective of the introduction of fresh informational content into the universe. Therefore, within this paradigm, it seems natural to seek origins of consciousness in more tight relation to the collapse. Another layer of our conjecture would be to essentially equate the subjective collapse and the subjective experience -- in the sense to assume that subjective experience arises together with the collapse. In other words, to conjecture that not only the collapse requires the existence of this inner perspective, but also vice-versa, that the inner perspective, or at least its higher-level subjective experience phenomenon, requires the collapse. That we are, somehow, aware of, or thanks to, the process of creation of new information. In such view, the consciousness is inseparable from the active exploitation of this inherent freedom left for the agency, from taking choices or, seen externally, from the actualization of possibilities. That is, the consciousness is inseparable from the free will.

It is probably not immediately clear how this assumption makes the consciousness and behavior potentially compatible in the light of the no-go theorem. For, by the Behavioral theorem, a classical and deterministic computer must nevertheless be conscious if it is capable to reproduce the human-like behavior -- which again threatens to logically lead all the way to sand-symbols and numbers that possess subjective experiences. However, let us consider more closely a hypothetical android robot driven by a classical computer. In order to reproduce a qualitatively indistinguishable human-like behavior, the android must be equipped with sensors (e.g.\ cameras, microphones) which sample external stimuli. These stimuli then must be processed in an extremely complex and nonlinear way (e.g.\ by layers of neural networks at least to say) which makes it highly difficult to trace the cause-effect chain from the combination of the initial stimulus and the current state of the computer to the final behavioral response. While the entire process remains deterministic in principle, it potentially amplifies the fluctuations of input stimuli in obscure ways: it might be the color of only a few pixels in the visual field which will determine which of two (or more) chains of responses will be triggered. Now, it is sufficient that the environment is quantum-mechanical in nature (as it normally is), so that the entire system becomes non-deterministic: the two chains of behavioral responses become entangled with the color of the pixel. Externally, we will see the robot executing either one or the other of the two chains. We may say that the behavior was determined by the random color of the incoming light. But, as well, we may say that the color of the incoming few photons was a posteriori (in a logical sense) fixed in accordance with the \emph{chosen} response of the robot. From the internal perspective of the robot-system, the collapse of the photons' wavelength has occurred, and this new information of the definitive photon color is created in the universe. In this paradigm, this is sufficient, combined with the complexity of the android's cognitive system, to be perceived by the robot as his own choice, and sufficient as the basis for subjective experience. In a similar way, we ourselves can exploit quantum randomness and have subjective experience even if (which seems unlikely) the functioning of our brain is entirely deterministic in the classical sense and has no internal sources of randomness.

In the context of our theorem chain, this means that the second theorem, stating that a sufficiently advanced android controlled by a classical type of computer may possess the same level of subjective experience as a human, still holds, even in the QM setting. But the chain definitely breaks up at latest with the Theorem 5, which asserts the possibility to isolate the android from the environment inside a virtual reality simulated on a classical computer. At that point there is no more room for creating information or taking choices, as everything is entirely deterministic. From an informational perspective, everything becomes frozen. And thus, under this conjecture, from that point on in the sequence of equivalences analyzed in the no-go theorem, there is no longer room for any subjective experience.

A few comments are due at this point. Naively, one could object that our environment is effectively classical and that the incoming photons in the above android example probably have decohered long ago.\footnote{Decoherence is often poorly understood and used as a general buzzword to dismiss any significance of the QM nature of the universe, while in truth it is nothing else but the process of entanglement with a larger system in which everything remains just as quantum as before.} It is important to stress that whether the photon was in a pure state superposition of two wavelengths, or in a mixed quantum state nontrivially entangled with some other physical systems (i.e.\ decohered) is irrelevant for the argument. Namely, having the android response correlated merely with a photon, or with a somewhat larger or more physically distant part of the environment, makes no difference.

A more delicate question is whether this conjecture is equally applicable to all QM interpretations of the first class. We might imagine that the android's CPU is enclosed in an informationally sealed box (as in the Wigner's friend experiment) and that the only information allowed from the outside are the sensory inputs arriving from, for example, a camera. If we follow relational (or Brukner's) interpretation, then it is unclear how much speaking about the external world has any sense from the internal perspective of the enclosed system in the box (having in mind the ``no facts of the world'' doctrine). In particular: what difference could it make whether the data incoming from the camera is a genuine picture of the environment or just a virtual reality perpetually simulated by a classical computer? It seems to us that insisting on the collapse and the change of informational content as the source of consciousness requires a framework with a unique reality and ``objective facts'' (even if our understanding of subjective timeline has to be modified a bit).

However, if the environment can play some sort of role in consciousness, as argued above, where is this subjective experience ``located'' (if not entirely in the agent's body/brain) and how that could be intuitively understood? And also ``when'' the subjective experience occurs?

So far, we tried to address such questions of intuition through the Wheeler's idea of participatory universe. But, Wheeler's idea in its basic form cannot help us a lot with dilemmas related to the Behavior theorem since it seems to imply a fixed, discrete (integer) number of participant-observers (admittedly, we are not aware if Wheeler has even considered this aspect of the problem). In such framework it looks impossible even to discuss the question at what point an external physical system (whose reality and properties are only a consequence of information that participant-observers possess) can become an observer itself. In general, if we want to keep the idea that consciousness is present on a fundamental level and yet to allow that any system which behaves indistinguishably from Wheeler's observers can be an observer itself, there are essentially two roads to follow. One is panpsychism. As it basically assigns some level of consciousness to any physical (sub)system, it might be seen as an extension of the relational QM interpretation where the internal perspective (from which the collapse occurs), granted to any physical system, is recognized also as the basic form of consciousness. Panpsychism comes in many flavors, but the basic versions again assume some level of reductionism and ontological existence of matter (with added experiential properties). Due to the former, not only that panpsychism faces the serious ``combination problem", but also, in our context, it does not seem able to support, in the light of the no-go theorem, our hypothesis that quantum randomness and free will are at the core of the consciousness phenomenon (thus it is unclear to us how this approach could reconcile the no-go and the behavioral theorems).

The other option, opposite to splitting consciousness into tiny atomic pieces as in panpsychism, is to assume that, on a deeper level, there is somehow only one single underlying consciousness, stripped of all personal prerogatives, instead of a multitude of separate ones. Individual consciousnesses in such picture would be merely particular expressions of that single consciousness, mutually interacting as in the Wheeler's idea and observing the whole from distinct internal perspectives. Intuitively, this idea could be, in its most simplified form, likened to a collective dream (or a sort of "schizophrenic" dream) which derives its consistency and strict laws of physics (at macroscopic level) from the very fact that it is collective -- that there is symmetry between different observers (i.e.\ perspectives) and that each observer is entitled to have a consistent account of the world external to him (in effect, from the ``other minds postulate"). On the other hand, any system capable to process information and to respond to stimuli becomes another perspective through which the underlying consciousness can exert its will (while remaining consistent with the ``other minds postulate") and become aware on a different, ``localized'' level.

As space and time itself would have to be the constructs within this large consciousness, there is nothing inherently troubling if we cannot pinpoint where or when the individual consciousness actually occurs (it should be clear that no conflict with relativity can occur, just as it does not occur in any wavefunction collapse). The following analogy might help us to cope with non-intuitive time ordering of events as seen from different subjective vantage points. We may imagine writing of a novel, where the content is being made up as the letters are being written. But the novel is special, as each of the involved fictional characters becomes conscious and aware of the parts that pertain to him, as soon as these are written. Moreover, each character subjectively experiences that he is taking part in the writing of these segments. Needless to say, the book must be sensible and consistent in every detail. Alternatively but similarly, we may imagine that the author of the novel suffers from a form of ``creative schizophrenia": as he writes a part about some character he puts himself into that role so literally that he effectively becomes that character for a period of time, experiencing what the fictional character does and writing down in the book the decisions that the character ``made". The order in which the book sections are written need not be in any strict relation with the time order which characters in the book experience (as long as everything is logically consistent). If a new character is introduced in the third chapter, but is introduced ``retrospectively", describing some events years before the plot of the second chapter, the character will have experienced these events in the past, no matter that these were decided and written ``now". As the matter of fact, there are numerous (or even infinite) orderings in which the same story might be told (for example, the plot as seen from the aspect of any of the characters), and in some sense, there is no meaning to claim which is the ``true'' one. Time has sense only from the subjective perspectives of the characters in the book and even for them it has some common meaning only to the extent to which their stories interleave and logically affect each other. Following this analogy, in this view there is a single underlying consciousness and a single reality (effectively ``made of'' information) that is being created by that consciousness and through the perspectives of all of us (again, information content is not frozen as is in deterministic frameworks, which according to Wheeler deserves the use of the word ``created"). And being a part of this creative process is what we subjectively experience.

Developing further this idea would be out of the scope of the present paper. We only note that emphasis on randomness as opposed to determinism as a prerequisite of subjective experience might have some curious implications. The closer an agent is to obeying deterministic rules (from the external viewpoint) and the more restricted possibilities for her responses are, the more she is lacking aspects of subjective experience, i.e.\ the closer she is to a philosophical zombie. It is an intriguing though an entirely speculative idea that this relation could be extrapolated to every-day life: those individuals who lead a free life, full of opportunities and bold taking of choices might have stronger subjective experiences (and in some sense be ``more conscious") than those always sticking to a routine.

Returning briefly to the no-go chain of theorems, it should be noted that although the breakdown of the Theorem 5 was more essential, Theorems 3 and 4 also no longer necessarily hold in the context of the first class interpretations. Theorem 3 (on temporal correlation) concludes that subjective experiences must arise synchronously with the evolution of the physical system to which they correlate. However, in the above discussion of the Wigner's friend scenario, we have pointed to a variant of explanation in which Wigner friend's experiences (i.e.\ Alice's), from the reference point of the Wigner (i.e.\ Bob), cannot be said to appear synchronously with the evolution of the friend's brain wavefunction.

And in the context of our latest proposal that the actualization of possibilites (which appears during the collapse) is a necessary prerequisite for consciousness, this assumption affects the conclusion of the Theorem 4 (the Repetition theorem). First of all, the very idea of an ideal repetition of the system's evolution is problematic in the quantum context. If we understand that as the identical repetition of the evolution of the system wavefunction and speak in the context of the first class interpretations, then obviously having the wavefunction evolve in the same way does not mean that the sequence of collapses will repeat either. Alternatively, if we somehow force the system to repeat the same sequence of classical states that the system underwent in the first run (as a result of the initial spontaneous evolution and collapses), then what were ``free'' choices and randomness in the first run are no longer such in the second. In turn, if the first run was accompanied by subjective experiences, the second will not -- according to the conjecture.

The bottom line is: once we give up the tendency to derive the inner perspective from the externally visible dynamics, we see that possibilities for explaining the hard problem open up, at least in principle. In particular, the options are limitless if we are ready to accept the violation of the Behavioral theorem (which becomes a logically valid option in this new context), but also there seem to exist consistent ways out even if we decide to keep the conclusions of this theorem.  Of course, there are many remaining problems to be encountered along any of these alleys, but these do not seem even remotely as desperate as attempts to solve the hard problem in physicalist terms, without invoking QM.

\subsection{Common criticism}

There are some common objections to interpretations of QM which extend in this and similar directions.

One is that incorporating subjective experience (i.e.\ consciousness) as a fundamental, non-derivable concept, requires a huge extension of the ontological basis of reality, which is, a priori, certainly less preferred by the Occam's razor criterion than the monistic ontology of physicalism. As we initially presented it, the duality of the two QM postulates, one related to system's subjective-internal and the other to the objective-external viewpoint, is indeed reminiscent of the Cartesian dualism. But, while it might be formally possible to uphold such dualistic reading of quantum mechanics, it is by no means a natural one.

Namely, we should note that the very concept of a well-defined, truly external perspective is a dubious extrapolation, certainly not supported by experiment. In the bottom line, every experiment ever performed yielded only a subjectively perceived outcome. Thus the external perspective we were frequently referring to eventually boils down to a subjective perspective again, only from a viewpoint of a system different (and external) to the one being considered. Therefore, in our opinion, the most natural reading of quantum mechanics which has the potential to deal with the hard problem is again monistic: the true ontological reality should be attributed only to these internal perspectives of systems (that is, to subjective experiences). To consistently reconcile all these internal perspectives, that is, subjective experiences of all coexisting and ``participating'' observers, there must be an ideal correlation of the internal and external accounts, expressed by our ``other minds postulate''. For observers Alice and Bob to coexist and interact, with both having subjective experiences and some extent of free will, it is necessary that Bob is represented in the referential system of Alice and that Alice is able to find an explanation for observed Bob's behavior from her vantage point (and vice-versa). Furthermore, Alice's and Bob's accounts of events in their common exterior should be also consistent. It is the mathematical consequences of these consistency requirements and of the symmetry between different observers (internal perspectives) that is, in this view, behind the ostensible objectivity of the world exterior to us. Postulating existence of independent external reality (e.g.\ matter) is likely to be entirely superfluous and thus should be avoided. What we call matter is here seen as a consequence of participation of (infinitely) many observers or points of view in defining the reality and there should be no need to resort to dualism.

Another common objection is that non-physicalist views like this one are unscientific, at least due to being unfalsifiable. While the charge of being unscientific per se often draws solely from personal philosophical biases and from misunderstanding of concept of a rational explanation\footnote{"Rational explanation'' is often used as a misnomer for ``hard-line materialism", while in truth it denotes an explanation that successfully accounts for all the data, with resort to as little starting assumptions as possible (Occam's razor).}, the problem of falsifiability deserves to be addressed in more detail.

First of all, we must keep in mind that even being unfalsifiable is much better for a theory than being falsified. And it is this latter danger, that seems to have befallen physicalist views in the context of the cognitive hard problem. The entire idea of physicalism (especially in its purest form expressed in the Neuroscience postulate) hinges on a never proved assertion that the (illusion of) consciousness can be derived from the mere motion of matter. Here, ``never proved'' is actually a serious understatement -- there is no a hint of a true idea how the gap from motions to (illusion of) qualia could be ever bridged (wishful thinking aside), to contest numerous arguments (of which the no-go theorem is only one) suggesting that this was an irrational expectation from the outset. If it was about any other philosophical view but physicalism, such a view would have been long ago discarded as a non-serious contender for the theory that could explain the universe, in other words, would have been taken as already falsified by its inability to account for consciousness. When seen in this light, even the non-falsifiable but consistent-with-data positions should have a clear edge over physicalism.

But, we do not even think that views presented in this section are truly unfalsifiable, at least because they do have certain verifiable logical implications or hints, if not full-fledged predictions. We first note that non-physicalist views were, to a large degree, thought to have been falsified, by the temporary historical success of mechanistic determinism of classical physics. This fact, together with the consecutive breakdown of the classical paradigm, showed that these ideas could be falsified in principle, but are not. In fact, the sudden departure from the determinism of classical physics into the randomness of observer oriented quantum mechanics can be thus taken at least as a slight argument in their favor. More specifically, indeterminism itself can be taken as a prediction stemming from non-physicalist considerations (e.g.\ more than two millennia before Planck, the Greek philosopher Epicurus has in some way anticipated QM by predicting random ``swerving'' of atoms, to account for the free will).

But there are also some more concrete predictions and implications than the indeterminism of quantum mechanics. A different form of fundamental laws of physics (of the particle physics and of the hypothetical theory of everything) can be expected if the universe is built of matter, and if it is ``made of'' information shared by participating observers. It is hard to tell how exactly different these should be, but in the former case we could expect dynamics to be governed by very specific ad-hoc formulas, given as predefined properties of matter, much like it was the case in Newton's dynamics. Also, it is arguable that a hypothetical success of string theory could be more suggestive that the universe is based on matter-like constituents (strings) than it would hint towards universe made out of information (at least intuitively). Contrary to that, if the matter is only a derived notion, its dynamics should rather follow from principles of symmetry and information. A hypothetical grand unified particle model with action (almost) entirely derivable from space-time and internal symmetries, could be a tell-tale sign of a non-physicalist ontology. In such view, particles would most naturally \emph{be} unitary irreducible representations of the corresponding symmetry (super)groups, instead of being ``pieces of matter'' with symmetry properties that merely \emph{correspond} to such representations. In turn, irreducible representations could be mathematically seen as tiniest bits of information that can separately exist in a universe with a given symmetry. The unitary state evolution, governed by such a dynamics derived from symmetry principles, would then merely express conservation of information: new information is being created via indeterministic collapse in measurements, while in between nothing ``new'' happens; Born rule can be seen as prescribing the maximal probability to events that correspond to the least change of information content; the existence of the probabilistic law can be seen as a way to regain determinism and thus conservation of information on large scales (i.e.\ after averaging over an ensemble), while maximally suppressing creation of new information and constraining this essential process to the events in the micro-world. Furthermore, much like the anthropic principle, the ``other minds postulate'' automatically solves all fine-tuning/hierarchy problems: the values of masses and coupling constants must be appropriate to attain consistency of internal and external perspectives (which is not possible if natural constants do not support a reductionistic explanation of existence of intelligent agents). However, the most natural mechanism to establish the proper values of natural constants in a participatory universe paradigm would be by invoking the collapse postulate. This further indicates that a formulation of the Standard Model (or a GUT model) might exist where all of the constants (couplings, mixing angles, etc) could be expressible in terms of some field values -- this would allow for the possibility that we have collapsed the constants to proper values by mere observation of the external world (instead that these values are miraculously, by chance, tuned to favor our existence). We remind that, in this view, it is correct and logically justified to derive physical properties, including the values of natural constants, from the fact of our subjective existence, rather than the other way around, as is usually attempted in physicalist frameworks. For this reason, the ``other mind postulate'' can also naturally provide solution for many other problems of low probability, without need to postulate existence of infinitely many universes (as requires the usual combination of multiverse and anthropic principle): e.g.\ the extremely low probabilities for all conceivable origin-of-life scenarios turn into substantial values when recognized as conditional probabilities. Namely, given that the agents with subjective experiences exist, the wavefunctions in the past can collapse only into states consistent with this fact. In practice, this means that when exploring the evidence of the life origin, we must always encounter (i.e.\ obtain as measurement outcomes) findings which are compatible with our existence. If more then one history is compatible with all information up to a given moment, then according to QM all such histories must be in superposition; once we find a way to distinguish between these possible histories by a measurement, and only then, the history becomes indeed defined, as in the Wheeler's delayed-choice phenomenon \cite{DelayedChoice}. This seemingly strange order of logical causation (i.e.\ events in present defining causes in the past) is surely counterintuitive for those used to classical reasoning (as is the Wheeler's delayed-choice experiment counterintuitive itself), which is reflected through another common objection: the erroneous assertion that these interpretations require existence of an (intelligent) observer in the past to collapse the wavefunction, at the time before emergence of life. Here, as always when dealing with matters of quantum mechanics, one should avoid the logical fallacy of confusing contradictory with merely counterintuitive but otherwise rational ideas.

We are aware that most of the statements above require more detailed explanations and possibly some nontrivial research, which is entirely out of the scope of the present paper. Our present motive was just to illustrate that each of the two ontological viewpoints (materialist and idealist) leads physics research in a different direction, and hence both proposals are scientific and to some extent falsifiable. Certain aspects, on the other hand, cannot be addressed by the scientific method by definition, and are thus indeed unfalsifiable in the standard sense. Scientific method presupposes the third-person perspective and is therefore incapable to deal with the internal perspective and its properties. Besides, our idea of scientific "explanation" usually implies a cause-and-effect way of thinking and a belief that everything can be reduced to more fundamental principles. However, such reasoning seems to be applicable only in a narrow, macroscopic domain and when considering average values of certain properties definable from the third-person view. Expectation that the same approach should be applicable also to properties of subjective experiences and to entities like qualia, could be simply pointless and wrong, in a similar matter as it is pointless and wrong to ask why a radioactive particle has decayed precisely now and not a bit later, or to ask why the particular photon has passed a half-silvered mirror while the other has not. In turn, this means that qualia themselves will likely never be accessible for scientific studies (only our reports of qualia), but this by no means implies that we should deny their existence.

\section{Summary}

In this paper, we dealt with the hard problem of consciousness from the perspective of physics. We explored how the basic physical premises about the universe influence prospect of solving the hard problem, and, vice-versa, what the difficulties encountered en-route to address the hard problem can tell us about the nature of our universe. The paper was naturally divided into three parts.

The first part concerned the hard problem within the framework of classical physics. Analyzing this perspective of the hard problem is nowadays of huge importance simply because it is still the predominant view that the phenomenon of consciousness can be explained entirely within the physicalist paradigm based on classical physics. We made an attempt to scrutinize this stance by providing a (reasonably) formal framework for the analysis of the problem. We have formulated the Neuroscience postulate, which summarizes the prevailing contemporary position on consciousness, and argued that it inevitably leads to the statement of our ``no-go theorem": that a human brain cannot have any greater power to generate subjective experience than a process of writing of a certain big number. A lengthy logical argument was given in the form of a sequence of intermediary ``theorems", accompanied by the corresponding ``proofs". Once the conclusion of the theorem was established as apparently true, we went to further analyze its impact on the likelihood that the Neuroscience postulate could be ever reconciled with our subjective experience. We focused on the mapping between the brain states and the content of the subjective experience, and used the ``no-go theorem'' to translate the entire issue to the relation between the ``number being written'' and this content of the experience. Rephrasing the question in this new context allowed us to show that none of the available options (under the premise of NP) can be compatible with our introspection: neither assuming that consciousness is nothing -- in which case we realize that notions as ``red'' or ``pain'' would have never existed, nor assuming that subjective experience truly emerges (either as an illusion or not) -- in which case a well-defined mapping must exist and be experienced when writing a number, which then not only bears a clear magical connotation but also eludes consistent description. Finally, we interpreted the conclusions reached in the first part as a clear sign that the Neuroscience postulate must be abandoned and explanation of the consciousness sought for in a broader ontological context, where a natural candidate was, of course, quantum physics.

The presented no-go theorem has some elements of a more elaborate and formal Chinese room argument (in spite of the final conclusion which is contradicting Searle's personal convictions). While we cannot hope for the impact Searle's thought experiment had -- in part a result of the concise form of his simple and yet profound argument -- we firmly believe that our formal and detailed approach has some advantages on its own. While the Chinese room argument revolves around the notion of ``understanding'' which relies on its intuitive meaning, we, in the end, concentrate on the concept of mapping between a number and the content of the subjective experience. Since such mapping must be obviously nontrivial (if exists) and is hardly a natural construction, the conclusions are much harder to ignore (or to dismiss either by hand-waving arguments or by pointless analogies). The formal exposition in the form of ``theorems'' facilitates analysis and helps identify the pivotal points of the entire logical chain. Moreover, as it is often the case with no-go theorems, even plausibly invalidating the theorem claim could be highly insightful and of immense importance, as it would necessarily pin-point some essential prerequisite for consciousness. For these reasons, we believe that properly addressing the issues raised by this no-go theorem has certain potential if not to bring the discussion on this matter to some conclusion, then at least to better classify different views with respect to the part of the proof they disagree with.

Since the conclusion of the first part of the paper was the necessity to go beyond classical physics to explain the consciousness, in the second part we turned to quantum mechanics. We noted that different interpretations of QM have different potentials to bypass the constraints of the no-go theorem and to offer a framework that could encompass the phenomenon of subjective experience. In this, we concentrated only on pure interpretations, i.e.\ the ones that are mathematically equivalent to the standard QM formulation (and left the rest to the experimental test). We pointed out that, for the evaluation of different interpretations, of great importance is a natural mapping which exists between particular interpretations of QM and predominant positions on the hard problem of consciousness. We explored this mapping in detail, revealing that the class of interpretations which acknowledges the necessity of the collapse postulate on equal footing as the postulate of unitary evolution corresponds in a straightforward manner to cognitive science viewpoints which fully acknowledge the hard problem and insist that third-person description is not sufficient per se. On the other hand, interpretations that find the collapse postulate to be superfluous (e.g.\ MWI), are natural counterparts of the cognitive science views maintaining that once all objective functions of the brain are understood there is nothing left to be explained and the subjective experiences (if it makes sense to speak of these) must arise as a direct logical consequence of the physical brain functioning.

Not surprisingly, we concluded that these two classes of interpretations bear the impact of the no-go theorem implications: the problems present in the attempts to explain consciousness based on the classical physics still hamper the interpretations of the second class, whereas the interpretations of the first class, due to already involving the non-derivable internal perspective from which the collapse occurs, have certain potential to account for the subjective experience (we dubbed the latter as ``hard-problem solution-ready").

In the third part of the study, we argued that the set of possible solutions to the hard problem is no longer empty if we accept the paradigm of the quantum physics and uphold the interpretations of the first class. However, even in this broader ontological context, when we give up determinism and introduce some subjective elements as fundamental in the universe, it turned out to be difficult to evade, in a truly satisfactory way, the problems brought up by the no-go theorem. The available options depended crucially on whether we insisted that Behavioral criterion for consciousness has to be still valid or not. The conclusion was that if we give up this criterion then a multitude of solutions appears, but not only that none of them is falsifiable -- worse than that, we lose logically consistent framework to evaluate any hypothesis about consciousness (i.e.\ we lose any hope of scientific or even merely coherent treatment of the problem). On the other hand, if we still require behavioral criterion to be the final arbiter of consciousness, it turns out to be relatively difficult to consistently reconcile this requirement with the rest of the no-go theorem in a way that nonetheless leaves some room for subjective experience. We pointed out that one such possibility seems to be to relate subjective experience with creation of new information that occurs during the collapse (and see that as an expression of free will, constrained by the Born rule). We proposed the ``other minds postulate'' as the basic principle of this paradigm, which intuitively well corresponds to Wheeler's idea of the participatory universe. Of course, a full-fledged solution to the hard problem could not have been expected on this level, but we hope to have done enough to illustrate the potential of quantum mechanics to open the door towards the understanding of the consciousness. In particular, one seemingly viable alternative to physicalism of the classical physics is quantum mechanics interpreted in a subjectivist way, with a special emphasis given to the emergence of new informational content which accompanies the collapse.

There are a few general conclusions that deserve to be singled out and emphasized once more. First of all, the problems formally explicated by the no-go theorem either require its reasoning to be logically invalidated with the same formal precision, or require us to abandon the (never quite substantiated) hope that consciousness can be accounted for in the usual physicalist framework. Furthermore, the no-go theorem suggests that we might have recognized deficiency of the paradigm of mechanical determinism irrespectively of the experiments revealing the quantum nature of the universe, simply on the basis of its inability to accommodate consciousness. The weakness of mechanical determinism is, paradoxically, in the fact that it is too well-defined and too constrained, allowing for strong and almost mathematical arguments which seem to dismiss it as a plausible option for an explanation of the subjective experience (and, consequently, of our universe). And quite generally, the more elements of mechanical materialism and determinism a paradigm has, the more there is to support the proof of the no-go theorem and less likely it is to find any hypothetical room for consciousness. This realization might turn upside-down some intuitive ideas about what are preferable properties for a QM interpretation. Deterministic and mechanically-materialistic ideas of the pilot-wave interpretation (irrespective of its non-local features) and determinism of MWI have, as a result, that these views become seemingly as implausible as is the classical physics when we take into account the existence of (illusion of) consciousness. On the other hand, certain subjective elements and some sort of vagueness in definitions which go in pair should not be seen as a merely tolerable feature of a QM interpretation (or formulation), but also as a necessary one, if we are to have an all-encompassing understanding of the phenomena in the universe.

But maybe the main takeaway from this analysis is that the deepest problems in cognitive science and the most profound questions in physics should not be treated independently, since they are inherently interconnected. Because the assumptions about the ontology of the universe and about the main principles guiding the evolution of matter (both of which are the subject of physics) influence the available explanations of consciousness, therefore also the empirical fact of the existence of subjective experience (or of the illusion of it) can serve as an indirect test of our hypotheses in physics and inescapably also affects plausibility of different interpretations of quantum mechanics.

\bigskip
\bigskip

\noindent
\textbf{Acknowledgments}

\medskip

\noindent
This paper would never have been written without inspirational discussions with Vladimir Prelovac, Cristi Stoica, Liviu Coconu, Adal Chiriliuc, Robert Ruxandrescu, Vladimir \v Ceperkovi\'c, and Marko Vojinovi\'c. Their constructive criticism (and, in these discussions, most of them tended to disagree with the above conclusions) led to the immense improvement of the presented arguments.

\newpage
\section{Appendix}
\setcounter{theorem}{0}
\setcounter{lemma}{1}

Here we will discuss in more detail arguments supporting the statements of the theorems.

\begin{theorem} (Behavioral theorem). Under the premise of NP, two agents behaving effectively in the same way when it comes to reacting to external stimuli, reporting about internal conscious/emotional states, having the ability to autonomously introspect and recognize its subjective experience and engage into related discussions -- must possess the same qualitative level of subjective experience.
\end{theorem}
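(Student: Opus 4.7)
The plan is to proceed by contradiction combined with a symmetry argument rooted in the physicalist ontology of NP. First I would suppose, for contradiction, that two agents $A$ and $B$ behave identically in every stimulus--response test (including introspective reports, discussions of qualia, and autonomous recognition of their own subjective states), yet belong to qualitatively different classes of subjective experience---say, $A$ is conscious while $B$ is not, or they experience qualia of categorically different types. Since NP excludes non-material substances, any difference between $A$ and $B$ grounding the difference in their subjective experience must, by the postulate, reside in physical configurations of matter and fields. The crux of the argument will be to show that any such physical distinction, precisely because it is genuinely physical, must in principle be accessible to some physical probe---and thus ultimately translate into a behavioral asymmetry, contradicting the hypothesis of behavioral equivalence.

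Second, I would address the natural objection that two agents could have different internal physics (e.g.\ biological wetware vs.\ silicon) while producing identical outputs. Here I would lean on the causal-closure clause of NP: under classical determinism with physicalist ontology, whatever ``gives rise to'' subjective experience must itself be a structural or dynamical property of matter. Any such property that never propagates into any measurable output is, within a strict physicalist framework, ontologically indistinguishable from non-existence---it is precisely the sort of ``objectively non-measurable yet well defined'' attribute that the main text identifies as smuggling in a covert property-dualism. Therefore a consciousness-differentiating feature that leaves no behavioral trace either contradicts the premise (by leaving such a trace and breaking behavioral equivalence) or contradicts NP (by being a non-physical yet ontologically real distinction).

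Third, I would close the loop by invoking the other-minds consideration discussed in the main text. Since we have no epistemic access to any subjective experience but our own, the only non-solipsistic, non-arbitrary criterion by which NP permits ascription of consciousness to another entity is behavioral. Any sharper criterion---e.g.\ ``only carbon-based nervous systems count''---must, to remain within NP, specify a physical marker; but that marker either (i) does affect behavior, violating the equivalence hypothesis, or (ii) does not, and then its singling-out of $A$ over $B$ amounts to an arbitrary stipulation rather than a physical fact. I would conclude that the only consistent stance under NP is to place behaviorally indistinguishable agents in the same qualitative equivalence class of subjective experience, with the freedom to differ only in qualia details that do not amount to qualitative differences (in particular, either both have experience or neither does).

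The main obstacle I expect will be disarming type-B physicalist views that locate qualia in specific neural substrate while denying any need for substrate-specific behavioral signatures. Refuting this cleanly inside NP requires pressing the dilemma of the preceding paragraph to its sharpest form: substrate-specificity without behavioral consequence is either epiphenomenal (and hence, under strict physicalism where only matter and its causally efficacious properties are real, ontologically vacuous) or covertly dualistic. A secondary delicacy is making precise what ``effectively the same behavior'' means; I would stipulate equivalence relative to \emph{any} conceivable behavioral probe, including indirect readouts such as NMR or other neural imaging of responses, so that the premise cannot be trivialized by choosing an artificially coarse notion of behavior.
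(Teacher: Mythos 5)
Your second and third paragraphs land on essentially the same dilemma the paper uses: under NP, any consciousness-differentiating criterion that goes beyond behavior either (i) leaves a behavioral trace (contradicting the premise), or (ii) leaves none, in which case it is an objectively unmeasurable, arbitrary stipulation that either collapses into a covert property dualism or is ``cognitively unstable'' in the sense that it could never be justified even in principle. You also correctly import the broad notion of behavior (NMR-readable responses included) and the other-minds consideration. In that respect the core of your argument matches the paper's.

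However, your opening ``crux'' --- that any genuinely physical distinction between $A$ and $B$ ``must in principle be accessible to some physical probe --- and thus ultimately translate into a behavioral asymmetry'' --- is a step the paper does not take, and it would fail as stated. Two agents with radically different internals (wetware vs.\ silicon) can realize the same input--output function; the physical difference is perfectly probe-accessible (open the skull), yet this is not a \emph{behavioral} asymmetry in the sense of the theorem's premise. The paper concedes from the outset that the two agents may differ physically inside; its argument is instead that under strict physicalism there is nothing over and above motions of matter that could be differentially ``attached'' to one configuration and not the other, so the only ways to ground a difference in consciousness are to \emph{define} consciousness as a particular species of internal motion (which misses the meaning of the word) or to posit an arbitrary, in-principle-unverifiable mapping (property dualism plus an unscientific stipulation). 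Two further points worth absorbing: the paper grounds the \emph{positive} privileging of the behavioral criterion in first-person introspection (we know our own behavior correlates with our own experience, which is the only correlation accessible for a subjective phenomenon --- this is what blocks ``spleen possession'' from being an equally good criterion); and it is candid that the conclusion is not a strict logical contradiction but holds because the alternative exacts an unacceptable price. Your framing as a clean reductio overstates what the argument can deliver.
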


\begin{proof}
First, we will consider this theorem in a broad ontological setting (i.e.\ ignoring the NP presumption) and only after that we will take into account the Neuroscience postulate and discuss how it further constrains our conclusions.

We begin by noting the ``other minds problem": we can be strictly sure only of existence of our own mind (or of illusion of it), and there is no, by definition, any objective experiment by which we could verify presence of ``subjective experience'' in any other (external to us) agent. This is a direct consequence of the subjective nature of the phenomenon and, in principle (i.e.\ without assuming any additional presumptions), all other agents we perceive in the external world could be ``philosophical zombies'' -- i.e.\ there is no chance to refute this possibility with certainty or by any objective means. Thus, when attributing the subjective experience to anyone else but ourselves, we must be aware that we cannot seek solid proofs to support our conclusions, but can proceed only by analogy.\footnote{While we are aware of additional arguments in favor of existence of other minds, such as the argument of best explanation \cite{OtherMinds},  all of them have, as a starting point, some elements of reasoning by analogy -- for, we would never tend to assign subjective experience to others, had not we been aware of our own experience.} (Strictly speaking, there is also a logically valid option to deny the existence of any subjective experience, including our own, in which case the theorem statement is trivially true.)

To go further from our own mind and the solipsist stance, one can pursue the following reasoning: ``I see people that look similar to me, they seem capable of similar things I am capable of, thus I'll assume that they have also similar inner, subjective experiences as I do.'' Now, there are a couple of main directions from here.

First, we may proceed to make a generalization based on the ``capable of similar'' part of the above statement -- in other words, to take the behavior as the criterion. The behavioral criterion stands out, since we experientially relate our behavior with our subjective experience (via our mental actions, capabilities and properties, appearing either as the precursors of the related behavior or accompanying it). This is true both for our deliberate actions (before or along with saying something we experience a related stream of thoughts) and for most of our involuntary responses (we feel anger or shame as we blush). We \emph{know} that our behavior reflects our inner life. In spite that this connection cannot be established via objective experiments, it is the only type of correlation we can expect to have with a subjective phenomenon - from the first-person type of knowledge. It is thus natural to judge the presence of subjective experience in others also by looking for signs of such cognitive activity which is correlated with these forms of behavior. Besides, there is a relatively natural mapping from different aspects of behavior to different aspects of consciousness: for example, if android Data does not express feelings, we need not entirely deny its consciousness, but we would just deny him the emotional component of it. In this way, based on experimentally (by introspection) established correlation of our behavior with our internal experiences, we have established a potentially consistent criterion to determine what is conscious and what is not -- possibility to judge, in principle, the existence of an agent's consciousness by analyzing the information exchange (response vs. stimuli) between the agent and the rest of the universe.\footnote{This formulation emphasizing the information exchange is capable also to encapsulate some pathological cases, e.g.\ when an agent is still conscious but with totally impaired motoric abilities: in such case, we might need special devices (e.g.\ NMR) to read his responses to stimuli, but that would still fit into the ``information exchange'' definition. It also covers the cases when an agent intentionally subdues behavioral responses so that these become invisible to the naked eye.}

Another option is to take the ``look like'' path. It means to make a generalization based on likeness, of course not only of the external sort, but also on the similarity of the inner workings. In this way, presented by another (living) human, which is both from the outside and the inside quite similar to us, there might be no problem to conclude that he/she is conscious.

But we cannot simply redefine ``conscious'' to become a synonym for ``human", it certainly must have an independent meaning. However, without behavior as a criterion, it is not clear how is it possible to establish what is the relevant aspect of this similarity in the context of the ``consciousness". To start with a trivial example: is a human still conscious if someone cuts his leg off? How do we know that the amputee is still sentient? Shell we ask him to tell us, or consider how he behaves? Certainly not, as we have now chosen to dismiss these behavioral responses as credible criteria and follow the other path. The amputee might have all of the sudden become a philosophical zombie, now speaking about inner experiences merely as an automaton. Indeed, if we do not start from the premise that consciousness is manifest through behavior, then not only we cannot prove what is conscious, but we cannot even have a consistent framework to deal with the problem. There is no way to establish even that the brain is any more related to subjective experience than, for example, a spleen.

As an illustration, let us formulate a couple of competing hypotheses about consciousness, which do not take behavior as a criterion. Let Hypothesis 1 be that some specific property of the neural system, having to do with the way it processes information -- for example, something about the way the neurons are interconnected -- is what is responsible for the emergence of consciousness. Moreover, let the conjecture be that this particular sort of interconnectedness in processing information is, in general, a prerequisite for the appearance of subjective experience\footnote{The Integrated Information Theory \cite{Tononi} is one example of such hypothesis.}. Let the Hypothesis 2 be that only agents with two legs possess subjective experience, and Hypothesis 3 that being born with a spleen\footnote{There is a rare congenital condition that a baby is born with the missing spleen, but with no other developmental abnormalities. We included a congenital anomaly as the third option, as in principle one could cut off his own leg and subjectively check if he is still conscious -- though his discovery would be again non-verifiable and thus meaningless for everybody else.} is the only thing that determines whether an agent is conscious. First, note that within general ontological setting (dualistic ones included) all three hypotheses could be true in principle: if the consciousness is ontologically real, it might be simply that some things/agents possess it and some not, and there is no any way to discover how this fact is correlated (if at all) to any objectively measurable property. And not only that all of the three hypotheses are experimentally irrefutable, but it is also that none can be deemed more plausible than the others. Ostensibly, Hypothesis 1 sounds most plausible, sort of scientific. However, what at best can be proved for conjectures of this type is that such-and-such property of the neural system is required for certain type of (cognitive related) \emph{behavior} (e.g.\ in humans), but there is not even in principle any possibility to prove that the property in question is correlated with \emph{subjective experience} (one can only \emph{postulate} that such-and-such dynamics of the system correlates, or causes accompanying subjective experience, as this part can be never proved nor objectively tested due to definition of subjectivity). Besides, it is only the fact that neural system is crucial for our behavior and especially for the parts of behavior that we intuitively see as traits of consciousness, which makes the first hypothesis \emph{sound} more plausible than the others. If we truly ignore the behavioral criterion then the neural system is no more related to consciousness than the digestive or immune one, and all these hypotheses are on equal footing. Even partial use of behavioral criterion is not sufficient: let us assume that we decide to use behavior to test our hypotheses on a set of agents and realize, on one hand, that the absence of a limb or a spleen does not seem to relevantly influence behavior, while on the other hand, that each of the agents who exhibited conscious-like behavior also met the requirements of the first hypothesis. Could we prove the first hypothesis in this way? Unfortunately not, since the first time we encounter an agent that behaves in a conscious manner but does not fulfill the formal requirement of Hypothesis 1, it will be impossible to know whether the hypothesis was faulty (incomplete) or the agent is really a philosophical zombie.

In this sense, by discarding the behavioral criterion we lose any consistent framework to tackle the problem of consciousness and any hope to evaluate remaining hypotheses. In spite that some of the non-behavioral hypotheses on how to recognize conscious agents might be actually true, such could be never proved so and not even seriously argued to be so, and in this sense such approaches could be dubbed as unscientific (or ``cognitively unstable"\footnote{In \cite{BoltzmannBrain} S.\ Carroll introduced this term to qualify hypotheses which ``cannot simultaneously be true and justifiably believed''.}). In other words, if we live in a scientifically accessible universe, the behavioral theorem must be valid, and if we do not, then why bother to discuss consciousness or formulate any hypothesis. The reason why we cannot invert the argument and discard  behavioral type of criteria on the same basis (of not being independently and objectively verifiable), is the one already discussed: the behavior is inevitably singled out as the criterion that we experience and introspect as relevant.

There is another type of problem, related to any attempt to correlate subjective experience with something else than behavior, directly stemming from the previous. Let us consider two agents of identical external appearances whose behaviors seem to indicate the same levels of subjective experience (as observed during a prolonged period of time). They give qualitatively indistinguishable responses to external stimuli and reports of their inner life: for example, they express pain and joy in the same manner, speak similarly of their perceptions and emotions and discuss with the same passion and understanding about the philosophical problems of consciousness. Moreover, as this is also a form of report, both are in principle equally capable to, without external suggestion, independently realize their own subjective experience and arrive at the ``hard problem".\footnote{Some researchers attribute great significance to this ``test'' of being conscious \cite{ConsciousnessTest}.} On the inside, however, these two agents might function differently.

Assume now that we have adopted an exact and objective but non-behavioral criterion for consciousness, and that, unfortunately, it turns out that one of these two agents does not pass the test\footnote{For example, does not possess high enough value of Phi as defined by IIT.}. We might have even devised an instrument, a ``consciousmeter", based on this objective principle -- and the device clearly indicates that only one of the agents possesses subjective experience. Unless our criterion always identically coincides with the conclusions of the behavioral one (in which case these criteria are identical and the theorem is confirmed), such a situation must occur sooner or later. Trusting our hypothesis, we insist that one of the agents is a mere unconscious automaton and treat him accordingly, in spite of the fact that we cannot see any difference in the looks or behaviors of the two. And the poor agent begs us to understand that he nonetheless has feelings and implores us in despair to understand his position, but we stay adamant. Worst of all, in all that we are aware that our criterion is a mere hypothesis that we have actually never proved, and, moreover, which cannot be ever proved even in principle.

Obviously, such hypothetical literal witch-hunt would be ridiculous and ethically unacceptable in practice (legally even less so). Such a prospect should serve to clearly illustrate the pointlessness of formulation of this type of inapplicable criteria, which do not coincide with the behavioral one -- irrespectively of the ontological context!

Apart from making generalization along ``capable of similar'' logic, and generalization according to ``look like'' criterion, there could be an option to require both, or to require neither. But none of these is promising. In the first case, we would anyhow inherit bad traits of the generalization by the mechanism similarity approach: the witch-hunt of the agents that behave exactly in a human-like manner but have different inner workings (or are missing a spleen) would remain. In the latter case -- requiring nothing in order to conclude that two given agents have the same or similar level of subjective experience -- would imply that a toaster can go through emotional crises just as a human.

All this reasoning renders criteria differing from behavioral one dubious, inapplicable and sort of unscientific, irrespectively of philosophical presumptions (and, in particular, irrespectively of the Neuroscience postulate). Nonetheless, in a general ontological context as was considered above, the existence of meaningful rules about what possesses and what lacks consciousness is not a logical necessity, and the converse of the theorem statement still can be true in principle (i.e.\ agents without a spleen, amputees or AI androids behaving identically as humans potentially might lack human level of consciousness). However, in spite that rules of logic do not strictly prohibit such possibility, our conclusion is that discarding of the Behavioral criterion comes with a very high price tag, in any framework. (Although not crucial for the proof of the no-go theorem, this conclusion will be valuable in the third part of this study.)

Does this mean that we are doomed to redefine ``consciousness'' as a synonym for certain patterns of behavior, and that we should not seek deeper understanding of the phenomenon nor try to identify ``mechanisms'' essential for its occurrence? Not necessarily. A hypothesis about the underlying ``mechanism'' essential for consciousness that would, on the other hand, always effectively coincide with the behavioral criterion, could be not only of great importance for our understanding of consciousness, but also a valuable guideline in the context of AI research. However, the behavioral theorem would have to be the final arbiter for all such hypotheses.

Next we will consider what the introduction of the Neuroscience postulate and of the paradigm of classical physics brings on top of the previous conclusions.

Within physicalist framework of classical physics, objectively real are only matter and its properties/configurations. Agents themselves are merely certain complex configurations of atoms, and all their properties and their behavior (responses, reports) must be, in principle, understandable in terms of the properties and dynamics of these atoms. Indeed, these atom configurations exhibit very complicated patterns of behavior -- for example, emergent behavioral patterns that we call ``expressing of emotions", or, sometimes, these configurations report stuff like ``I see red", or ``I subjectively feel that I exist". But, in this paradigm, there can no longer exist any additional, ontologically real, element of reality that we call consciousness, which might accompany these processes and behavior. Whatever is that what we call ``subjective experience'' it must boil down to some motions and arrangements of atoms. And nothing else, since nothing else truly exists in this setting. This fact further constrains our previous conclusions.

To start with -- the solipsism is no longer an option. If the system of atoms that constitutes ``me'' can result in ``subjective experience'' (or an illusion of it) -- whatever that could be -- then an almost equivalent particle system constituting ``you'' must produce the same effect. Next, it is no longer logically allowed to have consciousness attached to agents/things in an arbitrary manner. Moreover, nothing can be ``attached'' at all to physical systems we call ``agents'' -- as the matter is all there is. But this ``attaching'' option used to be crucial for all of our previous example hypotheses that followed the reasoning of mechanism similarity. Such possibility was allowing these hypotheses to be potentially true in spite of being not only unprovable in principle, but also in spite of being completely arbitrary.

Not surprisingly, hypotheses of this ``unscientific'' type are not viable in the context of classical physics. We return once more to the comparison of the two identically behaving and externally looking agents, now in the NP context. To simplify the discussion, let one of the agents be human, and the other of a different inner structure, e.g.\ an android. In both cases all details of how the agent functions and processes the stimuli into responses are either understood, or understandable in principle. The classical physics is causally closed and cause-effect chain can be followed all the way through -- it is inessential if we maybe do not know, at present, all parts of the mechanism and even if we do not know about some basic matter constituents (e.g.\ some involved types of particles or fields), we can nevertheless be confident that all that happens is a causal chain of motions and interactions of particles within the agent that, like a clockwork mechanism, produce behavior. Behavior itself is again just specific motion and rearrangement of the matter constituents and we can be sure that nothing but possibly some other motions and rearrangements of matter can be produced in the entire process. (Thus we find quite unrealistic and in the domain of wishful thinking arguments that further exploration of brain functions could alter these conclusions and show how consciousness is ``secreted", without changing the entire philosophical paradigm -- unless, of course, ``consciousness'' itself is defined as a type of motion.) Now, what is the prospect for the claim that one of these agents possesses subjective experience while the other one does not?

Since the only difference is in the details of motions inside the agents, and there is nothing else to it, the only clear way to substantiate such claim is to \emph{define} subjective experience as some particular type of these internal motions which occurs in one but does not in the other. But this is hardly satisfactory: we may also redefine ``subjective experience'' to mean ``afternoon nap'' or ``a load of bananas", but these options clearly do not encapsulate the common meaning of the notion. In spite of the vagueness in the definition of the subjective experience, we do not know of any researchers or philosophers who claim anything of the sort that such-and-such motion of carbon atoms in the vicinity of some nitrogen atoms \emph{is}, per se, the subjective experience of ``red", or that the motion towards a slightly different rearrangement of carbon, nitrogen, hydrogen (and a few other types) of atoms \emph{is} the ``sensation of joy". Instead, more common is the assertion that such motions and/or arrangements of matter \emph{cause} these experiences to emerge. However, the physicalist framework of classical physics is very strict and conservative in this regard: the only thing these motions can cause are again other motions and neither ``red'' nor ``joy'' can appear in the formalism of physics. Allowing such-and-such motion of some carbon to also directly cause ``joy'' to appear, demands, strictly speaking, an extension of our framework at least into the paradigm of property dualism, in conflict with NP.

Though property dualism is no longer physicalism in its purest form and thus does not fit into our basic definition of the Neuroscience postulate, it makes sense also to consider an appropriate extension of NP which would allow for dualistic properties of physical matter. Then it becomes conceivable that certain types of matter motions or arrangements yield subjective experiences, but even in that case we are faced with the earlier charge of the unscientific approach: it is absolutely impossible to discover which of these motions correlate with what experiences, and any such mapping is doomed to be totally arbitrary, neither provable nor disprovable, even in principle! Unless, of course, we allow reports of the experiences to guide us in construction of the mapping. The rationale for this choice of guidance is again the same as before and obvious: we feel that it is the experience itself that causes us to report that experience and we introspectively know that our reports of experiences are correlated with our experiences. It is the experienced pain which prompts us to verbally complain or to (involuntarily) make a corresponding facial expression. And it is this thing which we experience, which we later report or in some other way express, that we call ``subjective experience'' and want to explain in the first place (and not to explain any motions per se). If we let us be guided by reports, only then we have a mean to correlate certain type of motion with a certain experience. But in that case we have already accepted the behavior as the criterion.

And what if we accept the behavior as the criterion from the outset? In the strict physicalist framework, behavior is again a type of complex motion and cannot cause anything to emerge apart from other motions. But, unlike in the previous case, there are notable positions which assert that behavior and reports of the experiences are indeed all that there is, the rest is an illusion. Whether the introduction of this ``illusion'' again implies excursion into the property dualism is open to debate. Even if it does, the mapping of different aspects of behavior into different subjective experiences is no longer arbitrary as was in the case of motions of internal mechanism -- here it is natural and direct (e.g.\ there is little dilemma whether a terrifying scream correlates with subjective experience of ``pain'' or of ``seeing red"). In any case -- property dualism or not -- the statement of our present theorem holds either way if we choose behavioral criterion: the two behaviorally indistinguishable agents cannot have different levels of subjective experience, no matter if the ``subjective experience'' actually denotes nothing, something (i.e.\ if we extend NP into property dualism) or an illusion.

To summarize: the above analysis seems to clearly indicate that the only way to allow two agents of indistinguishable behavior to have different levels of consciousness is to both: extend the paradigm of physicalism (at least into property dualism) and to deliberately accept non-scientific (i.e.\ cognitively unstable \cite{BoltzmannBrain}) positions on consciousness.
\end{proof}

\begin{theorem}
Under the NP premise, a sufficiently advanced android controlled by a classical type of computer may possess the same level of subjective experience as a human.
\end{theorem}

\begin{proof}
In the light of the Theorem 1, all that has remained to prove is that a classical type of computer can generate behavior qualitatively indistinguishable from a human. That this must be possible follows directly from NP assumption that the human brain operates as a deterministic machine governed by laws of classical physics. Since it is a biological machine of finite size, it is, in principle, possible to model it with arbitrary precision, simulate its operation on a classical computer and calculate the same responses to stimuli that a biological brain would produce. In the worst case, the simulation could, in principle again, simulate the motion of each particle in the brain. However, present understanding of brain functioning indicates that such a low-level simulation would not be even necessary to predict all details of behavior: it seems likely that modeling each neuron in the brain (accompanied maybe with modeling of some other biological systems, such as endocrine) would be sufficient for the task. Besides, note that, according to classical physics, there is also nothing that prevents us to scan the brain of a given individual as precisely as required and use that data as initial conditions for the simulation. Also, the part of the neural system responsible for sampling stimuli could be simulated just as well, providing that the simulated brain has the proper inputs.

Strictly speaking, to fulfill the behavioral theorem in its basic form, it is also required that the calculations can be performed in real-time. Though incredibly technically demanding task altogether, there are no indications that anything could preclude its accomplishment in theory -- and being possible in principle is the only thing that matters here. It might be that a classical computer capable of performing such simulation could not fit in the size of a human skull, but the theorem 1 does not necessitate this (after all, the computer can control an android body remotely).

Another indication of the validity of this theorem is the contemporary development of AI, which is interpreted by many to encourage our belief that even a practical construction of a classical computer matching the brain capabilities could be just a matter of time.

It is worthy of noting that the premise of a fully deterministic brain evolution can be relaxed without changing the conclusion of the theorem. Namely, if there are elements of chance inherent in the brain functioning, these could be also simulated, either on a classical computer using pseudo-random numbers, or maybe with the aid of some pre-supplied set of true random numbers. As a matter of fact, it is very difficult to motivate why good enough pseudo-random numbers would not be capable to reproduce qualitatively the same behavior as a brain functioning in part indeterministically. This further affects the case of real-life brain which is, as everything else, a quantum system. In principle, a classical computer should be capable to also simulate, with arbitrary required precision, the quantum evolution of a given brain state. After that, the output response could be selected from the resulting superposition of classical brain states by some (pseudo-)random number generator. This task would likely be so enormously computationally demanding that it is not clear if it would be feasible even in theory (given the finite resources available in the universe) -- but it is nevertheless hard to see why would these constraints be relevant for the discussion of consciousness. A potentially bigger obstacle is to obtain the initial condition for the brain state -- could such state be measured (in a relevant basis) and would quantum mechanical ``no-cloning'' theorem disqualify this prospect even in principle. However, the contemporary neuroscience position suggests that such fine-details of quantum states of the brain do not influence the computation of behavioral responses and that a mere high-precision scan on the classical level should provide good enough initial data. Therefore, it seems that the conclusion of the Theorem 2 follows from the Behavioral theorem even in a more general context of a stochastic or a quantum mechanical brain.
\end{proof}

\begin{theorem}
(On temporal correlation): Subjective experiences (or illusions of) that accompany dynamical evolution of a certain system (e.g.\ brain) are synchronized with that evolution, and do not depend on evolution of the system either before or after the considered period of the dynamical evolution (apart from due to the contents of memory).
\end{theorem}

\begin{proof}
This should be an immediate consequence of the NP and the framework of classical physics. Namely, subjective experience is here understood to be a direct product and a function of the brain operation (or of the functioning of the agent as a whole). Thus, there is no way in which subjective experiences could precede the corresponding brain dynamics and brain states. Similarly, it is also impossible for the experiences to appear with a (significant) lag, since there is nothing that could ``memorize'' brain state from its occurrence until the appearance of the corresponding experience. For example, were the subjective experiences entity with independent existence, this would not be necessarily so straightforward conclusion.
\end{proof}

\begin{theorem}
(Repetition theorem): If a physical system that gives rise to consciousness undergoes identical dynamical evolution more than once (fed by the identical stimuli and starting from the same initial state each time), each time the same subjective experience arises. Alternatively, if another system identical in every detail to the first one, undergoes the same dynamics, both evolutions must be accompanied by identical subjective experiences.
\end{theorem}

\begin{proof}
The theorem statement can be seen either to follow from the Theorem 3 (since experience arrises coincidentally with the brain dynamics and none of the evolution runs cannot ``know'' about the previous ones, nor can influence the repeated runs afterward), or by noting that if the consciousness is a consequence of system dynamics, the fact that a period of identical dynamics in the same system state has already occurred before cannot preclude or influence the emergence of the same experience again. Same goes if we consider a different system, but identical with the first in all relevant properties.

Surprisingly, the theorem statement runs opposite to some of the widespread expectations (not the majority, however). Namely, there is a quite popular subculture of people that take very seriously prospects of science-fictional type of teleportation or ideas of uploading one's personality into a computer. But, the theorem poses the following problem in this regard: a hypothetical identical copy of an agent (produced in an attempt to teleport him) would also then produce another instance of his subjective personality (with all accompanying experiences). This further means that copying would not affect the original agent, who might be as well unaware that was copied: the original agent will obviously still remain with his own personality and subjective experiences, correlated solely to his original body. If the original was to be disposed afterward (as the idea of teleportation requires), the original instance of the agent would experience that simply as his demise (instead of as being efficiently transported), with only a feeble consolation in the fact that his copy, which he cannot subjectively experience and is no more than a distant twin-sibling to him, persists. (The same problem plagues the even more far-fetched idea of digital ``uploading".) As a remedy, it is sometimes conjectured that the two identical instances of the physical agent would correspond to a single personal identity (i.e.\ single consciousness with two physical bodies), and that, thus, subjective experiences would be lived through only once (instead of each time again with every new copy or a new run of the identical evolution). This is also sometimes motivated by the claim that if all properties of the system that give rise to the subjective experience are the same in the agent and in the copy, then they must correspond to one and the same experience, and not to a multitude of experiences with the same content. However, in our opinion, this argument ignores the fact that not all properties are identically the same: position of the system in time and space is different for original and the copy. There are also attempts to advocate this position based on broader ontological presumptions and leading to more radical conclusions \cite{Zuboff}. However, it is easy to identify many conceptual problems with such conjecture which, at least, make it incompatible with the NP. To start with, there are problems related to non-locality. One aspect of this problem is related to the Theory of Relativity, and it is analyzed in \cite{Knight}. But probably more acute in our context is that such nonlocality seems to defy ideas of strict mechanical causality assumed by NP. For example, if the copy of a given agent is artificially constructed at a large spatial and/or temporal distance, it is purported that agent's consciousness nevertheless becomes immediately linked to both bodies, without any physical connection. This seems to be quite acausal for the standards of classical physics. Besides, it is quite unclear how precisely identical the copy must be to attain such connection. Matters become only more so hazy if we consider (illusion of) agency and its potential action over distance. The very idea could be probably more interesting in a broader ontological setting (e.g.\ where information is fundamental entity instead of matter, and where it might be argued that repeating of the same informational content does not give rise to a new instance of subjective personality). However, we find it irrelevant in the context of NP and thus of no influence to the validity of the theorem.
\end{proof}

\begin{theorem}
(Simulation theorem): Simulated android (or androids), existing only in a virtual reality realm simulated on a classical computer, can be, in principle, programmed to be as conscious as humans.
\end{theorem}

\begin{proof}
We wil first consider a case when there are two actual Data ``units": Data1 and Data2. Since each of the CPU's can be located in the external computers, nothing prevents these two computers to actually be a single computer powerful enough to do the necessary calculations for both. Obviously, no need to stop at two -- sufficiently powerful computer cluster would be able to control an arbitrary number of Datas (a sort of ``Data'' cloud), each of which would be fully sentient in that cluster as it would have been if its CPU was situated in the android's head.

As the next step, we note that the data ("data'' in lowercase denotes the common meaning of the word, not the android) originating from androids' sensors and arriving as the input into the cluster is, from some point on, necessarily digital (at least at the cluster input point). A powerful enough cluster should be thus perfectly capable of simulating this data as well, i.e.\ by generating some rich enough virtual reality for these androids. How can we be sure that the consciousness has not disappeared in this transition? Well, let's take, at first, that the computer is simulating the actual environment of these robots, and only replacing, from some point on, the data arriving from android's sensors with the calculated data. For some brief interval of time at least, the discrepancies between the computed data, and the actual one should not be significant. Thus, the androids should notice no difference, at least no significant difference: even if/when they are able to eventually tell the simulation from the true reality (e.g.\ by the level of detail) that should not affect the fact that they must remain conscious in the simulated surrounding if they were so earlier when their inputs were sampled by some external sensors (more precisely, in the NP context, there seems to be no room for any difference in this regard). Besides, we can, for a while, keep signals from the simulated Data's minds drive the actual robotic bodies ensuring the behavioral condition -- but this motion of the bodies would now become just an epiphenomenon, as nothing from the outside world is any longer entering the simulation. These behaviors of the physical android bodies can only serve us as an indicator of the sentience of agents that drive the bodies, and since the bodies do not influence the agents anymore, and that the agents have to be just as sentient with us watching as without, we may safely turn the bodies off (as a direct consequence of the NP, act of observation cannot influence reality). It seems that there can be nothing, within the paradigm set by the NP, that could invalidate the above chain of reasoning leading to the conclusion of the theorem.
\end{proof}

\begin{lemma}
The computer running a closed simulation containing conscious agents can be arbitrarily slow.
\end{lemma}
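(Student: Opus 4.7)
The plan is to reduce Lemma 2 to Theorems 1, 3 and 5 by exploiting the informational isolation of the simulation. First I would recall that by Theorem 5 the closed simulation, running at some nominal processor speed, contains agents whose qualitative level of consciousness matches that of humans. The crucial physical fact to invoke is that, under NP, everything inside the simulated realm is driven strictly by the host machine's successive computational steps; by construction the simulation is causally sealed off from the outside world, so no external signal (wall-clock pulses included) enters the virtual environment.

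Next I would argue that uniformly slowing the host computer by an arbitrary factor cannot introduce any objective difference detectable inside the simulation. The ordered sequence of internal states the virtual agent traverses, the stimuli it receives, and the responses it produces is exactly the same sequence as before; only the wall-clock spacing between consecutive updates has changed. Since the agents inside can measure time only by comparing one simulated event with another simulated event (ticking of a virtual clock, heart-beat of a virtual body, counting of uttered syllables), and all such comparisons are ratios of simulated durations which are invariant under uniform rescaling, the behavior inside the simulation remains qualitatively indistinguishable from the fast-run case. By the Behavioral theorem (Theorem 1) the level of subjective experience must then be the same, and by Theorem 3 the experiences unfold in step with the simulated dynamics rather than with any external wall-clock, so nothing is ``stretched'' or ``lost'' by the slowdown.

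The main obstacle I anticipate is an objector insisting that the \emph{physical} host does mark external time through its thermodynamic and electromagnetic processes, and that subjective experience should somehow be tied to this physical substrate rather than to the logical simulation steps. To handle this I would stress that, under NP, the only thing relevant for the emergence of consciousness of a simulated agent is the effective dynamical evolution of the state that represents that agent; auxiliary physical degrees of freedom of the host which are not part of this representation have already been declared inessential both when we moved the CPU out of the android's head (Lemma 1) and when we swapped silicon architectures in Theorem 5. A supplementary remark is that the argument does not actually require the slowdown to be uniform: even an erratic, arbitrarily paused execution leaves every internal comparison of simulated events intact, so the lemma in fact licenses a stronger statement -- the host may run on any schedule whatsoever, and may sit idle between steps for arbitrarily long intervals, without altering the consciousness of the agents inside. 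This stronger form is what will be needed in the subsequent switch to the sand-and-wood Turing machine.
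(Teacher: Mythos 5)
Your proposal is correct and takes essentially the same route as the paper's proof: the key point in both is that agents inside an informationally sealed simulation can measure time only relative to other simulated events, so a uniform (or even erratic, paused-and-resumed) slowdown of the host is undetectable from inside and cannot alter the subjective experiences. Your explicit appeals to Theorems 1 and 3 and your remark that the schedule need not be uniform merely make precise what the paper's pause/resume and thousand-years-per-virtual-second examples convey informally.
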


\begin{proof}
Once we have isolated Datas' world in a computer simulation, the following becomes clear: if anyone clicks a button ``pause simulation'' and then, after 10 seconds presses ``resume'' button, androids living inside of the computer will not notice any gap. Their lives and experiences will just take up where they have stopped upon the press of pause. Similarly, they cannot notice if the powerful computer slows down a bit -- they can measure the passage of time only relative to events within the simulation -- so if everything slows down, that must be unobservable to them. Even if the computer becomes disastrously slow, so slow that it takes a thousand years of real-time to simulate a single second of the virtual world, the androids would not notice that. In other words, in spite of widespread prejudices that computers performing such hypothetical simulations which include sentient agents must be futuristically advanced and of immense computational speed, we see that actually the computer cluster does not need to be fast at all, only of pretty huge capacity.
\end{proof}

\begin{theorem}
A system comprising a pattern of X/O symbols written in sand and a wooden Turing head mechanism performing certain automatic operations (of ``reading'' the symbol beneath, flattening out the sand, inscribing a new symbol and moving left/right one step) according to the instructions embedded also in the sand, can contain conscious agents living through human-like subjective experiences (i.e.\ agents belonging to the same equivalence class of consciousness level as humans).
\end{theorem}

\begin{proof}
Combined conclusions of the Theorem 2 and Lemma 2 indicate that the technical details of the computer realization cannot play any significant role, as long as the computer is capable of performing the required behavioral calculations. This arbitrariness certainly includes particular computer architecture (e.g.\ RISC or CISC), so we are also free to choose one from the variety of universal Turing machines, with a binary-alphabet. Note that we are here not interested in halting problem or anything similar.\footnote{For these reasons we also need not bother to discuss accepting/rejecting states of the machine.} Instead, we are concerned with a specific well defined finite problem: running the simulation of the Datas and their environment for a finite virtual time (the time might be defined beforehand, e.g.\ one virtual hour) and with finite precision. The program is not supposed to halt on any condition apart from reaching the end of the simulation time, which is again a well-defined condition which can be translated into a finite limit for the number of computational steps (therefore, it eventually halts for any given initial conditions). It is quite obvious that such task can be performed on an extremely powerful but finite-resources computer cluster, and thus it is also doable by using a finite length tape Turing machine (besides, finite number of execution steps implies that a finite tape is sufficient).

Turing machines are abstract mathematical idealizations, and as a such does not per se exist in the real world (especially not one governed by classical physics). We are free to choose any proper physical realization, and thus the one with a mechanical wooden head writing X/O symbols in the sand (instead of on tape) is also fine. We use universal Turing machine, where the simulation program itself is also written on the tape (i.e.\ in the sand), while the head itself is fairly simple: for binary alphabet it was shown that 18-state head is sufficient \cite{SmallestBinaryUTM} (or, we could group X/O symbols into n-tuples, mimicking in this way machine with bigger alphabet while choosing a head with only 2 internal states). Without changing the essence of the no-go argument, we might have also chosen Rogozhin's machine with 6 symbols and 4 states, which uses only 22 instructions \cite{SmallestUTM}. In addition to the simulation program, the initial state of sand symbols also properly encodes the initial state of the simulation, that is the state of Data(s) together with the virtual environment, at some given time $t_0$.

After the initialization, the head is put in operation: according to the instruction table and taking into account only the symbol directly below the head (it reads the symbol by sensitively checking sand level at a few points), it either replaces that symbol in the sand (by firstly flattening out the sand at the spot and stamping the new one) or just moves one step to the left or right -- and proceeds to the instruction indicated as the next one. It is well known and established with mathematical rigor that this procedure can be made absolutely computationally equivalent to the intricate operation of the futuristically advanced super-computer, for which we have already concluded that must be giving rise to the subjective experience of the Data-agents within the simulation. This seems to confirm the assertion of the theorem.
\end{proof}

\begin{theorem}
(The no-go theorem). The Neuroscience postulate implies that a human brain cannot have any greater ability to induce subjective experience than a process of writing a certain sequence of digits (writing can be done in an arbitrary way and on the arbitrary surface).
\end{theorem}

\begin{proof}
The outline of the proof is the following. First, we will modify the wooden Turing machine from the Theorem 6, so that it no longer ever deletes any information -- each new state of the tape is now written in a new line of sand, leaving the entire log of the calculation as the result of the operation. Then we will consider and compare two scenarios. In the first one we will re-run the same simulation, with identical initial conditions (i.e.\ identical initial state of both Turing tape and head), right below the previous run -- and this must give rise to a new occurrence of Data's subjective experiences. In the second scenario, we will consider a wooden head that merely copies the log of the previous run, and argue that there cannot be any difference of the two scenarios with respect to the elicited subjective experiences. The reasoning will employ a sequence of ``hybrid heads'' that gradually bridge the difference between computing and copying in this setup. Once we have established that the copying must have an equal effect on subjective experiences as computing anew, it is a trivial part to reinterpret copying of X and O symbols as the writing of a given (binary) number.

Therefore, let us consider a slightly modified version of the (Post-)Turing machine where nothing is ever erased: instead of replacing a sand symbol somewhere in a line, as the result of each such instruction the head writes again the entire row -- writing this time the new value at that position while simply copying all the other values. The new line is written just below the previous one, and so on. Furthermore, we will include in each written row also the entire information about the state of the head (the label of the current instruction and the position of the head). Mathematically, it can be accomplished in many ways: the standard methods for encoding the current state of the entire machine ("the complete configuration") were discussed already by Turing himself, and those most straightforward include writing some auxiliary symbols (i.e.\ apart from X and O) \cite{TMgeneral}. If necessary, we can devise for this a supplementary machine (which can be later seen as a part of the same mechanism) that will be activated after every operation of the main head, with the task only to traverse the current line from start to the end and to perform the symbol-copying, plus to add these few extra symbols describing the main head state and position. And, in order not to spoil the basic idea of having solely X and O symbols written in the sand, we can encode any auxiliary symbols by grouping X and O symbols in n-tuples: for example, we can simply pair all symbols two-by-two, giving us now effectively an alphabet of four symbols, two of which can be used as special markers and have a head that operate on pairs, instead on single symbols (obviously, we can combine more than two X/O symbols if necessary).

Though certainly not optimal performance-wise, this new calculation protocol allows that the entire information of the system current state is encoded in each line in the sand and, since nothing is ever deleted, the log containing each step of computation remains in the sand once the process is over. (We can decide to rewrite the line also each time the head moves, even if no symbol is altered -- but the reasoning is independent of these details.) These modifications do not change the computational abilities of the machine: the evolution of the simulated virtual reality and the of the Datas inside is being calculated just as before. As before, we can confirm that by allocating a certain number of X/O symbols for the ``display'' and thus be reassured that androids look as conscious as before, even in this now 2-dimensional giant tic-tac-toe like pattern.

Next, below the rows of the just-finished simulation, we will repeat the same simulation again, starting with a freshly written initial machine state in the first row. If we let the head do its job once more, the identical pattern of the computation log will be written all over again. To facilitate analysis, let us enumerate lines of the first run as $A_1$, $A_2$, $A_3$..., and of the current run $B_1$, $B_2$, $B_3$... The resulting sand pattern, after the two runs, would then be something of the following sort (though immensely bigger and more complex):
\begin{verbatim}
A1: XXOOOOXXOOXOXXXXOOXXOOOO...
A2: XXOOOOXXOOXOOOXXOOXXOOOO...
A3: XXOOOOXXOOXOXOXXOOXXOOOO...
A4: XXOOOOXXOOOOXOXXOOXXOOOO...
...
B1: XXOOOOXXOOXOXXXXOOXXOOOO...
B2: XXOOOOXXOOXOOOXXOOXXOOOO...
B3: XXOOOOXXOOXOXOXXOOXXOOOO...
B4: XXOOOOXXOOOOXOXXOOXXOOOO...
...
\end{verbatim}

The line identifiers $A_i, B_i$ do not appear in the sand (they are only written for our convenience) and we assumed (as proposed in the above modification) that the head operates on pairs of symbols as if being a single symbol. (The schema above is merely a sketch and should not be scrutinized. We did not feel as necessary to provide a too formal definition of our modified machine as it suffices to establish that such modification is possible.)

According to the Repetition theorem, as the $B$ rows are written the Datas will certainly go through the same subjective experiences once more, as they did while generating the $A$ rows. More precisely, while the wooden head was generating line $A_{n+1}$ based on the line $A_n$ it was already established that the sentience had to appear. Similarly, when the same head calculates row $B_{n+1}$ based on symbols in $B_n$, the same subjective experiences also occur (Repetition theorem).

Next, note that if the same head calculates the row $B_{n+1}$ based on the row $A_{n}$ instead of $B_{n}$, the consciousness must emerge in the same way: symbols in rows $A_{n}$ and $B_{n}$ are identical, so which one it is going to read is inessential (though it is time-consuming to go up and down those few billion of lines in order to write each symbol, we have concluded that CPU clock is irrelevant). More precisely, in doing this the head moves up to the corresponding symbol of row $A_{n}$, reads it and based on its value and the current state of the head (m-configuration) ``calculates'' the next symbol to be written at $B_{n+1}$. Even more precisely, according to the symbol just read at $A_{n}$ and the current head state, the gears in the head rotate so as to prepare either the X-die (that is, piece of wood, a plank, with X embossed) that will imprint the X symbol, or a die with a symbol O (or pairs of symbols together, if we take into account that technical modification). This preparation of the embossed die we will call the operation ``prepare the computation result". Then it moves back down to the row $B_{n+1}$ and impresses the prepared symbol in the sand. This is nothing else but the standard computational procedure, so the sentience of simulated androids must be there.

Next, we will devise a head that does, in some sense, both ``computing the next symbol'' and ``copying the next symbol'' operations at once. It first goes to the row $A_{n+1}$ and engages in the following simple ``prepare copy'' operation: it reads the corresponding symbol at $A_{n+1}$ to be copied and prepares the wooden die with the same symbol it just read. If it now goes back to the row $B_{n+1}$ and imprint this symbol it will be simply performing the symbol copying procedure. But the head will not return yet. Instead, after doing the ``prepare copy'' operation, it will immediately climb one row up, at $A_{n}$, read that symbol and also, in addition, do the ``prepare computation result'' operation, i.e.\ shift the gears so that the correct die for imprinting at $B_{n+1}$ gets prepared, now based on internal state and the ``computation". Only that this time actually nothing changes after the ``prepare computation result'' operation since the correct die was already prepared by the operation ``prepare copy'' (rows $A_{n}$ and $A_{n+1}$ are consistent). Next, the head moves back to the $B_{n+1}$ and imprints the prepared symbol. Again, the sentience of the simulated androids must emerge, as there is no way why the ``prepare copy operation", done before the calculation and in principle ``overwritten'' by the process of calculation could spoil the emergence. After all, this ``prepare copy'' operation can be seen as a part of the computation process, which does not actually do anything, but neither spoils the overall operation (as inserting a few NOP-s in the assembly code).

In the next version we will let the head do just the same, i.e.\ both operations before imprinting the prepared symbol at $B_{n+1}$, only this time we will deliberately break down/disable the last part of the ``prepare computational result'' mechanism -- the one that puts the correct die in position. However, since the correct die was prepared already by the just performed ``prepare copy'' operation, nothing changes in the overall operation. This now broken part of the mechanism was not actually doing anything also in the previous setup. So stopping it from doing something that anyhow did not have any effect cannot be detrimental for the emergence of anything, including sentience. So, if the sentience used to emerge before, it must also now. Only, this time, the reading of the symbol at $A_{n}$ and switching of the gears and internal states, while still being carried out, is absolutely uncorrelated with the symbols written in the $B$ rows. What the head does is merely copying the symbols from $A_{n+1}$ to $B_{n+1}$, followed by some clicking of gears that has nothing to do with the symbols written. In this way we have disentangled the switching of head internal states on one side and writing of the symbols on the other, making them uncorrelated in the cause-effect sense (which is the only sense that can matter in the framework of classical physics), while retaining the subjective experiences of the simulated agents. Except that there is no simulation any longer -- what is happening in the sand is mere copying of the symbols.

To arrive at this final conclusion, we essentially used the fact that it is very hard to explain by which magic would ``the way the head prepares the next symbol to be impressed into the sand'' be crucial in the feat of generating the subjective experience. Whether is the next die is prepared based on the symbol at row $A_{n+1}$, or based on the symbol at row $A_{n}$, cannot truly matter if the result is always the same. And, to be sure, we went through these intermediate steps involving the hybrid head.

Alternatively, we may see this combined mechanism of the hybrid head in the final stage as two separate mechanisms, one stacked over the other. Below is the copying head and it can fully perform its regular operation. Above is the computing head. It also performs its usual duty, only the copying head is always a few seconds ahead of it -- therefore, when the computing one attempts to imprint the correct symbol in the next row, actually nothing is done, since that very symbol has been imprinted by the copying head just a second before. Its die hits the empty space. However, that cannot matter, since we have successfully retained both the dynamics of the sand symbols and the evolution of the computing head (clicking of gears, changing internal states), so that any subjective experience that used to emerge must still be present in this setup. This is also clear since the combination of the two heads can be seen together as a single composite head, that head performs the computation (it can be easily seen if we imagine that there is no log of the previous run to copy from, or if the previous run was different -- anything that the copying head do will be overwritten by the computing mechanism).

On the other hand, there is no longer any causal effect of the computing head on the sand. This is equivalent to the effect of heads positioned somewhere in the midst of the simulation log: as discussed in the Part I, these heads would traverse the symbols but would alter nothing, since all the symbols they attempt to imprint are already there at that very spot. Indeed, we may again, in this latest scenario, position a few of the calculating heads at some position over the rows before, initialize them to the state proper for that line of log and put them in operation. Now, we have a few computing heads active over the sand rows: the original one (that goes in pair with the copying head), and a few additional ones. None of them affects sand in any way. Our point is that, since none of these heads has any causal effect on the sand, there is nothing that singles out the original head, which follows the copying head closely at the last row. And we have concluded already in the Part I that such computing heads, which do not move any sand but merely change their internal states, have nothing to do with the emergence of consciousness. Since our conclusion was that consciousness must emerge also in this setup (if it used to emerge due to action of a single computing head), then it must be solely due to the effect of the copying head, which does all writing in the sand. Therefore, we can safely remove all the computing heads and the remaining copying one must be sufficient to induce subjective experiences of Data.

There is yet another, quite different way to realize that complexity of those gear-motions that are going on in the wooden head (and this complexity is already very low in the simplest of the universal Turing machines) cannot be of significance for the emergence of the consciousness (as long as the correct symbols are generated). Namely, we could repeat the entire proof by using ``Rule 110'' cellular automaton \cite{Rule110} instead of a Turing machine realization. It is less well known that such a simple automaton, where in each step the binary value of every cell is replaced by the value dependent solely on the value of that cell and its two neighbors, is Turing complete, that is, is capable of performing any computation that can be performed using the most advanced computers. Rule 110 can be implemented by almost a trivial wooden device, that features no internal states and no instruction sets. The device needs to only traverse the line of symbols step by step, for each three consecutive symbols check which of the 8 possible patterns they form and according to that imprint a single symbol in the line below. (This checking can be again realized by sensitively ``touching'' the sand, while the patterns that are tested can be for example embossed on a cylinder, at equal angles. The cylinder can revolve, consecutively positioning each of the patterns above the sand surface and trying if that shape fits the relief below. Each of 8 orientations of the cylinder mechanically brings either X or O die over the row bellow. When the patterns match, the symbol is imprinted, the head moves a step to the right, then checks again all eight orientations of the cylinder and so on.) Since the automaton is Turing machine equivalent, and the simulation of one hour takes a finite number of steps on a Turing machine, so it must take finite number of rows of the automaton operation as well and a finite length initial pattern must suffice.

Now, if we go again through the ``repeated run'' versus ``copying run'' argument, it is far more difficult this time to insist that it is the functioning of the wooden machine -- being so trivial -- that is essential for the emergence of the subjective experience. The copying head, which differs only in comparing ``one-bit pattern'' instead of ``three-bit pattern'' is hardly anything mechanically and motion-vise different from the ``Rule 110'' wooden head.\footnote{Besides, copying head can be implemented as a ``Rule 204'' automaton with the otherwise identical design.} Again, we need to remind that ``Rule 110'' or ``automaton'' is here only our description for certain complex motion of wood and sand, and our task is to compare what type of motion has greater potential to elicit subjective experience. Since the motion of the wooden head is so similar and trivial in both cases, what remains as a hypothetical source of emergent consciousness are only the emergent symbols in the sand -- no matter if the correct sequence is inferred from the previous row, or from a previous run (by mere copying). In any case, the burden of a potential proof of the counter-hypothesis that one setup causes consciousness while the other one does not seems to be too heavy.

Once we have concluded that the copying process has the same chances to generate consciousness as the computing one, the rest of the proof is trivial: we can go again through entire reasoning just replacing X with 1 and O with 0, concatenate all rows together, and realize that we are copying a huge number. Whether it is done by a wooden mechanism, by hand or by some other means obviously cannot cause any relevant difference (besides, a human could have played the role of the wooden head from the start). This completes our proof of the no-go theorem.
\end{proof}


\begin{thebibliography}{10}

\bibitem{EverettThesis} Hugh Everett (1955) ``The Theory of the Universal Wavefunction'', Manuscript, pp 3–140 of Bryce DeWitt, R. Neill Graham, eds, The Many-Worlds Interpretation of Quantum Mechanics, Princeton Series in Physics, Princeton University Press (1973), ISBN 0-691-08131-X.

\bibitem{EverettThesis2} Hugh Everett III, Jeffrey A. Barrett (Ed.), Peter Byrne (Ed.), ``The Everett Interpretation of Quantum Mechanics: Collected Works 1955-1980 with Commentary'', Princeton University Press, Year: 2012, ISBN: 0691145075.

\bibitem{Heisenberg} Werner Heisenberg, {\it The representation of nature in contemporary physics}, Deadalus 87, 95 (1958).

\bibitem{Planck} Interview in 'The Observer' (25 January 1931), p.17.

\bibitem{Schrodinger} Erwin Schr\"odinger, ``Mind and Matter", University Press, Cambridge, University press, 1958, p.42.

\bibitem{Wigner} Wigner, Eugene; Henry Margenau (1967). ``Remarks on the Mind Body Question, in Symmetries and Reflections, Scientific Essays", American Journal of Physics. 35 (12): 1169–1170. E. P. Wigner, {\it Remarks on the Mind-Body Question}, The Scientist Speculates. Heineman. (1961).

\bibitem{GRW} Ghirardi, G.C., Rimini, A., and Weber, T., ``Unified dynamics for microscopic and macroscopic systems". Physical Review D. 34: 470 (1986).

\bibitem{OrchOR} S. Hameroff and R. Penrose, ``Consciousness in the universe: A review of the ‘Orch OR’ theory", Physics of Life Reviews 11 (2014) 39-78.

\bibitem{Wheeler1} Wheeler, John A. (1990). ``Information, physics, quantum: The search for links". In Zurek, Wojciech Hubert (ed.). Complexity, Entropy, and the Physics of Information. Redwood City, California: Addison-Wesley. ISBN 978-0-201-51509-1.

\bibitem{Wheeler2} Wheeler, J. A., ``At Home in the Universe'', New York: American Institute of Physics (1994).

\bibitem{QBism} C. A. Fuchs and B. C. Stacey, {\it QBism: Quantum Theory as a Hero’s Handbook}, arXiv:1612.07308v2

\bibitem{Rovelli} C. Rovelli, International Journal of Theoretical Physics 35, 1637 (1996).

\bibitem{Brukner} C. Brukner, {\it On the quantum measurement problem}, in ”Quantum [Un]speakables II”, Eds. R. Bertlmann and A. Zeilinger (The Frontiers Collection, Springer, 2016); arXiv:1507.05255

\bibitem{Brukner2} C. Brukner, {\it A no-go theorem for observer-independent facts}, arXiv:1804.00749v1

\bibitem{ToTheRescue} I. Salom {\it To the rescue of Copenhagen interpretation}, arXiv:1809.01746v1

\bibitem{Chalmers95} D. Chalmers, {\it Facing up to the problem of consciousness}, Journal of Consciousness Studies 2(3):200-219, (1995).

\bibitem{Eccles} J. C. Eccles, ``How the Self Controls its Brain'', Berlin: Springer-Verlag, (1994) ISBN 3-540-56290-7.

\bibitem{Lockwood} M. Lockwood, ``Mind, Brain, and the Quantum: The Compound `I' '', Wiley-Blackwell, ISBN-10: 0631180311, 1992 (1989).

\bibitem{Chalmers96} D. Chalmers, ``The conscious mind'', New York: Oxford University Press (1996).

\bibitem{Hoffman} D. Hoffman, {\it Conscious Realism and the Mind-Body Problem}, Mind and Matter, 6 (2008) 87-121

\bibitem{CognitiveToQuantumExample1} Schwartz Jeffrey, Stapp Henry P and Beauregard Mario, {\it Quantum physics in neuroscience and psychology: a neurophysical model of mind-brain interaction}, Phil. Trans. R. Soc. B
360 (2005). http://doi.org/10.1098/rstb.2004.1598

\bibitem{Chalmers97} D. Chalmers, {\it Moving Forward on the Problem of Consciousness}, Journal of Consciousness Studies 4(1):3-46 (1997).

\bibitem{Koch} C.\ Koch, ``Consciousness: Confessions of a Romantic Reductionist'', MIT Press, 2012.

\bibitem{Tononi} Giulio Tononi, {\it Integrated Information Theory of Consciousness: An Updated Account},  Archives Italiennes de Biologie, Vol. 150, No. 4, pages 293-329; (2012).

\bibitem{PanpsychismOverview} Godehard Bruntrup and Ludwig Jaskolla (eds.), ``Panpsychism: Contemporary Perspectives'', Oxford University Press, 2017, ISBN 9780199359943.

\bibitem{Dennett} Daniel C. Dennett, ``From Bacteria to Bach and Back: The Evolution of Minds", W. W. Norton \& Company 2017, ISBN 978-0-393-24207-2

\bibitem{TypeA2} Patricia S. Churchland and T. J. Sejnowski, ``The Computational Brain'', (1992) . Cambridge, Massachusetts: The MIT Press.

\bibitem{TypeA3} Webb TW, Graziano MS (2015). ``The attention schema theory: a mechanistic account of subjective awareness". Front Psychol. 6: 500. doi:10.3389/fpsyg.2015.00500

\bibitem{TypeB1} Valerie Gray Hardcastle, ``Locating Consciousness'', John Benjamins Publishing Company (1995), ISBN: 978-1556191848

\bibitem{TypeB2} O'Hara, Kieron and Scutt, Tom, {\it There is No Hard Problem of Consciousness},  J. Shear (ed.), Explaining Consciousness: The Hard Problem. pp. 69-82 (1997).

\bibitem{TypeB3} John G. Taylor, The Race for Consciousness,  A Bradford Book (2001),ISBN: 978-0262700863

\bibitem{TypeB4} Hofstadter, Douglas R., ``I Am a Strange Loop'', (2008) New York, NY: Basic Books. ISBN 0-465-03079-3.

\bibitem{ChineseRoom} Searle, John. 1980a. ``Minds, Brains, and Programs.'' Behavioral and Brain Sciences 3, 417-424.

\bibitem{ColorblindMary} Jackson, Frank (1986-01-01). ``What Mary Didn't Know". The Journal of Philosophy. 83 (5): 291–295. doi:10.2307/2026143

\bibitem{PhilosophicalZombies} Kirk, Robert. ``Zombies". The Stanford Encyclopedia of Philosophy (2009), Edward N. Zalta (ed.).

\bibitem{ConsciousnessTest} Susan Schneider, Edwin Turner, ``Is Anyone Home? A Way to Find Out If AI Has Become Self-Aware'', Scientific American (2017), https://blogs.scientificamerican.com/observations/is-anyone-home-a-way-to-find-out-if-ai-has-become-self-aware/

\bibitem{OtherMinds} Avramides, Anita, ``Other Minds", The Stanford Encyclopedia of Philosophy (Summer 2019 Edition), Edward N. Zalta (ed.)

\bibitem{SearleOnRebellion} J.R. Searle, ``What Your Computer Can’t Know'', The New York Review of Books (October 9, 2014), https://www.nybooks.com/articles/2014/10/09/what-your-computer-cant-know/

\bibitem{SimulationHypothesis} Bostrom, N., 2003, {\it Are You Living in a Simulation?}, Philosophical Quarterly (2003), Vol. 53, No. 211, pp. 243-255.

\bibitem{TMgeneral} De Mol, Liesbeth, ``Turing Machines", The Stanford Encyclopedia of Philosophy (Winter 2018 Edition), Edward N. Zalta (ed.)

\bibitem{WoodenTuringMachine} Lara Grant, ``Mechanical wooden Turing Machine'', March 8, 2018, https://hackaday.com/2018/03/08/ mechanical-wooden-turing-machine

\bibitem{BoltzmannBrain} Sean M. Carroll, {\it Why Boltzmann Brains Are Bad}, arXiv:1702.00850 [hep-th]

\bibitem{DeWitt} Bryce S. DeWitt, ``Quantum mechanics and reality", Physics Today 23, 9, 30 (1970)

\bibitem{MWrequiresPostulate} Vaidman, Lev, ``Many-Worlds Interpretation of Quantum Mechanics", The Stanford Encyclopedia of Philosophy (Fall 2018 Edition), Edward N. Zalta (ed.)

\bibitem{MWchance} S.\ Saunders, {\it Chance in the Everett Interpretation}, In Many Worlds: Everett, Quantum Theory, and Reality, Oxford University Press (2010). DOI: 10.1093/acprof:oso/9780199560561.001.0001

\bibitem{ManyMinds} D. Albert and B. Loewer, Synthese 77 (1988) 195–213. https://doi.org/10.1007/BF00869434

\bibitem{Squires} E. Squires, {\it One mind or many -- A note on the everett interpretation of quantum theory}, Synthese (1991) 89: 283. https://doi.org/10.1007/BF00413909

\bibitem{MWproblems1} Adrian Kent, ``One world versus many: The Inadequacy of Everettian accounts of evolution, probability, and scientific confirmation", In Saunders, Simon (ed.) et al.: Many worlds? Everett, quantum theory, and reality  (2010) 307-354;

\bibitem{MWproblems2} H. Greaves, {\it Understanding Deutsch’s probability in a deterministic multiverse}, Studies in History and Philosophy of Modern Physics 35 (2004) pp.423-456;

\bibitem{ActOfCreation} J. A. Wheeler, Law without law, in ``Quantum theory and measurement", eds. J.A.Wheeler and W.H.Zurek, Princeton University Press, Princeton (1983) 182.

\bibitem{NoSelfMeasurement} Breuer, T., 1993, {\it The impossibility of accurate state self-measurements}, Philosophy of Science, 62: 197-214.

\bibitem{DelayedChoice} Wheeler, J. A. {\it The `Past' and the `Delayed-Choice' Double-Slit Experiment},
In Mathematical Foundations of Quantum Theory , edited by A. R. Marlow.
New York: Academic Press, 1978, 9-48. Reprinted in part in WZ, 148 pp. 182-200.

\bibitem{RennerNature} D.\ Frauchiger, R.\ Renner, {\it Quantum theory cannot consistently describe the use of itself}, Nat Commun. 9, 3711 (2018), DOI: 10.1038/s41467-018-05739-8

\bibitem{TwentyQuestions} Gribbin, John; Gribbin, Mary; Gribbin, Jonathan, ``Q is for Quantum: An Encyclopedia of Particle Physics, (2000) Simon and Schuster. ISBN 9780684863153.

\bibitem{RandomnessAmp} P.\ Jedlicka, {\it Revisiting the Quantum Brain Hypothesis: Toward Quantum (Neuro)biology?}, Front Mol Neurosci. (2017);10:366. doi:10.3389/fnmol.2017.00366

\bibitem{IITUnconsciousAI} G.\ Tononi and C.\ Koch, {\it Consciousness: here, there and everywhere}, Phil. Trans. R. Soc. B, 370 (2015) DOI: full/10.1098/rstb.2014.0167

\bibitem{Zuboff} Zuboff, A., {\it One self: The logic of experience}, Inquiry, 33(1), pp.39-68 (1990).

\bibitem{Knight} Andrew Knight, {\it Refuting Strong AI: Why Consciousness Cannot Be Algorithmic},	arXiv:1906.10177

\bibitem{SmallestBinaryUTM} Neary, T., and Woods, D., {\it Four small universal Turing machines}, Proc. 5th Int. Conf. on Machines, Computations, and Universality LNCS-4664, Springer-Verlag, 242–254 (2007).

\bibitem{SmallestUTM} Rogozhin, Y., {\it Small universal Turing machines}, Theoretical Computer Science, 168, 215–240 (1996).

\bibitem{Rule110} M.\ Cook,  {\it Universality in Elementary Cellular Automata}, Complex Systems. 15: 1–40 (2004).

\end{thebibliography}
\end{document}